\newtheorem{theorem}{Theorem}
\newtheorem{corollary}{Corollary}
\newtheorem{lemma}{Lemma}
\newtheorem{mydef}{Definition}
\def\>{\rangle}
\def\<{\langle}
\def\polylog{{\rm polylog}}
\def\poly{{\rm poly}}
\def\O{\mathcal{O}}
\def\R{\mathbb{R}}
\def\C{\boldsymbol{C}}
\def\b{\boldsymbol{b}}
\def\D{\Delta}
\def\y{\boldsymbol{y}}
\begin{document}

\title{Quantum algorithm for finding the negative curvature direction in non-convex optimization}

\author[1]{Kaining Zhang \thanks{kzha3670@uni.sydney.edu.au}}
\author[2]{Min-Hsiu Hsieh \thanks{Min-Hsiu.Hsieh@uts.edu.au}}
\author[1]{Liu Liu}
\author[1]{Dacheng Tao}
\affil[1]{UBTECH Sydney AI Centre and the School of Computer Science,
Faculty of Engineering and Information Technologies, The University of Sydney, Australia
\authorcr kzha3670@uni.sydney.edu.au}
\affil[2]{Centre for Quantum Software and Information, Faculty of Engineering and Information Technology, University of Technology Sydney, Australia}

\renewcommand\Authands{ and }

\maketitle
\begin{abstract}
We present an efficient quantum algorithm aiming to find the negative curvature direction for escaping the saddle point, which is the critical subroutine for many second-order non-convex optimization algorithms. We prove that our algorithm could produce the target state corresponding to the negative curvature direction with query complexity \(\tilde{\O}(\polylog(d) \epsilon^{-1})\), where \(d\) is the dimension of the optimization function. The quantum negative curvature finding algorithm is exponentially faster than any known classical method which takes time at least \(\O(d\epsilon^{-1/2})\). Moreover, we propose an efficient quantum algorithm to achieve the classical read-out of the target state. Our classical read-out algorithm runs exponentially faster on the degree of $d$ than existing counterparts.
\end{abstract}

\section{Introduction}
\label{QNCD_intro}

Algorithms for finding the minima of functions have attracted significant attention due in part to their prevalent applications in machine learning, deep learning and robust statistics; in particular, those with good complexity guarantees that can converge to the local minima. Numerous algorithms have been proposed in recent years for finding points that satisfying 
\begin{align*}
	\left\| {\nabla f\left( \bm{x} \right)} \right\| \le {\epsilon _g}, \text{and} \ {\lambda _{\min }}\left( {{\nabla ^2}f\left( \bm{x} \right)} \right) \ge {-\epsilon _H},
\end{align*}
where $\epsilon_g,\epsilon_H \in (0,1)$.
A recent proposals \cite{nesterov2006cubic,conn2000trust,agarwal2017finding} based on  second-order Newton-type  and first-order methodology have been analyzed from such a perspective. 
However, those methods normally deal with the situations that the iterations may be trapped in the saddle points, since in many cases, such as deep neural networks \cite{dauphin2014identifying, choromanska2015loss}, existence of many saddle points is the main bottleneck.

In general non-convex optimization, there are many proposed algorithms for escaping the saddle points. These algorithms can be divided into the following two categories: the first-order gradient-based algorithms and the second-order Hessian-based algorithms. Generally, second-order algorithms have better iteration complexity than first-order algorithms (see \cite{jin2017escape} for detail). 
However, each iteration in the second-order method involves the computation of the \textbf{negative curvature direction}, namely, the eigenvectors of a Hessian matrix $\bm{H}=\nabla ^2 f(\bm{x})$ with negative eigenvalues. This computation could take time $\O(d^2)$ when SVD is performed on the given Hessian, or $\O(d/\sqrt{\epsilon})$ when Lanczos method is used with Gradient information to approximate the Hessian-vector product.

Quantum algorithms have shown great potential to become faster alternatives than classical algorithms for many kinds of problems in the field of linear algebra, including principal component analysis \cite{lloyd2014quantum}, support-vector machine \cite{rebentrost2014quantum}, singular value decomposition \cite{rebentrost2018quantum}, etc.. 
These works encourage us to develop an efficient quantum algorithm for the Negative Curvature Finding problem, which aims to outperform the best known classical methods.
To begin with, we formally define the negative curvature finding problem as follows.

\noindent\textbf{Negative Curvature Finding (NCF) problem:}
\textit{
Given the function \(f(\bm{x}):\mathbb{R}^{d} \rightarrow \mathbb{R}\) which has \(L\)-Lipschitz continuous gradient, and the corresponding Hessian matrix \(\bm{H} \in \mathbb{R}^{d \times d}\), we aim to build a quantum algorithm that could efficiently provide the unit vector \(\bm{u}\) with the condition: }
\begin{equation}\label{vector_property}
\bm{u}^{T} \bm{H} \bm{u} \leq -\alpha+\epsilon,
\end{equation}
\textit{where \(0<\alpha < L\) and \(0<\epsilon<\alpha\); or make the non-vector statement that with high probability there is no unit vector $\bm{u}$ satisfying the following condition:}
\begin{equation}\label{non-vector}
\bm{u}^{T} \bm{H} \bm{u} < -\alpha.	
\end{equation}

\subsection{Related work}

Optimization methods for non-convex problems can be roughly divided into first-order methods and second-order methods, depending on the order of the derivative to the objective function they used. Generally, in order to find the local minima,  the second-order methods \cite{carmon2018accelerated,agarwal2017finding} are applied to find the effective direction to escape the saddle point. Specifically, finding the Negative Curvature is considered as the subroutine to analyze the characteristic of the saddle point.

\textbf{First-order algorithms}: 
For the non-convex problem, the first-order method (Gradient-based method) can find the stationary point, which could be a global minima, local minima or saddle point. Finding the global minima is an NP-hard problem, and many methods instead are trying to find the local minimum. However, standard analysis of gradient descent cannot distinguish between saddle points and local minima, leaving open the possibility that gradient descent may get stuck at saddle points. Recently Ge \textit{et al.} \cite{ge2015escaping,jin2017escape,jin2019stochastic}
showed that by adding noise at each step, gradient descent can escape all saddle points in a polynomial number of iterations, provided that the objective function satisfies the strict saddle property \cite{ge2015escaping}.
Lee \textit{et al.} \cite{lee2016gradient} proved that under similar conditions, gradient descent with random initialization avoids saddle points even without adding noise. However, each iteration of Gradient-based methods requires $\O(d)$ operations and the iteration complexity is higher than second-order algorithms \cite{jin2017escape}.

\textbf{Second-order algorithms}: Traditionally, second-order Newton-based methods can converge to local minima, which use the Hessian information to distinguish between first-order and second-order stationary points. There are two kinds of methods that make use of Hessian information. 1) Hessian-based: trust-region \cite{conn2000trust} and cubic regularization \cite{nesterov2006cubic} are two methods, in which the sub-problem is to find the decrease direction based on the given Hessian matrix. The calculation of each iteration involves performing SVD on the Hessian matrix, which takes time at least $\O(d^2)$. 2) Hessian-vector-product-based: 
While the subproblem that appears in the cubic-regularized Newton method is expensive to solve exactly, it is possible to consider methods in which such subproblems are solved only approximately by Hessian-free procedures. 
The Hessian-vector-product method uses Lanczos method to calculate the negative curvature direction and uses gradient to approximate the Hessian-vector product \cite{agarwal2016finding, carmon2018accelerated, carmon2016gradient}. The Hessian-vector-product method involves $\O(d\epsilon^{-1/2})$ complexity per iteration.
The advantage of the second-order algorithm is the superior iteration complexity than the first-order algorithm. However, using Hessian information usually increases computation time per iteration.

On the other hand, there are also some proposed quantum algorithms for problems in the related linear algebra field. 
For example, previous quantum PCA algorithm \cite{lloyd2014quantum} presents an efficient way to do the Hamiltonian simulation task in quantum phase estimation. The time complexity to perform the mapping $\sum_j \beta_j |\bm{u}_j\> \rightarrow \sum_j \beta_j |\bm{u}_j\> |\tilde{\lambda}_j\>$ could be bounded in \(\O(\polylog(d)\epsilon^{-3})\). However, the quantum PCA model use the density matrix $\rho=\sum_{i,j=1}^d a_{ij} |i\>\<j|$ to store the information of matrix $X=A^{\dag}A$, which implicitly assumes the condition $\|X\|_F=1$ and $\lambda_{\min}(X)\footnote{$\lambda_{\min}(X)$ means the least eigenvalue of matrix $X$.} \ge 0$.
Another quantum SVD algorithm \cite{rebentrost2018quantum} shows an efficient method to estimate the value \(\lambda_j /d\) with error \(\epsilon\) in time \(\O(\epsilon^{-3})\), where \(\{\lambda_j\}_{j=1}^d\) are eigenvalues of matrix \(\bm{H} \in \R^{d \times d}\)  . However, it would take time \(\O(\poly(d))\) to produce \(\epsilon\)-estimation on eigenvalues by this quantum SVD algorithm. Moreover, these works did not study the classical read-out of the quantum output, which actually takes time at least \(\O(d)\) generally for \(d\)-dimensional state \cite{aaronson2015read}, and could offset the claimed quantum speed-up.

\subsection{Our contribution}

The contribution of this work can be briefly divided into two parts: 1) an efficient quantum algorithm to generate the required quantum state, which corresponds to the negative curvature direction, and 2) an efficient quantum algorithm to obtain the description of the target state $|\bm{u}_t\>=\sum_{i=1}^{r}x_i |\bm{s}_i\>$, where $\{\bm{s}_i\}_{i=1}^{r}$ is an independent vector set selected from columns of Hessian $\bm{H}$ with rank $r$.

\textbf{Negative Curvature Finding}: We develop an efficient quantum algorithm to produce the target state $|\bm{u}_t\>$ (for case (\ref{vector_property})) or make the non-vector statement (for case (\ref{non-vector})). The algorithm contains three subroutines:
\begin{itemize}
\item A quantum algorithm to distinguish different eigenvectors through that the corresponding eigenvalue is positive or negative. The algorithm takes the eigenstate as the input and outputs the statement that the corresponding eigenvalue is positive or negative. Previous quantum SVD \cite{rebentrost2018quantum} could only distinguish different eigenstates through different absolute value of eigenvalues, so we provide this subroutine to deal with this problem.
\end{itemize}
\begin{itemize}
\item A quantum algorithm to label the proper eigenvalue (less than $-\alpha+\epsilon/2$, for case (\ref{vector_property})), or to make the non-vector statement (for case (\ref{non-vector})).
\end{itemize}
\begin{itemize}
\item A quantum algorithm to generate the target vector $\bm{u}_t$ in the quantum state form $|\bm{u}_t\>$. 
\end{itemize}
We provide Theorem \ref{QNCF_Theorem_1} as the main result of this part, which guarantees the time complexity of our Negative Curvature Finding algorithm:
\begin{theorem}
\label{QNCF_Theorem_1}
There exists a quantum algorithm which could solve the Negative Curvature Finding problem in time $\tilde{\O}(\polylog(d) \poly(r)\epsilon^{-1})$, by providing the target state $|\bm{u}_t\>$ (for case (\ref{vector_property})), or making the non-vector statement (for case (\ref{non-vector})).
\end{theorem}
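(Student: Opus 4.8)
The plan is to reduce the NCF problem to an eigenvalue search and then solve it with a phase-estimation pipeline assembled from the three subroutines described above. First I would observe that since $\bm{H}$ is symmetric with spectral decomposition $\bm{H} = \sum_j \lambda_j \bm{u}_j \bm{u}_j^T$, for any unit vector $\bm{u} = \sum_j c_j \bm{u}_j$ one has $\bm{u}^T \bm{H} \bm{u} = \sum_j \lambda_j |c_j|^2 \ge \lambda_{\min}(\bm{H})$, with equality attained at the eigenvector of $\lambda_{\min}$. Hence condition (\ref{vector_property}) is satisfied by any eigenvector whose eigenvalue is at most $-\alpha + \epsilon$, while condition (\ref{non-vector}) fails for every unit vector precisely when $\lambda_{\min}(\bm{H}) \ge -\alpha$. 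The whole task therefore collapses to either producing an eigenstate with eigenvalue $\le -\alpha + \epsilon$ or certifying that no eigenvalue falls below $-\alpha$.

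Next I would build the quantum pipeline. Using the oracle access to $\bm{H}$, perform Hamiltonian simulation of $e^{i\bm{H}t}$ and run quantum phase estimation on an initial state supported on the column space of $\bm{H}$ (which has rank $r$), producing $\sum_j \beta_j |\bm{u}_j\rangle |\tilde\lambda_j\rangle$ with each estimate $\tilde\lambda_j$ accurate to $\epsilon/2$; by the standard phase-estimation tradeoff this costs $\O(\epsilon^{-1})$ Hamiltonian-simulation steps, each of cost $\polylog(d)$. I would then invoke the sign-detection subroutine to split positive from negative eigenvalues, followed by a reversible comparison circuit that flags exactly those branches with $\tilde\lambda_j \le -\alpha + \epsilon/2$; the $\epsilon/2$ estimation error then guarantees the true eigenvalue obeys the $-\alpha + \epsilon$ bound. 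Finally, amplitude amplification on the flag qubit either boosts a flagged branch to near-unit amplitude---upon measurement this collapses to a target eigenstate $|\bm{u}_t\rangle = \sum_{i=1}^r x_i |\bm{s}_i\rangle$---or, after $\poly(r)$ rounds without success, certifies with high probability the absence of any sufficiently negative eigenvalue, yielding the non-vector statement.

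The complexity bookkeeping then combines the $\polylog(d)$ cost per simulation step, the $\O(\epsilon^{-1})$ phase-estimation depth, and the $\poly(r)$ overhead arising from working inside the rank-$r$ column space and from the amplitude-amplification rounds, giving the claimed $\tilde{\O}(\polylog(d)\,\poly(r)\,\epsilon^{-1})$ query complexity.

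I expect the sign-detection step to be the main obstacle. Conventional quantum SVD and PCA routines reveal only $|\lambda_j|$, so distinguishing genuinely negative eigenvalues from positive ones of equal magnitude requires a dedicated mechanism---for instance, probing the direction of phase accumulation under $e^{i\bm{H}t}$ versus $e^{-i\bm{H}t}$, or estimating eigenvalues of a shifted operator $\bm{H} + cI$. Getting this sign information to propagate cleanly through the comparison and amplification stages, while keeping the estimation error controlled so that the gap between the $-\alpha+\epsilon/2$ labeling threshold and the $-\alpha+\epsilon$ guarantee is respected, is the delicate part of the argument.
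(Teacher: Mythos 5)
Your proposal has the right high-level architecture (spectral reduction, eigenvalue estimation in superposition, sign handling, thresholding, then boosting the target branch), but it contains two genuine gaps, both of which are precisely the obstacles the paper's proof is organized around.

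First, the input-state problem. You run phase estimation on ``an initial state supported on the column space of $\bm{H}$'' and then claim $\poly(r)$ rounds of amplitude amplification suffice. This does not follow: membership in the column space gives no lower bound whatsoever on the overlap between your initial state and the target eigenvector $\bm{u}_t$, so the number of amplification rounds (which scales as the inverse square root of that overlap) is uncontrolled and could be arbitrarily large. The paper closes this gap by preparing a \emph{specific} state with the oracles $U_H, V_H$, namely $\frac{1}{\|\bm{H}\|_F}\sum_{i,j} h_{ij}|i\>|j\> = \frac{1}{\|\bm{H}\|_F}\sum_{j=1}^{r}\lambda_j |\bm{u}_j\>|\bm{u}_j\>$, whose weight on the target branch is exactly $\lambda_t^2/\|\bm{H}\|_F^2$; combining $|\lambda_t|\ge \alpha-\epsilon/2 > \alpha/2$ with Lemma \ref{bounded_norm} ($\|\bm{H}\|_F \le \sqrt{r}L$) gives the lower bound $\alpha^2/(4rL^2)$, and this is the \emph{only} place the $\poly(r)$ in the theorem comes from. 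Without this (or an equivalent) preparation-plus-overlap argument, your complexity claim is unsupported.

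Second, the cost of the estimation primitive. You assume Hamiltonian simulation of $e^{i\bm{H}t}$ at $\polylog(d)$ cost per step, given only the data-structure oracles, for a dense non-sparse $\bm{H}$. That is not available off the shelf in the paper's access model: density-matrix-exponentiation (quantum PCA) style simulation carries an $\epsilon^{-3}$ dependence and requires positive-semidefinite, normalized density-matrix assumptions, which the paper explicitly rejects in its related-work discussion. The paper instead uses the Kerenidis--Prakash SVE routine (Theorem \ref{quantum_SVD}), which runs in $\O(\polylog(d)\epsilon^{-1})$ with an extra $\|\bm{H}\|_F$ factor but only estimates $|\lambda_j|$ --- and this is why the separate Hadamard-test-based PNED subroutine (Algorithm \ref{negative_eigenvalue}), with its Hoeffding-bound sampling analysis, is needed to recover the sign. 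Your observation that genuine phase estimation would give signs for free is correct, but it presupposes exactly the simulation capability that is in question; if you instead fall back to SVE, you must supply a sign-detection argument of the paper's PNED type, including its $\O(\|\bm{H}\|_F^2/\alpha^2 \cdot \log(1/\delta))$ sample count. A minor point in your favor: the paper boosts the target branch by plain repeated post-selection ($\O(1/P_t)$ repetitions) rather than amplitude amplification, so your amplification idea would, if the two gaps above were filled, give a quadratic improvement in that factor.
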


\textbf{Classical Read-out}: 
The classical read-out problem is one bottleneck for many quantum machine learning algorithms whose results are quantum states. Generally, the read-out of a $d$-dimensional quantum state takes time at least \(\O(d)\) \cite{aaronson2015read}, and could offset the claimed quantum speed-up.
In order to solve this dilemma, we develop an efficient quantum algorithm for the classical read-out of the target state. We notice that the target vector \(\bm{u}_t\) can be written as the linear combination of \(\{\bm{s}_i\}_{i=1}^{r}\), where \(\{\bm{s}_i\}_{i=1}^{r}\) is an independent vector group sampled from column vectors \(\{\bm{h}_j\}_{j=1}^{d}\). Thus one could obtain the coordinate $|\bm{u}_t\>=\sum_{i=1}^{r} x_i |\bm{s}_i\>$ by solving the $r$-dimensional linear system $\bm{C}\bm{x}=\bm{b}$, in which $\bm{C}=\{c_{ij}\}_{ij}=\{\<\bm{s}_i|\bm{s}_j\>\}_{ij}$ and $b_j=\<\bm{s}_j|\bm{u}_t\>$. The algorithm suits the case when the result quantum state lies in the span of several given states, and may give rise to independent interest.

One advantage of generating the form $|\bm{u}_t\>=\sum_{i=1}^{r} x_i |\bm{s}_i\>$ is that the updating operation for the $l$-th iteration $\bm{z}^{(l+1)}= \bm{z}^{(l)} +\eta \bm{u}_t$ in general non-convex algorithms could be implemented in quantum form $|\bm{z}^{(l+1)}\> \propto |\bm{z}^{(l)}\> +\eta \sum_{i=1}^{r} x_i |\bm{s}_i\>$ by Linear-Sum-of-States method \cite{shao2018linear} efficiently, which may inspire efficient quantum counterparts for these non-convex algorithms. Note that we could not perform the operation $|\bm{z}^{(l+1)}\> \propto |\bm{z}^{(l)}\> +\eta |\bm{u}_t\>$ directly since the target state $|\bm{u}_t\>$ is generated by post-selection instead of standard unitary operations. Our state read-out algorithm contains two subroutines named as the Complete Basis Selection and the State Overlap Estimation. The main results about the Complete Basis Selection and the Classical Read-out are briefly summarized as following Theorems:

\begin{theorem}\label{Basis_theorem_1}
There exists a quantum algorithm which takes time $\tilde{\O}(\poly(r) \epsilon^{-2} r^c)$ to find an index set $\{g(i)\}_{i=1}^{r}$, where $r$ is the rank of $\bm{H}$, $c=2\log\frac{4r\|\bm{H}\|_F}{\epsilon}$ and $\{g(i)\}_{i=1}^{r}$ forms a complete basis $\{|\bm{h}_{g(i)}\>\}_{i=1}^{r}$ with probability at least 3/4.	
\end{theorem}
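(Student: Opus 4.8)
The plan is to build the basis greedily, appending one column at a time while maintaining that the selected columns are quantitatively linearly independent. Write the currently selected indices as $S_k=\{g(1),\dots,g(k)\}$ and let $P_k$ denote the orthogonal projector onto $\mathrm{span}\{|\bm{h}_{g(1)}\>,\dots,|\bm{h}_{g(k)}\>\}$. I would call a candidate column $\bm{h}_j$ \emph{admissible} if its residual $\|(I-P_k)\bm{h}_j\|$ exceeds a threshold $\tau$ that I tie to $\epsilon/(r\|\bm{H}\|_F)$; admitting only such columns guarantees that the final Gram matrix $\bm{C}=[\<\bm{h}_{g(i)}|\bm{h}_{g(j)}\>]$ is invertible and, more importantly, well enough conditioned for the subsequent linear solve in the read-out stage. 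The algorithm repeats for $k=1,\dots,r$: locate an admissible index, append it, and stop once $r$ indices are collected. Because $\bm{H}$ has rank $r$, while $k<r$ the projected matrix $(I-P_k)\bm{H}$ is nonzero, so at least one admissible column must exist; the quantitative version of this — a lower bound on the total residual mass $\sum_j\|(I-P_k)\bm{h}_j\|^2=\|(I-P_k)\bm{H}\|_F^2$ while the span is rank-deficient — is what certifies that the greedy step never gets stuck.

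The only quantum ingredient needed per step is residual estimation, which I would reduce to the State Overlap Estimation subroutine. For a candidate $|\bm{h}_j\>$ I compute $P_k\bm{h}_j$ by solving the $k\times k$ system with the Gram matrix of the already-selected columns (whose entries are estimated overlaps), and then estimate $\|(I-P_k)\bm{h}_j\|^2=\|\bm{h}_j\|^2-\<\bm{h}_j|P_k|\bm{h}_j\>$ to additive accuracy of order $\tau^2$; each overlap estimate costs $\tilde{\O}(\epsilon^{-2})$, and there are only $\poly(r)$ of them since all the linear algebra lives in the $r$-dimensional column space. To find an admissible index without paying $\O(d)$, I would not enumerate columns but instead sample an index $j$ by measuring the index register of $\tfrac{1}{\|\bm{H}\|_F}\sum_j\|\bm{h}_j\|\,|j\>|\bm{h}_j\>$, test it for admissibility with the estimator above, and keep it if it passes. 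This keeps the whole routine independent of $d$ up to the cost of preparing the column states.

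The crux of the analysis — and the step I expect to be the main obstacle — is bounding the number of samples required so that all $r$ greedy steps succeed with probability at least $3/4$. Two effects must be controlled simultaneously: first, the probability that a single squared-norm sample lands on an admissible column can be as small as $(\tau/\|\bm{H}\|_F)^2$, so the per-step sample count scales inversely with this gap; second, the overlap estimates feeding each step are noisy, and these errors propagate into $P_k$ and hence into every later admissibility test, so the effective threshold must be sharpened as $k$ grows to keep the accumulated perturbation from corrupting the basis. Carrying out a union bound over the $r$ sequential stages, with the per-stage confidence boosted to absorb this compounding, is what produces the quasi-polynomial factor $r^{c}$ with $c=2\log\frac{4r\|\bm{H}\|_F}{\epsilon}$; combined with the $\tilde{\O}(\epsilon^{-2})$ cost of each overlap estimate and the $\poly(r)$ linear-algebra overhead, this yields the claimed $\tilde{\O}(\poly(r)\,\epsilon^{-2}\,r^{c})$ running time. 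The delicate point is showing that $\tau$ can be chosen large enough that admissible columns remain plentiful yet small enough that the resulting Gram matrix stays usable, since these two requirements pull in opposite directions and their reconciliation is exactly what fixes the dependence of $c$ on $\|\bm{H}\|_F/\epsilon$.
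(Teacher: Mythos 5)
Your route is genuinely different from the paper's: the paper never tests admissibility classically, but instead coherently implements the reflections $R_l=\prod_{m=1}^{l}(I-2|\bm{t}_m\>\<\bm{t}_m|)$ via Linear-Sum-of-States, applies $\frac{R_l+I}{2}=I-\sum_{m=1}^{l}|\bm{t}_m\>\<\bm{t}_m|$ inside the superposition $\frac{1}{\|\bm{H}\|_F}\sum_j\|\bm{h}_j\|\,|j\>|\bm{h}_j\>$, and post-selects, so that a column lying in the current span has (up to implementation error) zero amplitude and the measured index is automatically new. Your replacement of this by raw sampling plus a classical residual test breaks at its central claim. You test the \emph{unnormalized} residual $\|(I-P_k)\bm{h}_j\|$ against a $d$-independent threshold $\tau\sim\epsilon/(r\|\bm{H}\|_F)$, but the residual-mass bound $\sum_j\|(I-P_k)\bm{h}_j\|^2=\|(I-P_k)\bm{H}\|_F^2\ge(r-k)\epsilon^2/4$ does not certify that any admissible column exists: by pigeonhole it only gives $\max_j\|(I-P_k)\bm{h}_j\|\ge\epsilon\sqrt{r-k}/(2\sqrt{d})$. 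Concretely, let $\bm{H}=\lambda_1\bm{e}_1\bm{e}_1^{T}+\lambda_2\bm{v}\bm{v}^{T}$ with $\bm{v}$ uniform on coordinates $2,\dots,d$, $|\lambda_1|>|\lambda_2|>\epsilon/2$ and $|\lambda_1|-|\lambda_2|>\epsilon$. With probability $\lambda_1^2/\|\bm{H}\|_F^2>1/2$ your first draw selects $g(1)=1$, after which \emph{every} remaining column has unnormalized residual $|\lambda_2|/\sqrt{d-1}$, which falls below any fixed $d$-independent $\tau$ once $d$ is large: no column is admissible and the greedy loop stalls with the basis incomplete. The repair is to define admissibility on the normalized states, $\|(I-P_k)|\bm{h}_j\>\|>\tau$ (in this example that quantity equals $1$ for every remaining column); then the weighted identity $\sum_j\frac{\|\bm{h}_j\|^2}{\|\bm{H}\|_F^2}\|(I-P_k)|\bm{h}_j\>\|^2\ge\frac{(r-k)\epsilon^2}{4\|\bm{H}\|_F^2}$, combined with $\|(I-P_k)|\bm{h}_j\>\|\le1$, shows that your norm-squared sampler hits an admissible column with probability $\Omega(\epsilon^2/\|\bm{H}\|_F^2)$ per draw --- which is exactly the weighting the paper's post-selection realizes automatically. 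Your estimator formula $\|(I-P_k)\bm{h}_j\|^2=\|\bm{h}_j\|^2-\<\bm{h}_j|P_k|\bm{h}_j\>$ conflates the two normalizations, and that is precisely where this slipped in.

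The second gap is in the complexity accounting. A union bound over $r$ sequential stages with boosted per-stage confidence contributes only logarithmic factors; it cannot generate the quasi-polynomial term $r^{c}$ with $c=2\log\frac{4r\|\bm{H}\|_F}{\epsilon}$. In the paper that factor has a concrete and entirely different origin: each reflection $I-2|\bm{t}_m\>\<\bm{t}_m|$ is synthesized by the Linear-Sum-of-States theorem to accuracy $\epsilon_3=\frac{\epsilon^2}{8(r-1)\|\bm{H}\|_F^2}$, at cost $\O\bigl(T_H(m+1)^{\log(2(m+1)/\epsilon_3)}\bigr)\le\O\bigl(T_H\,r^{2\log(4r\|\bm{H}\|_F/\epsilon)}\bigr)$ per use. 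Your scheme uses no coherent reflections, so this term should not appear at all; what replaces it is the cost of the classical admissibility tests, where $P_k=\sum_{i,j}|\bm{s}_i\>(\C_k^{-1})_{ij}\<\bm{s}_j|$ is evaluated from estimated overlaps and the required overlap precision --- hence the inverse-square sampling cost --- scales with $\poly(r)\,\|\C_k^{-1}\|^2$. This conditioning question is exactly what you defer as ``the delicate point,'' but it is not a side issue: a chain of unit vectors each at distance exactly $\tau$ from the span of its predecessors can have Gram matrix with $\|\C_k^{-1}\|\sim\tau^{-2(k-1)}$, exponential in $k$, so without a quantitative argument relating your threshold to the conditioning of the selected basis, the proposal establishes no runtime bound at all, let alone the claimed $\tilde{\O}(\poly(r)\,\epsilon^{-2}\,r^{c})$.
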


\begin{theorem}
\label{solution_1}
The classical description of the target state ${|\bm{u}}_t\>=\sum_{i=1}^{r} {x}_i |\bm{s}_i\>$ could be presented in time \(\tilde{\O}(\polylog(d) \poly(r) \epsilon^{-5})\) with error bounds in \(\epsilon/2\), when the complete basis set $\{\bm{s}_j\}_{j=1}^{r}$ is given. 
$\C$ is the $r\times r$ Gram matrix defined as $\bm{C}=\{c_{ij}\}_{ij}=\{\<\bm{s}_i|\bm{s}_j\>\}_{ij}$.
\end{theorem}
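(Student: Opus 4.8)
The plan is to reduce the read-out task to solving the small $r \times r$ linear system $\bm{C}\bm{x} = \bm{b}$ flagged in the contribution section, and then to control how the quantum estimation errors propagate through that solve. First I would record the exact linear-algebra identity that makes the reduction work. Since $\{\bm{s}_i\}_{i=1}^{r}$ is a basis for the subspace containing $|\bm{u}_t\rangle$, writing $|\bm{u}_t\rangle = \sum_i x_i |\bm{s}_i\rangle$ and pairing with each $\langle \bm{s}_j|$ gives $\langle \bm{s}_j | \bm{u}_t\rangle = \sum_i \langle \bm{s}_j | \bm{s}_i\rangle\, x_i$, i.e. $\bm{b} = \bm{C}\bm{x}$ with $b_j = \langle \bm{s}_j | \bm{u}_t\rangle$ and $C_{ji} = \langle \bm{s}_j | \bm{s}_i\rangle$. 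Because the $\bm{s}_i$ are linearly independent, $\bm{C}$ is Hermitian positive definite and hence invertible, so $\bm{x} = \bm{C}^{-1}\bm{b}$ is well defined and is exactly the vector of coefficients we want to output.

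Next I would obtain the data of this system by quantum estimation. Using the State Overlap Estimation subroutine, I would estimate each of the $\O(r^2)$ Gram entries $C_{ji} = \langle \bm{s}_j|\bm{s}_i\rangle$ and each of the $r$ entries $b_j = \langle \bm{s}_j|\bm{u}_t\rangle$ to a common additive precision $\delta$. Here the states $|\bm{s}_i\rangle$ are prepared from the columns of $\bm{H}$ in $\polylog(d)$ time, while each copy of $|\bm{u}_t\rangle$ is produced by the Negative Curvature Finding routine of Theorem \ref{QNCF_Theorem_1}. With the estimates $\tilde{\bm{C}}$ and $\tilde{\bm{b}}$ in hand, the system is only $r \times r$, so I would solve $\tilde{\bm{C}}\tilde{\bm{x}} = \tilde{\bm{b}}$ classically in $\poly(r)$ (e.g. $\O(r^3)$) time and output $\tilde{\bm{x}}$ as the classical description of the target state.

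The heart of the argument, and the main obstacle, is the end-to-end error analysis that fixes $\delta$ and thereby produces the stated $\epsilon^{-5}$ and $\poly(r)$ factors. I would invoke standard perturbation theory for linear systems, bounding the relative solution error by $\O(\kappa(\bm{C}))$ times the relative perturbations of $\bm{C}$ and $\bm{b}$, where $\kappa(\bm{C}) = \lambda_{\max}(\bm{C})/\lambda_{\min}(\bm{C})$. An entrywise overlap error $\delta$ contributes a spectral-norm perturbation of at most $\O(r\delta)$ to $\bm{C}$ and $\O(\sqrt{r}\,\delta)$ to $\bm{b}$, so to guarantee $\|\tilde{\bm{x}} - \bm{x}\| \le \epsilon/2$ it suffices to take $\delta = \O(\epsilon/(\kappa(\bm{C})\poly(r)))$. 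The crucial and delicate quantitative input is a lower bound on $\lambda_{\min}(\bm{C})$: because the selected basis vectors may be only weakly linearly independent, the relevant separation scales with $\epsilon$, forcing $\kappa(\bm{C}) = \O(1/\epsilon)$ and hence $\delta = \O(\epsilon^2/\poly(r))$. Finally, estimating each overlap to precision $\delta$ by a swap/Hadamard test costs $\O(1/\delta^2)$ samples, and each sample consumes one copy of $|\bm{u}_t\rangle$ produced at cost $\epsilon^{-1}$; multiplying $1/\delta^2 = \O(\poly(r)\epsilon^{-4})$ by this $\epsilon^{-1}$ preparation cost and summing over the $\O(r^2)$ entries yields the claimed $\tilde{\O}(\polylog(d)\poly(r)\epsilon^{-5})$ running time. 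The remaining care is to ensure the $\O(r^2)$ entry estimates are simultaneously accurate, which a union bound over the subroutine's success probability handles at the cost of additional $\polylog$ factors.
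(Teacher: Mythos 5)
Your reduction to the $r\times r$ system $\C\bm{x}=\bm{b}$, the classical $\O(r^3)$ solve, and the perturbation analysis ($\|\Delta\C\|\le r\delta$, $\|\Delta\b\|\le\sqrt{r}\,\delta$, solution error controlled by $\|\C^{-1}\|$) follow the same skeleton as the paper's Algorithm \ref{coordinate} and Theorem \ref{84y564}. The genuine gap is in how you cost the estimation of the entries $b_j=\<\bm{s}_j|\bm{u}_t\>$, which is exactly where the paper has to do real work. Because $|\bm{u}_t\>$ is produced by post-selection (a measurement on the eigenvalue register), there is no unitary that prepares it, hence no controlled preparation, and a Hadamard-type test returning the \emph{signed} overlap at cost $\O(1/\delta^2)$ --- which your accounting assumes --- is not available; the paper states this infeasibility explicitly. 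Only the swap test on independent copies can be used, and it (i) returns $|\<\bm{u}_t|\bm{s}_i\>|^2$, so a $\delta$-accurate estimate of $|b_i|$ requires a $\delta^2$-accurate estimate of $b_i^2$, i.e.\ $\O(1/\delta^4)$ samples, and (ii) loses the sign of $b_i$ altogether, which the paper recovers by a separate comparison of $|\<\bm{u}_t|\psi_+\>|^2$ and $|\<\bm{u}_t|\psi_-\>|^2$ for reference states $|\psi_\pm\>\propto|\bm{h}_k\>\pm|\bm{s}_i\>$. In the paper the exponent five is precisely this quartic swap-test cost at precision $\epsilon_2=\Theta\bigl(\epsilon/(r\|\C^{-1}\|)\bigr)$ multiplied by the $\O(\epsilon^{-1})$ cost per copy of $|\bm{u}_t\>$ from Theorem \ref{state_finding}; your route misses both the quartic cost and the sign problem.

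You arrive at the same numerical exponent only through an unsupported claim: that the selected basis is ``weakly independent'' so that $\kappa(\C)=\O(1/\epsilon)$, forcing $\delta=\O(\epsilon^2)$ and thereby inflating your per-entry sample count to $\epsilon^{-4}$. Nothing in the construction ties $\lambda_{\min}(\C)$ to the separation parameter $\epsilon$: the Gram matrix of the selected normalized columns can be well- or ill-conditioned independently of $\epsilon$, and the paper makes no such claim --- it keeps $\|\C^{-1}\|$ as an explicit parameter (choosing $\epsilon_1=\epsilon/(6r^2\|\C^{-1}\|^2)$ and $\epsilon_2=\epsilon/(6r\|\C^{-1}\|)$ in Theorem \ref{84y564}) and absorbs it into the $\poly(r)$ factor of the final statement. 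So your arithmetic lands on $\epsilon^{-5}$ through two cancelling errors: an unjustified $1/\epsilon$ condition-number bound that tightens the precision, and an unavailable $1/\delta^2$ signed-overlap test that loosens the sample cost. A minor further point: the theorem's error target is $\|\tilde{\bm{u}}_t-\bm{u}_t\|=\sqrt{\Delta\bm{x}^{T}\C\,\Delta\bm{x}}\le\epsilon/2$ rather than $\|\Delta\bm{x}\|\le\epsilon/2$; the paper therefore aims for $\|\Delta\bm{x}\|\le\epsilon/(2\sqrt{r})$ and pays the extra factor $\|\C\|^{1/2}\le\sqrt{r}$, which your write-up omits.
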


The rest of this paper is organized as follows. Some preliminaries about quantum information and other useful notations and definitions are introduced in Section \ref{QNCD_preliminary}. 
In Section \ref{QNCD_state_finding}, we develop an quantum algorithm to solve the \textbf{NCF problem}. 
In Section \ref{QNCD_state_read-out}, we develop an quantum algorithm which aims to read out the target state. We summarize our results and contributions in Section~\ref{QNCD_conclu}.

\section{Preliminary}
\label{QNCD_preliminary}

In this section we present some preliminary concepts, which play vitally important roles throughout this paper. 
Some basic quantum knowledge along with useful notations and definitions will be introduced in Section \ref{QNCD_pre_note}. 
Some quantum technics such as quantum oracle models and quantum singular value estimation algorithm will be introduced in Section \ref{QNCD_pre_technic}.

\subsection{Notations and definitions}
\label{QNCD_pre_note}

In this section, we introduce some useful notations and definitions. Since the quantum notations are critically important in the following sections, we would briefly introduce some basic quantum information knowledge first. Then we introduce some other useful notations and definitions.

The dirac notation is a standard notation in quantum mechanics to describe the quantum states. 
The form \(|\bm{x}\>\) is the state which corresponds to the vector \(\bm{x}\), and the form \(\<\bm{y}|\) is the state which corresponds to the vector \(\bm{y}^{T}\). 
The notation \(\< \bm{y} | \bm{x} \>\) denotes the inner product \(\bm{y}^{T} \bm{x}\). The notation \(|\bm{y} \> \< \bm{x}|\) denotes the matrix \(\bm{y} \bm{x}^{T}\). 
Quantum state is unitary, which means \(\||\bm{x} \>\|^2 = \< \bm{x} | \bm{x} \> = 1\). 
Thus for vector \(\bm{x} \in \R^d\), the state \(|\bm{x}\>\) is defined as \(\frac{1}{\|\bm{x}\|} \sum_{j=1}^d x_j |j\>\), where \(x_j\) is the \(j\)-th component of vector \(\bm{x}\) and \(\{|j\>\}_{j=1}^{d}\) is the state basis which acts like \(\{\bm{e}_j\}_{j=1}^d\) in classical case. 
One significant difference between the classical vector \(\bm{x}\) and the quantum state \(|\bm{x}\>\) is that we could not get the detail of \(x_j\) with \(\O(1)\) queries to state \(|\bm{x}\>\). 
The only way to generate classical information from \(|x\>\) is by measurement. The measurement operation could be viewed as the biased coin experiment. 
For example, considering the state \(|x\> = \sum_{j=1}^{d} x_j/\|\bm{x}\||{j}\>\), the measurement of \(|x\>\) on the basis  \(\{|j\>\}_{j=1}^{d}\) could randomly produce different index \(j\) with probability \(x_j^2/\|\bm{x}\|^2\).

We use \([n]\) to denote the set \(\{ 1,2,\cdots,n \}\).
We denote the norm \(\|\cdot\|\) as the \(\|\cdot\|_2\) norm for vectors, if there is no more explanation. \(\|A\|_F = (\sum_{i=1}^{m}\sum_{j=1}^{n} a_{ij}^2)^{1/2}\) is the Frobenius norm of matrix \(A \in \R^{m \times n}\). The lowercase form \(\bm{h}_i\) is defined as the \(i\)-th column vector of matrix \(\bm{H} \in \R^{d \times d}\). \(x_{i}\) is defined as the \(i\)-th component of vector \(\bm{x}\).
The tensor product of two matrix \(\bm{A} \in \R^{m \times n}\) and \(\bm{B} \in \R^{p \times q}\) is defined as \(\bm{C}=\bm{A} \otimes \bm{B}\).
The tensor product operation could be performed between vectors, since vector is one special kind of matrix. 
The tensor product could be defined between quantum states $|\bm{x}_1\>$ and $|\bm{x}_2\>$, for example, \(|\bm{x}\> =|\bm{x}_1\> \otimes |\bm{x}_2\>\). 
The form \(|\bm{x}_1\> \otimes |\bm{x}_2\>\) could also be written as \(|\bm{x}_1\> |\bm{x}_2\>\). 

We present definitions of smoothness and $\gamma$-separation here.
\begin{mydef}{\rm (smoothness)}
A function \(f: \R^d \rightarrow \R\) is \(L\)-smooth if it has \(L\)-Lipschitz continuous gradient, that is \(\|\nabla f(\bm{x}) - \nabla f(\bm{y})\| \leq L \|\bm{x}-\bm{y}\|\), \(\forall \bm{x},\bm{y} \in \mathcal{X}\), where \(\mathcal{X}\) is the domain of \(f(\bm{x})\).
\end{mydef}

\begin{mydef}{\rm (\(\gamma\)-separation)}
The set \(G=\{a_1,a_2,\cdots, a_n\}\) is said to be \(\gamma\)-separated if \(|a_i-a_j|> \gamma, \forall i,j \in [n] \) and \(i \neq j\).
\end{mydef}

Based on these definitions, we assume that the Hessian matrix $\bm{H}$ in this article has two properties:


\begin{enumerate}
\label{property}
\item  $\bm{H} \in \R^{d \times d}$ is a $r$-rank Hessian matrix which is derived from the \(d\)-dimensional optimization problem \(\min_{\bm{x} \in {\R}^{d}} f(\bm{x})\) in which the objective function \(f\) has \(L\)-Lipschitz continuous gradient;
\item  The absolute value of $\bm{H}$'s non-zero eigenvalue is \(\epsilon\)-separated.
\end{enumerate}

The first property is directly derived from the assumption of previous classical non-convex optimization method \cite{carmon2018accelerated}, and the low-rank Hessian case has been observed in neural networks\cite{gur2018gradient}.
The second property is assumed such that we could distinguish different eigenvalues by their absolute value. 
We further assume that the Hessian matrix $\bm{H}$ has the eigen-decomposition $\bm{H}=\sum_{j=1}^{r} \lambda_j \bm{u}_j \bm{u}_j^{T}$, for the convenience of following discussion.

\subsection{Techniques}
\label{QNCD_pre_technic}
The motivation idea behind our approach is to perform the quantum singular value estimation model and then generate eigen-states by the post-selection on the output state. 
Here we introduce some techniques including oracle models and critical conclusions in previous work. 

\subsubsection{Quantum Oracle Models\cite{kerenidis2016quantum}}
\label{QNCD_pre_technic_oracle}

For the whole paper, we assume the existence of following quantum oracles, and discuss the query complexity of our algorithms to these oracles.
Given Hessian \(\bm{H} \in \R^{d \times d}\), we assume that $\bm{H}$ is stored in a classical data structure such that the following quantum oracles could be implemented:
\begin{align}
\label{QNCD: equation:oracle_1}
U_H &:|i\> |0\> \stackrel{}\rightarrow |i\> |\bm{h}_i\> =\frac{1}{\|\bm{h}_i\|} \sum_{j=1}^{d} h_{ij} |i\> |j\>, \forall i \in [d], \\
\label{QNCD: equation:oracle_2}
V_H &: |0\> |j\> \stackrel{}\rightarrow |\tilde{\bm{h}}\> |j\> = \frac{1}{\|\bm{H}\|_F} \sum_{i=1}^{d} \|\bm{h}_i\| |i\> |j\>, \forall j \in [d],
\end{align}
where \(\tilde{\bm{h}}\) stands for the \(d\)-dimensional vector whose \(i\)-th component is \(\|\bm{h}_i\|/\|\bm{H}\|_F\).

The required data structure has a binary tree form. The sign and square value for each entry are stored in different leaves and the value stored in each parent node is the sum of its children's value. A detail description about this data structure can be referred to \cite{kerenidis2016quantum}. Denote $T_H$ as the time complexity of these oracles.

\subsubsection{Quantum Singular Value Estimation (SVE)}

Given matrix $\bm{A} \in \R^{m \times n}$ which has the singular value decomposition $\bm{A}=\sum_{j=1}^{\min(m,n)} \sigma_j \bm{u}_j \bm{v}_j$, 
previous work \cite{kerenidis2016quantum} provided a quantum singular value estimation algorithm, which could be used for estimating singular value or generating eigenstate. Here we briefly introduce their conclusion about the time complexity of their algorithm:

\begin{theorem}\cite{kerenidis2016quantum}
\label{quantum_SVD}
Suppose matrix \(\bm{A} \in \R^{m \times n}\) is stored in the data structure in Section \ref{QNCD_pre_technic_oracle}. Let $\epsilon$ be the precision parameter. There is an algorithm which could perform the mapping \(\sum_j \beta_j |\bm{v}_j\> \rightarrow \sum_j \beta_j |\bm{v}_j\> |\hat{\sigma}_j \>\) with query complexity \(\O(\polylog(n)\epsilon^{-1})\), where \(\hat{\sigma}_j \in \sigma_j\pm \epsilon \|\bm{A}\|_F\) with probability at least \(1-1/\poly(n)\).
\end{theorem}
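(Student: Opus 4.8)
The plan is to reconstruct the singular value estimation procedure of \cite{kerenidis2016quantum}, whose engine is a pair of reflections built from the data-structure oracles. First I would use the oracles of Section \ref{QNCD_pre_technic_oracle} (applied to $\bm{A}$) to realize two isometries: $\mathcal{P}:|i\> \mapsto |i\>|\bm{a}_i\>$, which appends the normalized $i$-th row of $\bm{A}$, and $\mathcal{Q}:|j\> \mapsto |\tilde{\bm{a}}\>|j\>$, which prepends the normalized vector of row norms $|\tilde{\bm{a}}\> = \frac{1}{\|\bm{A}\|_F}\sum_i \|\bm{a}_i\| |i\>$. Both map into the joint space $\R^m \otimes \R^n$ and are implementable with $\O(1)$ oracle calls, i.e. in time $\O(\polylog(mn))$. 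A direct computation of $\mathcal{P}^\dagger \mathcal{Q}$ on a basis state $|j\>$ yields $\frac{1}{\|\bm{A}\|_F}\sum_i a_{ij}|i\>$, so the key algebraic identity I would establish is the factorization $\mathcal{P}^\dagger \mathcal{Q} = \bm{A}/\|\bm{A}\|_F$.

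Next I would form the unitary $W = (2\mathcal{P}\mathcal{P}^\dagger - I)(2\mathcal{Q}\mathcal{Q}^\dagger - I)$, the product of the reflections about the images of the two isometries. The core step is the spectral analysis: since $\mathcal{P},\mathcal{Q}$ are isometries, $\mathcal{P}\mathcal{P}^\dagger$ and $\mathcal{Q}\mathcal{Q}^\dagger$ are orthogonal projectors, and by Jordan's lemma the joint space decomposes into $W$-invariant subspaces of dimension at most two. On the two-dimensional block associated with the $j$-th singular direction, $W$ acts as a rotation by angle $2\theta_j$, where the principal angle $\theta_j$ between the two images satisfies $\cos\theta_j = \sigma_j/\|\bm{A}\|_F$ --- exactly the singular values of the factorization $\mathcal{P}^\dagger\mathcal{Q}$. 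Thus the eigenvalues $e^{\pm 2i\theta_j}$ of $W$ encode $\sigma_j$ through their phase.

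With this correspondence in hand, the algorithm proceeds as follows. I would embed the input $\sum_j \beta_j |\bm{v}_j\>$ into the joint space via $\mathcal{Q}$ (so that each $|\bm{v}_j\>$ lands in the invariant block of $\theta_j$), run quantum phase estimation on $W$ to write an estimate $\hat{\theta}_j$ of the rotation phase into an ancilla register, then apply the classical post-processing map $\hat\theta_j \mapsto \hat\sigma_j = \|\bm{A}\|_F \cos\hat\theta_j$ coherently, and finally uncompute the phase register and invert $\mathcal{Q}$ to recover $\sum_j \beta_j |\bm{v}_j\> |\hat\sigma_j\>$.

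For the error and complexity accounting I would use that phase estimation of $W$ to additive phase precision $\delta$ costs $\O(1/\delta)$ applications of $W$, each of which is $\O(1)$ oracle calls; and that because $\sigma_j = \|\bm{A}\|_F \cos\theta_j$ has Lipschitz constant $\|\bm{A}\|_F$ in $\theta_j$ (as $|\tfrac{d}{d\theta}\cos\theta| \le 1$), estimating $\theta_j$ to precision $\O(\epsilon)$ already yields $|\hat\sigma_j - \sigma_j| \le \epsilon\|\bm{A}\|_F$. Taking $\delta = \O(\epsilon)$ gives the claimed $\O(\polylog(n)\epsilon^{-1})$ query complexity, and using $\O(\log n)$ extra ancilla qubits (or a constant number of repetitions with median amplification) boosts the phase-estimation success probability to $1 - 1/\poly(n)$. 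The main obstacle is the spectral step: establishing rigorously that the two-reflection unitary $W$ has the claimed $2\theta_j$ rotation structure with $\cos\theta_j = \sigma_j/\|\bm{A}\|_F$, and then controlling how the phase-estimation error propagates through the $\cos$ map so that $\epsilon$-precision in the phase translates into $\epsilon\|\bm{A}\|_F$-precision in the singular value; the remaining state-preparation and uncomputation steps are routine once this correspondence is fixed.
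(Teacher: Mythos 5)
The paper gives no proof of this statement---it is imported verbatim from \cite{kerenidis2016quantum}---and your reconstruction is exactly the argument of that source: the isometries built from the tree data structure, the factorization $\mathcal{P}^{\dagger}\mathcal{Q}=\bm{A}/\|\bm{A}\|_F$, the two-reflection walk operator analyzed via Jordan's lemma so that $\cos\theta_j=\sigma_j/\|\bm{A}\|_F$, phase estimation at precision $\O(\epsilon)$, and the Lipschitz bound $|\hat{\sigma}_j-\sigma_j|\le \|\bm{A}\|_F|\hat{\theta}_j-\theta_j|$ yielding the $\O(\polylog(n)\epsilon^{-1})$ cost. This is correct and essentially the same approach as the cited proof (the only cosmetic differences being your row/column convention relative to the oracles (\ref{QNCD: equation:oracle_1})--(\ref{QNCD: equation:oracle_2}), and that one should note the $\pm 2\theta_j$ eigenphase branches give the same $\hat{\sigma}_j$ since $\cos$ is even, which is what lets the phase register be uncomputed).
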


\subsubsection{Linear Sum of States}

The idea of linear combination of states was introduced in \cite{shao2018linear}, which focuses on the following problem: given quantum states $|a\>$ and $|b\>$, to prepare the state $|c\>=\frac{1}{Z_c}(x|a\>+y|b\>)$. The method is based on the fact that $R_{2\theta}=(I-2|b\>\<b|)(I-2|a\>\<a|)$ can be viewed as the clockwise rotation in the plane spanned by $|a\>$ and $|b\>$ with angle $2\theta$, where $\theta=\arccos\<a|b\>$ is the angle between $|a\>$ and $|b\>$. Thus any clockwise rotation in space $SPAN\{|a\>,|b\>\}$ with angle $\phi$ could be written as $R_{\phi}=R_{2\theta}^{\phi/2\theta}$. For the case $|c\>=\frac{1}{Z_c}(x|a\>+y|b\>)$, there is $|c\>=R_{\phi}|a\>$, where $\phi=\arccos\frac{x+y\<a|b\>}{Z_c}=\arccos\frac{x+y\<a|b\>}{\sqrt{x^2 +y^2 +2xy\<a|b\>}}$.
The linear sum of two states could be generalized to $n$ case: 

\begin{theorem}{\rm \cite{shao2018linear}}
Assume state $|\phi_i\>$ could be prepared by given unitary operation in time $T_{in}$, for $i \in [n]$. Then there is a unitary which could prepare the state $|\phi\>=\sum_{i=1}^n \alpha_i |\phi_i\>$ in time $\O(T_{in} n^{\log(n/\epsilon)})$ with error $\epsilon$.
\end{theorem}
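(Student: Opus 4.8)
The plan is to build the $n$-term combination recursively out of the two-state primitive described just above, arranged in a \emph{balanced} binary tree of depth $\log_2 n$ over the index set $[n]$. At a leaf sits one of the given preparation unitaries $U_i|0\> = |\phi_i\>$, each of cost $T_{in}$. At an internal node we are handed two unitaries preparing normalized states $|a\>$ and $|b\>$ (each the normalized partial sum over one subtree) together with the target weights, and we must output a unitary preparing $|c\> = \frac{1}{Z_c}(x|a\> + y|b\>)$. By the two-state construction recalled above this is exactly $|c\> = R_\phi |a\>$ with $\phi = \arccos\frac{x + y\<a|b\>}{Z_c}$, so the entire task reduces to realizing the planar rotation $R_\phi$ on $\mathrm{SPAN}\{|a\>,|b\>\}$ from the reflection building blocks $I - 2|a\>\<a|$ and $I - 2|b\>\<b|$. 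The overlap $\<a|b\>$ (hence $\theta=\arccos\<a|b\>$, $Z_c$ and $\phi$) is a fixed number once the $|\phi_i\>$ and $\alpha_i$ are fixed; I would compute it bottom-up from the leaf Gram data and the partial-sum weights, so that all rotation angles are determined before the circuit runs.

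First I would implement the two reflections from the child unitaries, $I - 2|a\>\<a| = U_a(I - 2|0\>\<0|)U_a^\dagger$ and likewise for $|b\>$, so that one application of $R_{2\theta} = (I - 2|b\>\<b|)(I - 2|a\>\<a|)$ costs $\O(1)$ calls to $U_a,U_b$ and their inverses. Since $R_{2\theta}$ rotates only by the fixed angle $2\theta$ and the reflections can be composed only an integer number of times, the identity $R_\phi = R_{2\theta}^{\phi/2\theta}$ cannot be applied literally; instead I would approximate $R_\phi$ by the nearest reachable rotation $R_{2\theta}^m = R_{2m\theta}$, choosing the integer $m$ so that $2m\theta \equiv \phi \pmod{2\pi}$ up to a per-node precision $\epsilon'$. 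A Dirichlet/pigeonhole argument (equidistribution of $\{m\theta/\pi\}$) guarantees such an $m$ with $m = \O(1/\epsilon')$, so each internal node costs $\O(1/\epsilon')$ applications of $R_{2\theta}$, i.e.\ a multiplicative factor $\O(1/\epsilon')$ on top of the cost of preparing its two children.

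Unrolling this recursion over the $\log_2 n$ levels multiplies $T_{in}$ by $\O(1/\epsilon')^{\log_2 n}$; setting the per-node precision to $\epsilon' = \Theta(\epsilon/n)$ then yields total cost $T_{in}\,(n/\epsilon)^{\log_2 n} = T_{in}\, n^{\log(n/\epsilon)}$, matching the claimed bound (and showing why a \emph{balanced} tree, rather than a linear chain, is essential). The last and hardest ingredient is the global error analysis. Because every child unitary is invoked $\O(1/\epsilon')$ times inside its parent, an approximation error committed in a subtree is reused many times and could in principle amplify as it climbs the tree. I expect this bookkeeping to be the main obstacle: one must argue, via subadditivity of operator-norm errors and the fact that the reflections and rotations are unitary (so deviations accumulate only additively, not multiplicatively), that the choice $\epsilon' = \Theta(\epsilon/n)$ keeps the root state within $\epsilon$ of $\sum_{i=1}^n \alpha_i |\phi_i\>$, while also checking that the worst-case bound $m = \O(1/\epsilon')$ for the rotation approximation holds uniformly across all nodes (handling separately any degenerate nodes where $|a\>$ and $|b\>$ are nearly parallel).
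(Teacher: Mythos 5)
Since the paper imports this theorem from \cite{shao2018linear} without proof, the relevant comparison is with the sketch it does give: the two-reflection rotation $R_{2\theta}=(I-2|b\>\<b|)(I-2|a\>\<a|)$ composed along the child--parent--child binary tree of Figure \ref{4.1_tree_t_m+1}. Your reconstruction follows exactly that route, and your cost accounting is right: a per-node overhead of $\O(1/\epsilon')$ over depth $\log_2 n$ with $\epsilon'=\Theta(\epsilon/n)$ gives $(n/\epsilon)^{\log_2 n}=n^{\log(n/\epsilon)}$, and your point that unitarity makes errors accumulate only additively is the same device the paper itself uses later (its bound $\|R_l-\tilde{R}_l\|\leq 2l\epsilon_3$).

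The genuine gap is your per-node claim $m=\O(1/\epsilon')$ for approximating $R_\phi$ by $R_{2\theta}^{m}$. Dirichlet/pigeonhole guarantees some $m\le 1/\epsilon'$ with $2m\theta$ near $0$ modulo $2\pi$, but hitting an \emph{arbitrary} target $\phi$ is governed by the three-distance theorem, and the largest gap among $\{2m\theta \bmod 2\pi : m\le M\}$ is $O(1/M)$ only when $\theta/\pi$ is badly approximable. If $\theta/\pi$ is rational, the powers of $R_{2\theta}$ form a finite cyclic group and a generic $\phi$ is unreachable at \emph{any} cost --- most starkly when the two children are orthogonal, where $\theta=\pi/2$ and the powers are just $\{I,R_{\pi}\}$; if $\theta/\pi$ is merely close to a rational with small denominator, the required $m$ blows up far beyond $1/\epsilon'$. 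You flag only the nearly-parallel degenerate case, but orthogonal or near-rational-angle children are the real obstruction, and nothing in the intended Gram--Schmidt application rules them out. To close this you must either impose and verify a Diophantine-type condition on every node's angle, or modify the generators (e.g.\ replace one reflection by $I-2|a'\>\<a'|$ for an auxiliary state $|a'\>$ in $\mathrm{SPAN}\{|a\>,|b\>\}$ with tunable overlap, so the composite rotation angle is controllable) --- otherwise the uniform bound $\O(T_{in}\, n^{\log(n/\epsilon)})$ does not follow. Your scepticism about the literal identity $R_\phi=R_{2\theta}^{\phi/2\theta}$ is well placed: the paper's own display silently treats the exponent as fractional and glosses exactly this issue.
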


\section{Quantum Negative Curvature Finding algorithm}
\label{QNCD_state_finding}

Our main contribution in this section is the quantum Negative Curvature Finding (quantum NCF) algorithm presented in Algorithm \ref{QNCF_A}. The quantum NCF algorithm solves the NCF problem by providing the target state $|\bm{u}_t\>$ (for case (\ref{vector_property})) or making the non-vector statement (for case (\ref{non-vector})). The target state \(|\bm{u}_t\>\) corresponds to the eigenvector \(\bm{u}_t\) which satisfies the condition $\bm{u}_t^{T}\bm{H}\bm{u}_t \le -\alpha +\epsilon/2$. 
Here we present a tighter restrict on the target state $|\bm{u}_t\>$ to keep a $\epsilon/2$ redundancy for the classical read-out of the quantum state.
The quantum NCF Algorithm uses the Proper Eigenvalue Labelling (Algorithm \ref{Label_proper}) and the Target State Generating (Algorithm \ref{target_state}) as subroutines proposed in Section \ref{3_posit_negat} and Section \ref{3_input_state}, respectively.

\begin{algorithm}[t]
  \caption{Quantum Negative Curvature Finding (Quantum NCF) Algorithm}
  \label{QNCF_A}
  \begin{algorithmic}[1]
    \Require
      The Hessian matrix $\bm{H}$ which is stored in the data structure in Section~\ref{QNCD_pre_technic_oracle}. The parameter \(\epsilon\) and \(\alpha\) in the \textbf{NCF problem}.
    \Ensure
      The target state $|\bm{u}_t\>$ whose corrsponding classical unit vector $\bm{u}_t$ satisfies the condition $\bm{u}_t^T \bm{H} \bm{u}_t \le -\alpha + \epsilon/2$; or a statement with high probability that there is no such kind of unit vector $\bm{u}$ which satisfies the condition $\bm{u}^T \bm{H} \bm{u} \le -\alpha $.
    \State Label the \textbf{proper}(less than $-\alpha+\epsilon/2$) eigenvalue of $\bm{H}$ (Algorithm \ref{Label_proper}). 
    \label{Qcurvature_negaeigen_init}
    \If {the least eigenvalue of $\bm{H}$ is less than $-\alpha+\epsilon/2$,}
    \State generate the target state (Algorithm \ref{target_state}) and output the state;
    \label{Qcurvature_negaeigen_hard}
    \Else ,
    \State claim that there is no such kind of unit vector $\bm{u}$ which satisfies the condition $\bm{u}^T \bm{H} \bm{u} \le -\alpha $.
    \EndIf
  \end{algorithmic}
\end{algorithm}

\subsection{Challenges to Develop Quantum NCF algorithm}

The core technical component of our quantum algorithm for the NCF problem is the quantum SVE algorithm. However, there are three major challenges that we have to overcome. 

Firstly, the positive-negative eigenvalue problem. In the negative curvature finding problem, we are interested in obtaining eigenvectors with negative eigenvalues. Hence, we can not directly apply the quantum SVE algorithm since it only gives the estimation on \(|\lambda_j|\). In order to overcome this critical issue, we develop Algorithm~\ref{negative_eigenvalue} to label negative eigenvalues. 

Secondly, since the quantum SVE Algorithm presents $\epsilon$-estimation on singular values with time complexity $\O(T_H\|\bm{H}\|_F \polylog(d)\epsilon^{-1})$(Theorem \ref{quantum_SVD}), we need to provide a tight upper bound for the Frobenius norm $\|\bm{H}\|_F$, which is shown in Lemma \ref{bounded_norm} (proof is in Appendix A):

\begin{lemma}
\label{bounded_norm}
Suppose \(\bm{H} \in \R^{d \times d}\) is the Hessian matrix derived from the function \(f : \R^{d} \rightarrow \R\) which has the \(L\)-Lipschitz continuous gradient. Thus the Frobenius norm of $\bm{H}$ has the upper bound \(\|\bm{H}\|_F \leq \sqrt{r} L\), where \(r\) is the rank of $\bm{H}$.
\end{lemma}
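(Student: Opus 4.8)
The plan is to combine two standard facts: that an $L$-Lipschitz gradient caps the spectral norm of the Hessian at $L$, and that for a symmetric matrix the Frobenius norm is the $\ell_2$-norm of the eigenvalue spectrum, which has at most $r$ nonzero entries. So the target inequality reduces to bounding each of the (at most $r$) nonzero eigenvalues in absolute value by $L$ and then summing squares.

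First I would convert the Lipschitz hypothesis into a uniform bound on the eigenvalues. Fix any point $\bm{x}$ in the domain and any unit vector $\bm{v}$. Applying the $L$-Lipschitz condition to the pair $\bm{x}+t\bm{v}$ and $\bm{x}$ gives $\|\nabla f(\bm{x}+t\bm{v})-\nabla f(\bm{x})\| \le L\,|t|$. Dividing by $|t|$ and letting $t\to 0$ identifies the left-hand limit as the directional derivative of the gradient map, namely $\|\nabla^2 f(\bm{x})\,\bm{v}\| = \|\bm{H}\bm{v}\|$, so that $\|\bm{H}\bm{v}\| \le L$ for every unit vector $\bm{v}$. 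Taking the supremum over $\bm{v}$ yields $\|\bm{H}\|_2 \le L$. Since $\bm{H}$ is symmetric, its eigenvalues are real and its spectral norm equals $\max_j |\lambda_j|$; hence $|\lambda_j| \le L$ for all $j$.

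Next I would pass to the Frobenius norm through the eigendecomposition $\bm{H}=\sum_{j=1}^{r}\lambda_j \bm{u}_j \bm{u}_j^{T}$ assumed for $\bm{H}$. Because $\{\bm{u}_j\}_{j=1}^{r}$ is orthonormal, $\|\bm{H}\|_F^2 = \mathrm{tr}(\bm{H}^{T}\bm{H}) = \sum_{j=1}^{r}\lambda_j^2$. By the rank assumption there are at most $r$ nonzero eigenvalues, and each obeys $\lambda_j^2 \le L^2$ from the previous step, so $\|\bm{H}\|_F^2 \le r L^2$. Taking square roots gives $\|\bm{H}\|_F \le \sqrt{r}\,L$, as claimed.

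The only delicate point is the first step, turning the global Lipschitz inequality into the pointwise operator bound $\|\bm{H}\|_2 \le L$. I expect this to be the main obstacle, since it quietly uses that $f$ is twice differentiable (so that the directional limit above exists and equals $\bm{H}\bm{v}$) and that the supremum of $\|\bm{H}\bm{v}\|$ over unit vectors is attained at the top eigenvector of the symmetric matrix $\bm{H}$. Everything after that, the identity $\|\bm{H}\|_F^2=\sum_j \lambda_j^2$ and the rank-based truncation of the sum, is routine.
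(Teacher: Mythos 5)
Your proof is correct and follows essentially the same route as the paper's: both bound the Hessian's eigenvalues by $L$ via the limit $\bm{H}\bm{v} = \lim_{t\to 0}(\nabla f(\bm{x}+t\bm{v})-\nabla f(\bm{x}))/t$ combined with the Lipschitz condition, and then conclude with $\|\bm{H}\|_F^2 = \sum_{j}\lambda_j^2 \le rL^2$ using the rank-$r$ assumption. The only cosmetic difference is that you phrase the eigenvalue bound through the spectral norm $\|\bm{H}\|_2 \le L$, while the paper bounds the Rayleigh quotient $\bm{v}^{T}\bm{H}\bm{v} \in [-L,L]$; these are equivalent for a symmetric matrix.
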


Finally, the input-state problem. For the general superposition state \(\sum_{j} \beta_j |\bm{u}_j\>\), the output state of quantum SVE algorithm has the form \(\sum_j  \beta_j |\bm{u}_j\> ||\hat{\lambda}_j |\>\). 
We could generate different pure state $|\bm{u}_j\>$ with probability $|\beta_j|^2$ by the measurement on eigenvalue register. 
Thus in order to guarantee a small time complexity, we need to prepare a special input state such that the overlap between the input and the target state is relatively large. 

We briefly summarize our conclusion on the time complexity of Algorithm \ref{QNCF_A} in Theorem \ref{QNCF_Theorem}.

\begin{theorem}
\label{QNCF_Theorem}
Algorithm \ref{QNCF_A} takes time $\O(T_H \|\bm{H}\|_F^5 \polylog(d)\epsilon^{-1})$ to solve the negative curvature finding problem by providing the target state $|\bm{u}_t\>$ or making the statement that there is no unit vector satisfies the condition $\bm{u}^T \bm{H} \bm{u} \le -\alpha $. 
\end{theorem}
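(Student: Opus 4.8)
The plan is to prove Theorem \ref{QNCF_Theorem} by separately establishing (i) the correctness of Algorithm \ref{QNCF_A} and (ii) its running time, analyzing the three subroutines in the order they are invoked. First I would fix the precision used in every call to the quantum SVE algorithm (Theorem \ref{quantum_SVD}). Since SVE returns an estimate $\hat\lambda_j$ with additive error $\epsilon'\|\bm{H}\|_F$, resolving eigenvalues to absolute accuracy $\epsilon/2$ forces $\epsilon'=\Theta(\epsilon/\|\bm{H}\|_F)$; substituting into the $\O(T_H\polylog(d)/\epsilon')$ bound of Theorem \ref{quantum_SVD} produces a single factor of $\|\bm{H}\|_F$ per SVE call. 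The $\epsilon$-separation assumption on $\{|\lambda_j|\}$ guarantees that this accuracy suffices to assign each nonzero eigenvalue to a distinct label without collision, so the Proper Eigenvalue Labelling step is well defined.

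Next I would analyze the sign-determination subroutine and the Target State Generating subroutine. The sign subroutine converts the magnitude estimate $|\hat\lambda_j|$ returned by SVE into a signed estimate, which is what lets the labelling step tag exactly the eigenvectors with $\lambda_j\le -\alpha+\epsilon/2$, at a cost equal to the underlying SVE calls up to $\polylog$ factors. The target-state step prepares a suitable input $\sum_j\beta_j|\bm{u}_j\>$, runs labelled SVE, and post-selects the eigenvalue register onto the proper label; its success probability equals $\sum_{j:\,\lambda_j\le -\alpha+\epsilon/2}|\beta_j|^2\ge|\beta_t|^2$. Choosing the input so that its weight on each eigenvector scales like $\lambda_j^2/\|\bm{H}\|_F^2$ (the natural weighting when SVE acts on $\bm{H}$), together with $|\lambda_t|\ge\alpha-\epsilon/2=\Omega(1)$, gives $|\beta_t|^2=\Omega(1/\|\bm{H}\|_F^2)$, so amplitude amplification needs $\O(\|\bm{H}\|_F)$ rounds. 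Multiplying the per-call SVE cost by this amplification overhead and by the extra precision and repetition factors incurred when the sign-determination and labelling steps are wrapped inside the amplification loop is what I expect to assemble into the stated $\O(T_H\|\bm{H}\|_F^5\polylog(d)\epsilon^{-1})$.

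For correctness I would treat the two branches separately. If the labelling finds the least eigenvalue below $-\alpha+\epsilon/2$, the post-selected state is, up to the SVE and amplification error, the eigenvector $\bm{u}_t$ with $\bm{u}_t^{T}\bm{H}\bm{u}_t=\lambda_t\le -\alpha+\epsilon/2$, which satisfies the tighter target condition and a fortiori (\ref{vector_property}), leaving the promised $\epsilon/2$ margin for the later classical read-out. If instead the smallest labelled eigenvalue is at least $-\alpha+\epsilon/2$, then by the variational (Rayleigh-quotient) characterization $\min_{\|\bm{u}\|=1}\bm{u}^{T}\bm{H}\bm{u}=\lambda_{\min}(\bm{H})\ge -\alpha+\epsilon/2>-\alpha$, so no unit vector can satisfy (\ref{non-vector}); this validates the non-vector statement, with failure probability controlled by the $1-1/\poly(d)$ guarantee of Theorem \ref{quantum_SVD} boosted by a constant number of repetitions.

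The hardest part will be the overlap and success-probability analysis that pins down the exponent $5$: I must exhibit an explicitly preparable input state whose amplitude on the target eigenvector is provably $\Omega(1/\poly(\|\bm{H}\|_F))$ rather than merely nonzero, and then track how the amplitude-amplification factor from post-selection composes with the precision factor from SVE and with the sign-determination overhead. A secondary subtlety is keeping the sign-determination step accurate when two eigenvalues have nearly equal magnitude but opposite sign, which is exactly where the $\epsilon$-separation hypothesis on $|\lambda_j|$ and the $\epsilon/2$ redundancy margin are needed to avoid mislabelling.
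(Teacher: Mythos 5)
Your high-level decomposition matches the paper's: the paper proves this theorem in two lines by reducing it to Theorem \ref{QNCF_claim_theorem} (cost of Algorithm \ref{Label_proper}) and Theorem \ref{state_finding} (cost of Algorithm \ref{target_state}), and the ingredients you list — SVE run at precision $\Theta(\epsilon/\|\bm{H}\|_F)$ contributing one factor of $\|\bm{H}\|_F$, the input state $\frac{1}{\|\bm{H}\|_F}\sum_j \lambda_j|\bm{u}_j\>|\bm{u}_j\>$ with weight $\lambda_j^2/\|\bm{H}\|_F^2$ on each eigenvector, and the Rayleigh-quotient argument for the non-vector branch — are exactly the paper's ingredients.

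However, there is a genuine gap: you never derive the exponent $5$, and your accounting cannot produce it. In the paper the factor $\|\bm{H}\|_F^5$ decomposes as $\|\bm{H}\|_F^2 \times \|\bm{H}\|_F^2 \times \|\bm{H}\|_F$: (a) $\O(\|\bm{H}\|_F^2/\alpha^2)$ repetitions of the prepare-SVE-measure loop, because the paper uses plain sampling (not amplitude amplification) to obtain each copy of the candidate eigenstate; (b) $\O((\|\bm{H}\|_F^2/\alpha^2)\log(1/\delta))$ \emph{copies} of that eigenstate consumed by the PNED sign test, since by Theorem \ref{positive_negative} (Hoeffding) the measurement bias is only $\lambda/\|\bm{H}\|_F$ and distinguishing the sign requires quadratically many samples in the inverse bias; and (c) $\O(T_H\|\bm{H}\|_F\polylog(d)\epsilon^{-1})$ per SVE call. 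Your proposal misses factor (b) entirely — you assert the sign subroutine costs only "the underlying SVE calls up to polylog factors," whereas it is actually the source of one of the two $\|\bm{H}\|_F^2$ factors — and you replace factor (a) with an amplitude-amplification overhead of $\O(\|\bm{H}\|_F)$. Besides yielding a product inconsistent with $\|\bm{H}\|_F^5$ (so your "assembly" step cannot close), amplitude amplification is also delicate here because the eigenstate copies are produced by post-selection on a measured eigenvalue register, a point the paper itself flags as an obstruction to treating the preparation as a standard unitary; the paper sidesteps this by using repetition. You candidly flag the exponent-$5$ bookkeeping as the hardest unresolved step, and indeed it is precisely the content of Theorem \ref{QNCF_claim_theorem}'s proof that your proposal leaves open.
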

\begin{proof}
The time complexity of Algorithm \ref{QNCF_A} could be directly obtained by the time complexity of Algorithm \ref{Label_proper} and Algorithm \ref{target_state}, whose complexity analysis are presented in Theorem \ref{QNCF_claim_theorem} and Theorem \ref{state_finding}, respectively. 
\end{proof}

\subsection{Positive-Negative Eigenvalue Discrimination}
\label{3_posit_negat}

In this section, we propose an algorithm aiming to label the target eigenvalue which is less than $-\alpha+\epsilon/2$. This algorithm helps verifying the existence of solution to the NCF problem and generating the target state. 
Since we only have the estimating on singular values by quantum SVE algorithm, we need to first develop Algorithm \ref{negative_eigenvalue} which helps to make the statement that the corresponding eigenvalue is positive or negative. 

  \begin{algorithm}[t]
  \caption{Positive-Negative Eigenvalue Discrimination(PNED) Algorithm}
  \label{negative_eigenvalue}
  \begin{algorithmic}[1]
    \Require
      Quantum oracles $U_H$ and $V_H$. The eigenstate \(|\bm{u}\>\) whose corresponding eigenvalue is $a$.
    \Ensure 
    A measurement result which has different values $0$ and $1$ with probability $P(0)=\frac{1+\lambda/\|\bm{H}\|_F}{2}$ and $P(1)=\frac{1-\lambda/\|\bm{H}\|_F}{2}$.
    \State Create state \(|\bm{u}\>|0\>|0\>\). The second register has the same qubit length with state \(|\bm{u}\>\) and the third register has one qubit length.
    \label{Qcurvature_negaeigen_init}
    \State Apply the Hadmard gate on the third register to obtain the state \(\frac{1}{\sqrt{2}}(|\bm{u}\>|0\>|0\>+|\bm{u}\>|0\>|1\>)\).
    \label{Qcurvature_negaeigen_hard}
    \State Apply the controlled SWAP gate to obtain the state \(\frac{1}{\sqrt{2}}(|\bm{u}\>|0\>|0\>+|0\>|\bm{u}\>|1\>)\).
    \label{Qcurvature_negaeigen_swap}
    \State Apply gate \({U}_H \otimes |0\>\< 0| + {V}_H\otimes |1\>\< 1|\) on the state to obtain \(\frac{1}{\sqrt{2}}(|\bm{P u}\>|0\>+ |\bm{Q u}\>|1\>)\).\label{Qcurvature_negaeigen_oracle}
    \State Apply the Hadmard gate on the third register to obtain the state \(\frac{|\bm{P u} \> + |\bm{Q u} \> }{2}|0\> + \frac{|\bm{Pu} \> - |\bm{Qu} \> }{2}|1\>\).\label{Qcurvature_negaeigen_add}
    \State Measure the third register and output the result.\label{Qcurvature_negaeigen_mea}
  \end{algorithmic}
\end{algorithm}

In Algorithm \ref{negative_eigenvalue}, \(\bm{P} \in \R^{d^2 \times d}\) is the matrix whose column vector \(\bm{p}_i = \bm{e}_i \otimes \frac{\bm{h}_i}{\|\bm{h}_i\|}\) for \(i \in [d]\) , and \(\bm{Q} \in \R^{d^2 \times d}\) is the matrix whose column vector \(\bm{q}_j = \frac{\bm{\tilde{h}}}{\|\bm{H}\|_F} \otimes \bm{e}_j\) for \(j \in [d]\). 
\(\bm{h}_i\) is the \(i\)-th column vector of matrix $\bm{H}$ and \(\tilde{\bm{h}}\) is a \(d\)-dimensional vector whose \(i\)-th component is \(\|\bm{h}_i\|\). 
It can be directly obtained that the matrix $\bm{P}$ and $\bm{Q}$ satisfy the decomposition \(\bm{H}/\|\bm{H}\|_F=\bm{P}^{T} \bm{Q}\) and have property \(\bm{P}^{T}\bm{P}=\bm{Q}^{T}\bm{Q}=I\). 
Mappings \(|\bm{x}\> |0\> \rightarrow |\bm{Px}\>\) and \(|0\> |\bm{x}\> \rightarrow |\bm{Qx}\>\) can be performed by the quantum oracle \(U_H\) and \(V_H\) respectively. 

\begin{theorem}
{\rm \textbf{Positive-negative eigenvalue discrimination.}}
\label{positive_negative}
For eigenvalue $\lambda$ of $\bm{H}$ with property $|\lambda| \geq a$, one could run Algorithm \ref{negative_eigenvalue} for $n=2[\frac{\|\bm{H}\|_F^2}{a^2}\log\frac{1}{\delta}-\frac{1}{2}]+3$ times, to make a statement that $\lambda$ is positive or negative, with probability \(1-\delta\).
\end{theorem}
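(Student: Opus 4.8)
The plan is to reduce the theorem to a standard concentration argument once the single-shot behaviour of Algorithm \ref{negative_eigenvalue} is pinned down. First I would verify the output probabilities asserted in the algorithm by tracking the state through Steps 1--6. The only nontrivial computation is the measurement probability after the final Hadamard: writing the pre-measurement state as $\frac{1}{2}(|\bm{Pu}\>+|\bm{Qu}\>)|0\> + \frac{1}{2}(|\bm{Pu}\>-|\bm{Qu}\>)|1\>$, the probability of outcome $0$ equals $\frac{1}{4}\bigl(\||\bm{Pu}\>\|^2 + \||\bm{Qu}\>\|^2 + 2\,\mathrm{Re}\<\bm{Pu}|\bm{Qu}\>\bigr)$. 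Using $\bm{P}^{T}\bm{P} = \bm{Q}^{T}\bm{Q} = I$ gives $\||\bm{Pu}\>\| = \||\bm{Qu}\>\| = 1$, and using the decomposition $\bm{P}^{T}\bm{Q} = \bm{H}/\|\bm{H}\|_F$ together with $\bm{H}\bm{u} = \lambda\bm{u}$ gives $\<\bm{Pu}|\bm{Qu}\> = \bm{u}^{T}\bm{H}\bm{u}/\|\bm{H}\|_F = \lambda/\|\bm{H}\|_F$. Hence $P(0) = \frac{1}{2}\bigl(1+\lambda/\|\bm{H}\|_F\bigr)$, confirming that each run is an independent Bernoulli trial whose bias encodes the sign of $\lambda$.

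Next I would fix the decision rule: run the algorithm $n$ times and declare $\lambda$ positive if strictly more than half of the outcomes are $0$, and negative otherwise. The governing quantity is the gap $p - \frac{1}{2} = \frac{\lambda}{2\|\bm{H}\|_F}$, whose sign coincides with the sign of $\lambda$. The hypothesis $|\lambda| \ge a$ forces $|p - \frac{1}{2}| \ge \frac{a}{2\|\bm{H}\|_F} =: \gamma$, so the true mean is separated from the decision threshold $\frac{1}{2}$ by at least $\gamma$. An error of the majority vote therefore requires the empirical mean of the $n$ samples to deviate from the true mean by at least $\gamma$ in the wrong direction.

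I would then apply Hoeffding's inequality to the $n$ i.i.d.\ $\{0,1\}$-valued samples. The one-sided bound controls this bad event by $\exp(-2n\gamma^2)$, and substituting $\gamma^2 = a^2/(4\|\bm{H}\|_F^2)$ converts the requirement $\exp(-2n\gamma^2) \le \delta$ into $n \ge \frac{2\|\bm{H}\|_F^2}{a^2}\log\frac{1}{\delta}$. The prescribed count $n = 2\bigl[\frac{\|\bm{H}\|_F^2}{a^2}\log\frac{1}{\delta} - \frac{1}{2}\bigr] + 3$ is exactly of this form: since the bracketed integer exceeds its argument by more than $-1$, one checks that $n > \frac{2\|\bm{H}\|_F^2}{a^2}\log\frac{1}{\delta}$ regardless of the rounding convention, so the Hoeffding threshold is met, and the $2[\cdot]+3$ shape makes $n$ odd, ruling out a tie in the majority vote. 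This yields the claimed success probability $1-\delta$.

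The individual steps are routine, so the principal difficulty is bookkeeping rather than conceptual. I would be careful that the single-shot probability carries the correct normalization (the amplitude factor $\frac{1}{2}$, not $\frac{1}{\sqrt{2}}$, after the second Hadamard) and that the sample count genuinely dominates the Hoeffding threshold after the integer and oddness adjustments. I would also verify that it is the strict two-sided separation $|\lambda| \ge a$ that licenses the clean one-sided estimate, since $\gamma$ enters squared and any looseness there propagates directly into the constant multiplying $\log\frac{1}{\delta}$ in $n$.
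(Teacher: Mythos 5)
Your proposal is correct and follows essentially the same route as the paper's proof: the identical single-shot computation $P(0)=\frac{1}{2}(1+\lambda/\|\bm{H}\|_F)$ via $\bm{P}^{T}\bm{Q}=\bm{H}/\|\bm{H}\|_F$, followed by a majority vote over $n$ i.i.d.\ runs analyzed with the one-sided Hoeffding bound $\exp(-2n\gamma^2)$ with $\gamma=\frac{a}{2\|\bm{H}\|_F}$, matching the paper's bound $e^{-\frac{n}{2}\frac{a^2}{\|\bm{H}\|_F^2}}$. Your bookkeeping on the integer rounding and oddness of $n$ is a slightly cleaner presentation of the same arithmetic the paper performs when solving $e^{-\frac{2x+1}{2}\frac{a^2}{\|\bm{H}\|_F^2}}\leq\delta$ for $x$.
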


The proof of Theorem \ref{positive_negative} is in the Appendix A.
Since the eigenvalue information of $\bm{H}$ is unknown to us, we need to build the Algorithm \ref{Label_proper} to label the \textbf{proper} eigenvalue, which would benefit the target state generation task in the following section. The \textbf{proper} eigenvalue means the eigenvalue is less than $-\alpha + \epsilon/2$. We view this kind of eigenvalue as our target eigenvalue.

 \begin{algorithm}[t]
  \caption{Proper Eigenvalue Labelling}
  \label{Label_proper}
  \begin{algorithmic}[1]
    \Require
      The Hessian matrix \(\bm{H} = \sum_{j=1}^{r} \lambda_j \bm{u}_j \bm{u}_j^{T}\) which is stored in the data structure in Section~\ref{QNCD_pre_technic_oracle}. The constant \(\alpha\) and the error parameter \(\epsilon\) in the NCF problem.
    \Ensure
    A \textbf{proper} label to the eigenvalue $\lambda_j$ such that \(\lambda_j \leq -\alpha+\epsilon/2\) with probability $1-\delta$, or a no-vector statement that there is no unit vector $\bm{u}$ which satisfies $\bm{u}^{T}\bm{H}\bm{u}<-\alpha$.
    \For{$k=1$ to $\frac{4\|\bm{H}\|_F^2}{\alpha^2}(2\frac{4\|\bm{H}\|_F^2}{\alpha^2}\log\frac{1}{\delta}+3)$}
    \State Create the state $\frac{1}{\|\bm{H}\|_F}\sum_{j=1}^{r} \lambda_j |\bm{u}_j\> |\bm{u}_j\>$.
    \label{Label_negative_init}
    \State Apply the quantum SVE model 
    to obtain the state $\frac{1}{\|\bm{H}\|_F}\sum_{j=1}^{r} \lambda_j |\bm{u}_j\> |\bm{u}_j\>||\tilde{\lambda}_j|\>$, where $|\tilde{\lambda}_j| \in |\lambda_j| \pm \epsilon/4$ with probability $1-1/\poly(d)$.
    \label{Label_negative_SVE}
    \State Measure the eigenvalue register and mark the result.
    \label{Label_negative_mea}
    \State Use the rest state in the first register as the input to apply the PNED algorithm.
    \label{Label_negative_judge}
    \EndFor
    \State Count the result in step \ref{Label_negative_mea} and step \ref{Label_negative_judge} to obtain the sequence $\{(|\tilde{\lambda}_j|, n_j, m_j)\}_{j=1}^{r}$\footnotetext{fvbfu}. $n_j$ is the number of resulting $|\tilde{\lambda}_j|$ in step \ref{Label_negative_mea}, and $m_j$ is the number of resulting $1$ in step \ref{Label_negative_judge} for different~$|\tilde{\lambda}_j|$.\label{Label_negative_count}
    \If{$\frac{m_j}{n_j} < \frac{1}{2}$ for all $j \in [r]$,} 
    \State make the no-vector statement;
    \Else
    \State choose the largest $|\tilde{\lambda}_j|$ which satisfies the condition $\frac{m_j}{n_j} > \frac{1}{2}$. 
    \If{$|\tilde{\lambda}_j|<\alpha-\epsilon/4$,} 
    \State make the no-vector statement;
    \Else
    \State label the eigenvalue $\lambda_j$ as the \textbf{proper} eigenvalue;
    \EndIf
    \EndIf
  \end{algorithmic}
\end{algorithm}

The mean idea of Algorithm \ref{Label_proper} is 
to use the input state: 
\begin{equation*}
	\frac{1}{\|\bm{H}\|_F}\sum_{j=1}^{r} \lambda_j |\bm{u}_j\> |\bm{u}_j\>,
\end{equation*}
to the quantum SVE model and obtain the state: 
\begin{equation*}
\frac{1}{\|\bm{H}\|_F}\sum_{j=1}^{r} \lambda_j |\bm{u}_j\> |\bm{u}_j\>||\tilde{\lambda}_j|\>.
\end{equation*}
The measurement on the eigenvalue register would let this entangled state collapse to different states $|\bm{u}_j\> |\bm{u}_j\>$ for $j \in [r]$. 
Since $|\bm{u}_j\> |\bm{u}_j\>$ is a pure state, we could obtain the state $|\bm{u}_j\>$ by neglecting the state in any other register. 
Using state $|\bm{u}_j\>$ to apply the PNED algorithm could provide a discrimination on the positive and negative of the corresponding eigenvalue $\lambda_j$. 
Thus we could label the \textbf{proper} eigenvalue or make the non-vector statement by the result of positive-negative discrimination and the measurement result on the eigenvalue register.

\begin{theorem}
\label{QNCF_claim_theorem}
Algorithm \ref{Label_proper} could label the \textbf{proper} eigenvalue of $\bm{H}$ with probability $1-1/\poly(d)$, or claim with high probability that there is no unit vector $\bm{u}$ which satisfies $\bm{u}^{T}\bm{H}\bm{u}<-\alpha$, with time complexity $\O(T_H \|\bm{H}\|_F^5 \polylog(d)\epsilon^{-1})$.
\end{theorem}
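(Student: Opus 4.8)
The plan is to split the argument into a correctness part and a complexity part, both resting on the two guarantees already available: the absolute accuracy of the SVE estimate (Theorem \ref{quantum_SVD}) and the sign-discrimination guarantee of PNED (Theorem \ref{positive_negative}).

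For correctness, I would first observe that the input state $\frac{1}{\|\bm{H}\|_F}\sum_{j=1}^r \lambda_j|\bm{u}_j\>|\bm{u}_j\>$ is exactly the vectorization of $\bm{H}/\|\bm{H}\|_F$, so it is normalized, it is preparable from $U_H,V_H$ in $\tilde{\O}(T_H)$ time, and measuring the eigenvalue register after SVE collapses it to $|\bm{u}_j\>|\bm{u}_j\>$ with probability $\lambda_j^2/\|\bm{H}\|_F^2$. Running SVE with precision parameter $\epsilon/(4\|\bm{H}\|_F)$ yields $|\tilde{\lambda}_j|\in|\lambda_j|\pm\epsilon/4$; combined with the assumed $\epsilon$-separation of the non-zero eigenvalues, distinct eigenvalues produce estimates differing by more than $\epsilon/2$, so the buckets $\{(|\tilde{\lambda}_j|,n_j,m_j)\}$ are well defined and in one-to-one correspondence with genuine eigenvalues. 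I would then isolate the eigenvalues that can possibly pass the final test: any $\lambda_j$ with $|\lambda_j|\le\alpha/2$ satisfies $|\tilde{\lambda}_j|\le\alpha/2+\epsilon/4<\alpha-\epsilon/4$ (using $\epsilon<\alpha$), hence can never be labelled, so only the at most $r$ eigenvalues with $|\lambda_j|>\alpha/2$ must be handled correctly.

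The heart of the correctness argument is a two-level probabilistic estimate. Each such eigenvalue is sampled per shot with probability at least $(\alpha/2)^2/\|\bm{H}\|_F^2=\alpha^2/(4\|\bm{H}\|_F^2)$, so the outer factor $4\|\bm{H}\|_F^2/\alpha^2$ in the loop count guarantees, by a Chernoff bound, that with high probability every one of these eigenvalues is measured at least $n$ times, where $n$ is the PNED sample count from Theorem \ref{positive_negative} taken with threshold $a=\alpha/2$. Given $n_j\ge n$ samples, Theorem \ref{positive_negative} ensures $m_j/n_j$ lies on the correct side of $1/2$ with probability $1-\delta$, so the filter $m_j/n_j>1/2$ selects exactly the negative eigenvalues among the relevant ones. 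I would then verify the decision logic: the largest $|\tilde{\lambda}_j|$ with $m_j/n_j>1/2$ estimates the most negative eigenvalue $\lambda_{\min}$; if $|\tilde{\lambda}_j|\ge\alpha-\epsilon/4$ then $|\lambda_j|\ge\alpha-\epsilon/2$ with $\lambda_j<0$, so $\lambda_j\le-\alpha+\epsilon/2$ is genuinely proper, whereas if $|\tilde{\lambda}_j|<\alpha-\epsilon/4$ then $\lambda_{\min}>-\alpha$ and the no-vector statement is correct. A union bound over the $\le r$ relevant eigenvalues and over the per-shot SVE failure probability $1/\poly(d)$, with $\delta=1/\poly(d)$, gives overall success $1-1/\poly(d)$. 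For complexity, I multiply the loop count $K=\frac{4\|\bm{H}\|_F^2}{\alpha^2}\big(2\frac{4\|\bm{H}\|_F^2}{\alpha^2}\log\frac1\delta+3\big)=\O(\|\bm{H}\|_F^4\alpha^{-4}\log\frac1\delta)$ by the per-iteration cost, which is dominated by the single SVE call of cost $\O(T_H\|\bm{H}\|_F\polylog(d)\epsilon^{-1})$ (state preparation and the one PNED call being lower order); treating $\alpha$ as a constant and $\delta=1/\poly(d)$, the product is $\O(T_H\|\bm{H}\|_F^5\polylog(d)\epsilon^{-1})$.

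The main obstacle I anticipate is the simultaneous sampling guarantee: I must show that one fixed loop length forces every one of the up to $r$ relevant eigenvalues to be measured at least $n$ times at once, which needs a careful Chernoff-plus-union-bound argument whose total failure probability is driven below $1/\poly(d)$ while the loop count stays inside the stated budget. A secondary delicate point is confirming that the $\epsilon$-separation assumption is strong enough, relative to the $\epsilon/4$ SVE error, to keep the empirical buckets $|\tilde{\lambda}_j|$ in one-to-one correspondence with the true eigenvalues, so that no two eigenvalues are conflated during the counting step.
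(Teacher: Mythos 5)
Your proposal follows essentially the same route as the paper's proof: the same vectorized input state prepared from $U_H$ and $V_H$, the same $\epsilon/4$-precision SVE call costing $\O(T_H\|\bm{H}\|_F\polylog(d)\epsilon^{-1})$ per shot, the same per-shot sampling bound $\lambda_t^2/\|\bm{H}\|_F^2 \ge \alpha^2/(4\|\bm{H}\|_F^2)$ combined with the sample count from Theorem \ref{positive_negative}, and the same product of loop count and per-iteration cost yielding $\O(T_H\|\bm{H}\|_F^5\polylog(d)\epsilon^{-1})$ once $\delta=1/\poly(d)$ and $\alpha$ is treated as constant. If anything, you are more careful than the paper, which only argues that the stated loop count suffices \emph{on average} (``averagely we need to perform\dots'') and never verifies the decision logic or the no-vector branch; your Chernoff-plus-union-bound step and explicit case analysis close exactly those gaps at the cost of constant factors, leaving the stated complexity unchanged.
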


\begin{proof}
The input state $\frac{1}{\|\bm{H}\|_F}\sum_{j=1}^{r} \lambda_j |\bm{u}_j\> |\bm{u}_j\>$ could be generated with oracles $U_H$ and $V_H$:

\begin{equation}
\label{input_state}
|0\>|0\> \stackrel{V_H}{\longrightarrow} \frac{1}{\|\bm{H}\|_F} \sum_{i=1}^d\|\bm{h}_i\||i\>|0\> \stackrel{U_H}{\longrightarrow} \frac{1}{\|\bm{H}\|_F} \sum_{i=1}^d \sum_{j=1}^d h_{ij} |i\>|j\>.
\end{equation}

Since $\bm{H}$ has the eigen-decomposition $\bm{H}=\sum_{k=1}^{r} \lambda_k \bm{u}_k \bm{u}_k^{T}$, we could rewrite entry $h_{ij}$ as $h_{ij} = \sum_{k=1}^{r} \lambda_k u_k^{(i)} u_k^{(j)}$, where \(u_k^{(i)}\) is the \(i\)-th component of vector \(\bm{u}_k\). 
Thus the state $\frac{1}{\|\bm{H}\|_F} \sum_{i=1}^d \sum_{j=1}^d h_{ij} |i\>|j\>$ could be written as:
\begin{equation*}\frac{1}{\|\bm{H}\|_F} \sum_{i=1}^d \sum_{j=1}^d \sum_{k=1}^{r} \lambda_k u_k^{(i)} u_k^{(j)} |i\>|j\> = \frac{1}{\|\bm{H}\|_F}\sum_{k=1}^{r} \lambda_k |\bm{u}_k\> |\bm{u}_k\>.	
\end{equation*}
Then we apply the quantum SVE model on this state. In order to give $\epsilon/4$-estimation on the singular value, the time complexity to run the quantum SVE algorithm is $\O(T_H \|\bm{H}\|_F \polylog(d) \epsilon^{-1})$ by Theorem \ref{quantum_SVD}.

Suppose there are eigenvalues $\lambda_j$ which are less than $-\alpha+\epsilon/2$. We denote the least one as $\lambda_t$ and label it as the \textbf{proper} eigenvalue. By Theorem \ref{positive_negative}, we need to generate $n_t=2[\frac{\|\bm{H}\|_F^2}{\lambda_t^2}\log\frac{1}{\delta}-\frac{1}{2}]+3$ numbers of state $|\bm{u}_t\>$ in order to guarantee that $\lambda_t<0$ with probability $1-\delta$. 
Note that the probability of generating state \(|\bm{u}_t\>\) in each iteration of step \ref{Label_negative_init}-\ref{Label_negative_mea} in Algorithm \ref{Label_proper} is \(P_t=\frac{\lambda_t^2}{\|\bm{H}\|_F^2}\). 
So averagely we need to perform step \ref{Label_negative_init}-\ref{Label_negative_mea} in Algorithm \ref{Label_proper} for $n=\frac{\|\bm{H}\|_F^2}{\lambda_t^2}\{2[\frac{\|\bm{H}\|_F^2}{\lambda_t^2}\log\frac{1}{\delta}-\frac{1}{2}]+3\}$ times. 
The number $n$ can be roughly upper bounded by $\frac{4\|\bm{H}\|_F^2}{\alpha^2}(2\frac{4\|\bm{H}\|_F^2}{\alpha^2}\log\frac{1}{\delta}+3)$, since for negative curvature case $\epsilon < \alpha$, we have $|\lambda_t|=\alpha-\epsilon/2 > \alpha/2 $.

By considering the time complexity to run the quantum SVE algorithm ($\O(T_H \|\bm{H}\|_F \polylog(d) \epsilon^{-1})$) and setting the probability error bound $\delta = 1/\poly(d)$, we could derive that the time complexity of Algorithm \ref{Label_proper} is $\O(T_H \|\bm{H}\|_F^5 \polylog(d)\epsilon^{-1})$.
\end{proof}

\subsection{Target State Generating}
\label{3_input_state}

Suppose the result of Algorithm \ref{Label_proper} implies the existence of the target eigenvector \(\bm{u}_t\), which satisfies \(\bm{u}_t^{T} \bm{H} \bm{u}_t \le -\alpha+\epsilon/2\). In order to give a solution to the Negative Curvature Finding problem, we need to obtain the vector \(\bm{u}_t\) efficiently. Thus we develop Algorithm \ref{target_state} in Section \ref{3_input_state} which could generate the quantum state \(|\bm{u}_t\>\) in time $\tilde{\O}(\polylog(d))$. The classical read-out of \(|\bm{u}_t\>\), which means to estimate vector \(\bm{u}_t\) from quantum state \(|\bm{u}_t\>\), will be discussed in the following section.

 \begin{algorithm}[t]
  \caption{Target State Generating}
  \label{target_state}
  \begin{algorithmic}[1]
    \Require
The Hessian matrix \(\bm{H} = \sum_{j=1}^{r} \lambda_j u_j u_j^{T}\) which is stored in the data structure in Section~\ref{QNCD_pre_technic_oracle}. The number \(\alpha\) and the error bound \(\epsilon\) in NCF problem. The probability error bound \(\delta\).
    \Ensure
    The target state \(|\bm{u}_t\>\) with property \(\< \bm{u}_t|H|\bm{u}_t\> = \lambda_t \le -\alpha + \epsilon/2\).
    \For{$k=1$ to $\left[4\frac{\|\bm{H}\|_F^2}{\alpha^2}\log{\frac{1}{\delta}}\right]+1$}
    \State Create the state $\frac{1}{\|\bm{H}\|_F}\sum_{j=1}^{r} \lambda_j |\bm{u}_j\> |\bm{u}_j\>$.
    \label{target_state_input_init}
    \State Apply the quantum SVE model 
    to obtain the state $\frac{1}{\|\bm{H}\|_F}\sum_{j=1}^{r} \lambda_j |\bm{u}_j\> |\bm{u}_j\>||\tilde{\lambda}_j|\>$, where $|\tilde{\lambda}_j| \in |\lambda_j| \pm \epsilon/4$ with probability $1-1/\poly(d)$.
    \label{target_state_SVE}
    \State Measure the eigenvalue register and mark the result.
    \label{target_state_mea}
    \If {the eigenvalue measured in in step \ref{target_state_mea} is labelled to be \textbf{proper} in Algorithm \ref{Label_proper},}
    \State output the state in the first register as the target state.    \label{target_state_output}
    \EndIf \label{target_state_endif}
    \EndFor
  \end{algorithmic}
\end{algorithm}

The main idea of Algorithm \ref{target_state} is very similar to Algorithm \ref{Label_proper}. We still use state $\frac{1}{\|\bm{H}\|_F}\sum_{j=1}^{r} \lambda_j |\bm{u}_j\> |\bm{u}_j\>$ as the input of quantum SVE algorithm to obtain state:
\begin{equation*}
\frac{1}{\|\bm{H}\|_F}\sum_{j=1}^{r} \lambda_j |\bm{u}_j\> |\bm{u}_j\>||\tilde{\lambda}_j|\>.
\end{equation*}
Suppose $\lambda_t$ denotes the eigenvalue of \(|\bm{u}_t\>\) that \(\lambda_t \le -\alpha+\epsilon/2\). 
The probability of generating state \(|\bm{u}_t\>\) in each iteration of step \ref{target_state_input_init}-\ref{target_state_endif} in Algorithm \ref{target_state} is \(P_t =  \frac{\lambda_t^2}{\|\bm{H}\|_F^2} \ge \frac{\alpha^2}{4\|\bm{H}\|_F^2}\). 
Thus the probability of generating at least one state \(|\bm{u}_t\>\) in $N=\left[4\frac{\|\bm{H}\|_F^2}{\alpha^2}\log{\frac{1}{\delta}}\right]+1$ times of step \ref{target_state_input_init}-\ref{target_state_endif} is \(1-(1-P_t)^N\). There is: 
\begin{equation*}
1-(1-P_t)^N \ge 1-e^{-N P_t} \ge 1-e^{-\log(1/\delta)} = 1-\delta.
\end{equation*}
So Algorithm \ref{target_state} could generate at least one state \(|\bm{u}_t\>\) in $N$ iterations with probability at least \(1-\delta\).
By considering the time complexity to run the quantum SVE algorithm ($\O(T_H\|\bm{H}\|_F \polylog(d) \epsilon^{-1})$) and setting the probability error bound $\delta = 1/\poly(d)$, we could derive the time complexity of Algorithm \ref{target_state} in Theorem~\ref{state_finding}:

\begin{theorem}
\label{state_finding}
Suppose that the target state which satisfies the condition \(\< \bm{u}_t|\bm{H}|\bm{u}_t\> \le -\alpha + \epsilon/2\) exists. There is a quantum algorithm which could perform this target state \(|\bm{u}_t\>\) in time \(\O(T_H \|\bm{H}\|_F^3 \polylog(d) \epsilon^{-1})\) with probability at least \(1-1/\poly(d)\).
\end{theorem}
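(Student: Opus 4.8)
The plan is to bound the total running time of Algorithm \ref{target_state} as the number of outer iterations $N$ times the cost of a single iteration, and separately to verify that $N$ iterations suffice for success probability $1-1/\poly(d)$.

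First I would account for the per-iteration cost. Preparing the input state $\frac{1}{\|\bm{H}\|_F}\sum_{j=1}^{r}\lambda_j|\bm{u}_j\rangle|\bm{u}_j\rangle$ costs $\O(T_H)$: applying $V_H$ and then $U_H$ to $|0\rangle|0\rangle$ produces $\frac{1}{\|\bm{H}\|_F}\sum_{i,j}h_{ij}|i\rangle|j\rangle$ exactly as in equation (\ref{input_state}), and substituting the eigen-decomposition $h_{ij}=\sum_k\lambda_k u_k^{(i)}u_k^{(j)}$ rewrites this as the desired superposition. Applying the quantum SVE model of Theorem \ref{quantum_SVD} with precision $\epsilon/4$ appends the estimate $||\tilde\lambda_j|\rangle$ at cost $\O(T_H\|\bm{H}\|_F\polylog(d)\epsilon^{-1})$, and the measurement of the eigenvalue register together with the comparison against the proper label from Algorithm \ref{Label_proper} (step \ref{target_state_output}) adds only lower-order overhead. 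Thus one iteration costs $\O(T_H\|\bm{H}\|_F\polylog(d)\epsilon^{-1})$, dominated by the SVE call.

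Next I would establish correctness. Measuring the eigenvalue register collapses the post-SVE state onto a branch $|\bm{u}_j\rangle|\bm{u}_j\rangle||\tilde\lambda_j|\rangle$ with probability equal to the squared amplitude $\lambda_j^2/\|\bm{H}\|_F^2$, leaving $|\bm{u}_j\rangle$ in the first register. For the target index $t$ we have $|\lambda_t|\ge\alpha-\epsilon/2>\alpha/2$ using $\epsilon<\alpha$, so the per-iteration success probability obeys $P_t=\lambda_t^2/\|\bm{H}\|_F^2\ge\alpha^2/(4\|\bm{H}\|_F^2)$. With $N=\lceil 4\frac{\|\bm{H}\|_F^2}{\alpha^2}\log\frac1\delta\rceil+1$, the probability of obtaining $|\bm{u}_t\rangle$ at least once is $1-(1-P_t)^N\ge 1-e^{-NP_t}\ge 1-\delta$. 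Combining, the total time is $N\cdot\O(T_H\|\bm{H}\|_F\polylog(d)\epsilon^{-1})=\O(T_H\|\bm{H}\|_F^3\alpha^{-2}\log(1/\delta)\polylog(d)\epsilon^{-1})$; setting $\delta=1/\poly(d)$ makes $\log(1/\delta)=\O(\polylog(d))$, and treating the curvature threshold $\alpha$ as a constant absorbs $\alpha^{-2}$, giving $\O(T_H\|\bm{H}\|_F^3\polylog(d)\epsilon^{-1})$.

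The step I expect to be the main obstacle is justifying the lower bound on the per-iteration overlap $P_t$, since the entire exponential speed-up rests on the input state having amplitude $\lambda_t/\|\bm{H}\|_F$ on the target eigenstate rather than an amplitude that shrinks with $d$. I would also take care that the acceptance test in step \ref{target_state_output} is consistent with the label produced by Algorithm \ref{Label_proper}: the $\epsilon$-separation assumption on the absolute eigenvalues must guarantee that the measured estimate $|\tilde\lambda_j|$, accurate to $\epsilon/4$, identifies a single eigenspace unambiguously, so that a branch is accepted if and only if it carries the proper eigenvalue.
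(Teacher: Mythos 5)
Your proposal is correct and follows essentially the same route as the paper: prepare $\frac{1}{\|\bm{H}\|_F}\sum_j \lambda_j|\bm{u}_j\rangle|\bm{u}_j\rangle$ via equation (\ref{input_state}), apply SVE at precision $\epsilon/4$, use $P_t=\lambda_t^2/\|\bm{H}\|_F^2\ge \alpha^2/(4\|\bm{H}\|_F^2)$ with $N=\left[4\frac{\|\bm{H}\|_F^2}{\alpha^2}\log\frac{1}{\delta}\right]+1$ repetitions and the bound $1-(1-P_t)^N\ge 1-e^{-NP_t}\ge 1-\delta$, then set $\delta=1/\poly(d)$ to obtain $\O(T_H\|\bm{H}\|_F^3\polylog(d)\epsilon^{-1})$. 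Your explicit accounting of the absorbed $\alpha^{-2}$ and $\log(1/\delta)$ factors, and the remark that the $\epsilon$-separation assumption makes the acceptance test in step \ref{target_state_output} unambiguous, only make explicit what the paper leaves implicit.
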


\section{State Read-out}
\label{QNCD_state_read-out}

In this section, we propose an efficient algorithm to readout the classical vector $\bm{u}_t$ from the quantum state $|\bm{u}_t\>$. Generally, the classical read-out of a quantum state takes at least \(\O(d/\epsilon_1)\) times of measurement on $d$-dimensional quantum state for an $\epsilon_1$-error estimation. Thus the classical read-out of the required state could offset the exponential speed-up \cite{Aaronson2015QuantumML} provided in many quantum machine learning algorithms. In order to avoid this problem, we propose Algorithm \ref{coordinate} to rewrite the target state $|\bm{u}_t\>$ as the linear combination of $r$ states, which is selected from column vectors of Hessian $\bm{H}$.

Recall that our Hessian matrix \(\bm{H} \in \R^{d \times d}\) has the eigendecomposition \(\bm{H} = \sum_{j=1}^{r} \lambda_j \bm{u}_j \bm{u}_j^{T}\). The eigenfunction \(\bm{H} \bm{u}_j = \lambda_j \bm{u}_j \) can be written as \(\sum_{i=1}^{d} \bm{h}_i u_j^{(i)}=\lambda_j \bm{u}_j\), which means any eigenvector of $\bm{H}$ that corresponds to a non-zero eigenvalue could be represented as the linear combination of vectors in  \(\{\bm{h}_j\}_{j=1}^{d}\). 
Since $\bm{H}$ has the rank of \(r\), there exists a subset of complete basis \(\{\bm{h}_{g(i)}\}_{i=1}^{r}\)\footnote{$g(i)$ is the index of the $i$-th column vector in the complete basis.} of the column space, which is sampled from the set \(\{\bm{h}_j\}_{j=1}^{d}\). 
Thus, any eigenvector $\bm{u}_j$ could also be represented as the linear combination of vectors in \(\{\bm{h}_{g(i)}\}_{i=1}^{r}\). We denote $\bm{s}_i$ as $\bm{h}_{g(i)}$ for simplicity.

Back to the state read-out problem, suppose the target state \(|\bm{u}_t\>\) that we generated in previous section can be written as \(|\bm{u}_t\> = \sum_{i=1}^{r} x_i |\bm{h}_{g(i)}\>\), where \(\{x_i\}_{i=1}^{r}\) are coordinates of state \(|\bm{u}_t\>\) under the basis \(\{|\bm{h}_{g(i)}\>\}_{i=1}^{r}\). Thus, instead of simply reading out components of vector \(\bm{u}_t\), we could get the classical description of \(|\bm{u}_t\>\) by calculating each \(x_i\). Note that the complete basis \(\{|\bm{h}_{g(i)}\>\}_{i=1}^{r}\) is not unique and we only need to identify one of them.

One advantage of generating the form $|\bm{u}_t\>=\sum_{i=1}^{r} x_i |\bm{s}_i\>$ is that the updating operation for the $l$-th iteration $\bm{z}^{(l+1)}= \bm{z}^{(l)} +\eta \bm{u}_t$ in general non-convex algorithms could be implemented in quantum form $|\bm{z}^{(l+1)}\> \propto |\bm{z}^{(l)}\> +\eta \sum_{i=1}^{r} x_i |\bm{s}_i\>$ by Linear-Sum-of-States method \cite{shao2018linear} efficiently, which may inspire efficient quantum counterparts for these non-convex algorithms. Note that we could not perform the operation $|\bm{z}^{(l+1)}\> \propto |\bm{z}^{(l)}\> +\eta |\bm{u}_t\>$ directly since the target state $|\bm{u}_t\>$ is generated by post-selection instead of standard unitary operations.


\subsection{Complete Basis Selection}
\label{4_basis}

In this section, we develop a quantum algorithm to select a subset \(S_I=\{g(1), g(2),\cdots, g(r)\}\) from \([d]\), which corresponds to the complete basis \(\{|\bm{h}_{g(i) }\>\}_{i=1}^{r}\). 
The quantum complete basis selection algorithm can be viewed as the quantum version of Gram-Schmidt orthogonalization: firstly we choose $|\bm{t}_1\>=|\bm{h}_{g(1)}\>$ from the state set $\{|\bm{h}_j\>\}_{j=1}^d$; then given state set $\{|\bm{t}_m\>\}_{m=1}^{l}$, we choose $|\bm{t}_{l+1}\> \propto
|\bm{h}_{g(l+1)}\>-\sum_{m=1}^{l} |\bm{t}_m\>\<\bm{t}_m|\bm{h}_{g(l+1)}\>$ from the state set $\{|\bm{h}_j\>-\sum_{m=1}^{l} |\bm{t}_m\>\<\bm{t}_m|\bm{h}_j\>\}_{j=1}^{d}$. 
Since the chosen $|\bm{t}_{l+1}\>$ is orthogonal to states in basis $\{|\bm{t}_m\>\}_{m=1}^{l}$ for each iteration $l \in [r-1]$, state set $\{|\bm{t}_m\>\}_{m=1}^{r}$ forms an orthonormal basis. 

Note that the state $|\bm{t}_m\>$ is generated along with an index $g(m)$ for $m \in [r]$, so we would obtain a complete basis index set $\{g(m)\}_{m=1}^{r}$ after the implementation of this quantum algorithm. More detail about the quantum complete basis selection algorithm is provided in Algorithm \ref{Basis}. The detail of the time complexity of Algorithm \ref{Basis} is analyzed in Theorem \ref{Basis_theorem} in Section \ref{4.2_error}.

\begin{algorithm}[t]
\caption{Complete Basis Selection}
\label{Basis}
\begin{algorithmic}[1]
\Require
Quantum access to oracle $U_H$ and $V_H$. 
\Ensure
The index set of the complete basis: $S_I = \{g(i)\}_{i=1}^{r}$.
\State Initialize the index set $S_I= \varnothing $.\label{Basis_one}
\For{$l=0$ to $r-1$}
\State Create state $|\phi_1^{(l)}\>=\frac{1}{\|\bm{H}\|_F} \sum_{j=1}^d \|\bm{h}_j\| |j\> \left\{ \left[|\bm{h}_j\>-\sum_{m=1}^{l}|\bm{t}_m\>\<\bm{t}_m|\bm{h}_j\>\right] |{0}\> - \sum_{m=1}^{l}|\bm{t}_m\>\<\bm{t}_m|\bm{h}_j\>|1\> \right\}$.\label{Basis_state}
\State Measure the third register of state $|\phi_1^{(l)}\>$ multiple times to get the state $|\phi_2^{(l)}\>$ which is proportional to $\frac{1}{\|\bm{H}\|_F}\sum_{j=1}^{d} |j\>
\|\bm{h}_j\|\left[|\bm{h}_j\>-\sum_{m=1}^{l} |\bm{t}_m\>\<\bm{t}_m|\bm{h}_j\>\right]$.\label{Basis_measure}
\State Measure the first register and record the result as $g(l+1)$. \label{Basis_record}
\State Denote $|t_{l+1}\>$ as the state proportional to $|\bm{h}_{g(l+1)}\>-\sum_{m=1}^{l} |\bm{t}_m\>\<\bm{t}_m|\bm{h}_{g(l+1)}\>$. \label{Basis_basis}
\State Update the index set $S_I = S_I \cup \{g(l+1)\}$.\label{Basis_index}
\EndFor
\end{algorithmic}
\end{algorithm}

State $\frac{1}{\|\bm{H}\|_F} \sum_{j=1}^d \|\bm{h}_j\| |j\> \left\{ \left[|\bm{h}_j\>-\sum_{m=1}^{l}|\bm{t}_m\>\<\bm{t}_m|\bm{h}_j\>\right] |{0}\> - \sum_{m=1}^{l}|\bm{t}_m\>\<\bm{t}_m|\bm{h}_j\>|1\> \right\}$ in Step~\ref{Basis_state} of Algorithm \ref{Basis} can be generated by the following procedure:
\small
\begin{align*}
	|0\>|0\>
	\xrightarrow{U_H V_H} 
	&\frac{1}{\|\bm{H}\|_F}\sum_{j=1}^{d} \|\bm{h}_j\||j\>|\bm{h}_j\>\\
	\xrightarrow{\text{add an auxiliary register and apply Hadamard gate}}
	&\frac{1}{\|\bm{H}\|_F}\sum_{j=1}^{d} \|\bm{h}_j\||j\>|\bm{h}_j\>\frac{|0\>+|1\>}{\sqrt{2}}\\
	\xrightarrow{ \prod_{m=1}^{l}\left[(I-2|\bm{t}_m\>\<\bm{t}_m|)\otimes |0\>\<0|+I\otimes |1\>\<1|\right]}
	&\frac{1}{\|\bm{H}\|_F} \sum_{j=1}^d \|\bm{h}_j\| |j\> \left\{ \left[ |\bm{h}_j\>-2\sum_{m=1}^{l}|\bm{t}_m\>\<\bm{t}_m|\bm{h}_j\> \right] \frac{|0\>}{\sqrt{2}}+|\bm{h}_j\> \frac{|1\>}{\sqrt{2}}\right\}\\
	\xrightarrow{\text{apply Hadamard gate on the last register}}& \frac{1}{\|\bm{H}\|_F} \sum_{j=1}^d \|\bm{h}_j\| |j\> \left\{ \left[|\bm{h}_j\>-\sum_{m=1}^{l}|\bm{t}_m\>\<\bm{t}_m|\bm{h}_j\>\right] |{0}\> - \sum_{m=1}^{l}|\bm{t}_m\>\<\bm{t}_m|\bm{h}_j\>|1\> \right\}. 
\end{align*}
\normalsize

The crucial part in Algorithm~\ref{Basis} is to implement the reflection $R_m = I-2|\bm{t}_m\>\<\bm{t}_m|, \forall m \in [r-1]$. For the $m+1$ case, there is:
\begin{equation}\label{4.1_t_m+1}
|\bm{t}_{m+1}\>=\frac{1}{Z_{m+1}}(|\bm{s}_{m+1}\>-\sum_{i=1}^{m} |\bm{t}_i\>\<\bm{t}_i|\bm{s}_{m+1}\>),	
\end{equation}
Note that $\{|\bm{t}_i\>\}$ forms the orthogonal basis: $\<\bm{t}_i|\bm{t}_j\>=0, \forall i \neq j$, and $Z_{m+1}=\<\bm{t}_{m+1}|\bm{s}_{m+1}\>=\||\bm{s}_{m+1}\>-\sum_{i=1}^{m} |\bm{t}_i\>\<\bm{t}_i|\bm{s}_{m+1}\>\|=\sqrt{1-\sum_{i=1}^{m} \<\bm{t}_i|\bm{s}_{m+1}\>^2}$.

Define coordinate $\{x_{ij}\}$ such that each state $|\bm{t}_i\>$ could be written as $\sum_{j=1}^{i} x_{ij} |\bm{s}_j\>$. Since $|\bm{t}_{m+1}\>$ is orthogonal to $|\bm{s}_1\>,|\bm{s}_2\>,\cdots|\bm{s}_m\>$ and required to be normal, there is:
\begin{equation}\label{4.1_x_1}
\left\{
\begin{aligned}
&\sum_{i=1}^{m+1} x_{m+1,i} \<\bm{s}_j|\bm{s}_i\> = 0,\ \forall j \in [m],\\	
& \sum_{j=1}^{m+1} \sum_{i=1}^{m+1} x_{m+1,j} x_{m+1,i} \<\bm{s}_j|\bm{s}_i\> = 1 .
\end{aligned}
\right.
\end{equation}
Note that $x_{m+1,m+1}=1/Z_{m+1}$ by equation (\ref{4.1_t_m+1}). Define $m$-dimensional vector $\boldsymbol{x}, \boldsymbol{b}$ and the $m \times m$ matrix $\boldsymbol{C}_m$, such that:
\begin{equation*}
\boldsymbol{x}=\sum_{i=1}^m x_{m+1,i} \boldsymbol{e}_i\ ,
\boldsymbol{b}=\sum_{i=1}^m \<\bm{s}_j|\bm{s}_{m+1}\>\boldsymbol{e}_j\ ,
\boldsymbol{C}_m=\{c_{ij}\}_{ij}^{m,m}=\{\<\bm{s}_i|\bm{s}_j\>\}_{ij}^{m,m}\ .
\end{equation*}

Thus, we could derive the following linear equations about $\boldsymbol{x}$:
\begin{equation}\label{4.1_x_2}
\left\{
\begin{aligned}
&\boldsymbol{C}_m \boldsymbol{x}=-\frac{1}{Z_{m+1}}\boldsymbol{b},\\	
& \boldsymbol{x}^{T} \boldsymbol{b}=\frac{Z_{m+1}^2-1}{Z_{m+1}}.
\end{aligned}
\right.
\end{equation}

We could obtain the coordinate $\{x_{m+1,i}\}_{i=1}^{m+1}$ by solving equation~\ref{4.1_x_2}. There is:
\begin{equation}\label{4.1_x_3}
\left\{
\begin{aligned}
&x_{m+1,m+1}=\frac{1}{Z_{m+1}}=\sqrt{\frac{|\boldsymbol{C}_{m}|}{|\boldsymbol{C}_{m+1}|}},\\	
& x_{m+1,i}=-\frac{1}{Z_{m+1}}\frac{|\boldsymbol{C}_m^{(i)}|}{|\boldsymbol{C}_m|},\ \forall i \in [m],
\end{aligned}
\right.
\end{equation}
where matrix $\boldsymbol{C}_m^{(i)}$ denotes the matrix generated from $\boldsymbol{C}_m$ by replacing the $i$-th column with $\boldsymbol{b}$.

Suppose now we have obtained the linear combination form $|\bm{t}_{m+1}\>=\sum_{i=1}^{m+1} x_{m+1,i}|\bm{s}_i\>$. Thus, in order to prepare state $|\bm{t}_{m+1}\>$, we perform the states-linear-sum operation.

Consider the binary tree structure in Figure~\ref{4.1_tree_t_m+1}.
\begin{figure}
\centering
\begin{tikzpicture}[grow'=up, level distance=1.3cm,
   level 1/.style={sibling distance=8cm, level distance=1.6cm},
   level 2/.style={sibling distance=4cm, level distance=1.2cm},level 3/.style={sibling distance=2cm, level distance=0.8cm}]
\node {$|\bm{t}_{m+1}\>$}
child{node {$\cdots$}
   child {node {$x_{m+1,1}|\bm{s}_1\>+x_{m+1,2}|\bm{s}_2\>$}
     child {node {$|\bm{s}_1\>$}}
     child {node {$|\bm{s}_2\>$}}
   }
   child {node {$\cdots$}
     child {node {$\cdots$}}
     child {node {$\cdots$}}
   }
}
child {node {$\cdots$}
   child {node {$x_{m+1,m}|\bm{s}_{m}\>+x_{m+1,m+1}|\bm{s}_{m+1}\>$}
     child {node {$|\bm{s}_m\>$}}
     child {node {$|\bm{s}_{m+1}\>$}}
   }
   child{
   }
};
\end{tikzpicture}
\caption{The structure of generating $|\bm{t}_{m+1}\>$ by linear-combination-of-states method}
\label{4.1_tree_t_m+1}
\end{figure}
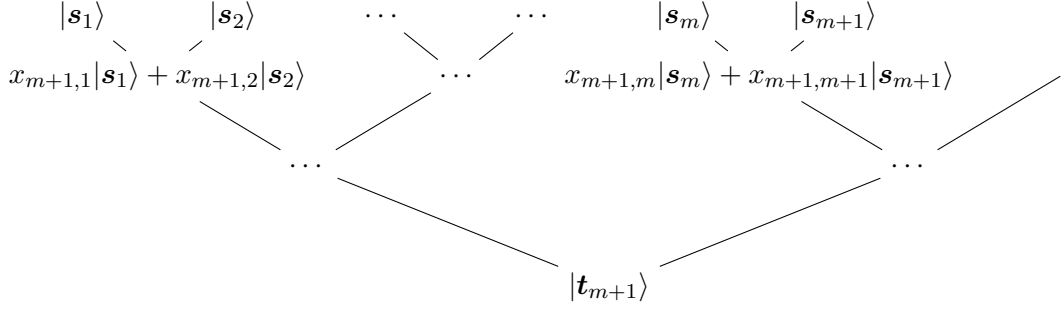
The required state $|\bm{t}_{m+1}\>$ could be generated by performing the states-linear-sum operation for each child-parent-child structure in Figure~\ref{4.1_tree_t_m+1} from the leave layer to the root layer. Note that here we actually perform the state $|\bm{t}_{m+1}'\>=|g(1)\>|g(2)\>\cdots|g(m+1)\>|\bm{t}_{m+1}\>$ from states $\{|g(1)\>|g(2)\>\cdots|g(m+1)\>|s_i\>\}_{i=1}^{m+1}$, and each state $(\prod_{j=1}^{m+1} |g(j)\>)|s_i\>$ could be performed by oracle $U_H$ on state $(\prod_{j=1}^{m+1} |g(j)\>)|0\>$. The operation:
\begin{equation*}
\prod_{m=1}^{l}\left[(I-2|\bm{t}_m\>\<\bm{t}_m|)\otimes |0\>\<0|+I\otimes |1\>\<1|\right]
\end{equation*}
is performed by apply $\prod_{m=1}^{l}\left[(I-2|\bm{t}_m'\>\<\bm{t}_m'|)\otimes I \otimes |0\>\<0|+I\otimes |1\>\<1|\right]$ on state:
\begin{equation*}
\frac{1}{\|\bm{H}_F\|}\sum_{i=1}^{d}\|\bm{h}_i\|(\prod_{j=1}^{l} |g(j)\>)|\bm{h}_i\>\frac{|0\>+|1\>}{\sqrt{2}}.
\end{equation*}

We neglect the term $\prod_{j=1}^{l} |g(j)\>$ for each $l$ iteration in Algorithm \ref{Basis} for simplicity, which do not influence the result.

\subsection{Error analysis and time complexity}
\label{4.2_error}
The error of implementing $|\bm{t}_{m+1}\>$ comes from the imperfect implementing of rotation operation for each child-parent-child structure in Figure~\ref{4.1_tree_t_m+1}, and the error of calculating $\boldsymbol{x}$. The former has been analyzed in \cite{shao2018linear}:
\begin{theorem}{\rm \cite{shao2018linear}}\label{4.2_linear_state}
Assume state $|\phi_i\>$ could be prepared by given unitary operation in time $\O(T_{in})$, for $i \in [n]$. Then there is a unitary which could prepare the state $|\phi\>=\sum_{i=1}^n \alpha_i |\phi_i\>$ in time $\O(T_{in} n^{\log(n/\epsilon)})$ with error $\epsilon$.
\end{theorem}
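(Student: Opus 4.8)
The plan is to prove Theorem~\ref{4.2_linear_state} by a divide-and-conquer recursion on the balanced binary tree of Figure~\ref{4.1_tree_t_m+1}, with the two-state combination recalled in the Linear-Sum-of-States preliminary serving as the base case. First I would fix that base case cleanly: given $|a\>$ and $|b\>$, each prepared by a unitary in time $\O(T_{in})$, the reflections $I-2|a\>\<a|$ and $I-2|b\>\<b|$ are each realized with two calls to the corresponding preparation unitary (conjugating $I-2|0\>\<0|$), and their product is the rotation $R_{2\theta}$ by angle $2\theta$ in the plane spanned by $|a\>$ and $|b\>$. Taking the power $R_\phi=R_{2\theta}^{\phi/(2\theta)}$ with $\phi=\arccos\frac{x+y\<a|b\>}{Z_c}$ then maps $|a\>$ to the normalized combination $\frac{1}{Z_c}(x|a\>+y|b\>)$, giving a routine that fuses two prepared states into any prescribed normalized linear combination of them.

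Next I would set up the recursion. To prepare $|\phi\>=\sum_{i=1}^n\alpha_i|\phi_i\>$, I split the index set into two halves, recursively prepare the normalized weighted sums $|\phi_L\>$ and $|\phi_R\>$ of the two halves, and apply the two-state routine once to form the correct normalized combination of $|\phi_L\>$ and $|\phi_R\>$. By induction over the tree, every internal node holds the correct normalized weighted superposition of its leaf descendants; the node coefficients and the corresponding rotation angles are computed classically from the $\alpha_i$ together with the pairwise overlaps $\<\phi_i|\phi_j\>$, in exactly the same way the coordinates $x_{m+1,i}$ are recovered by solving the Gram linear systems in the surrounding text.

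For the complexity I would unroll the recursion. The balanced tree has depth $\log_2 n$, and at each internal node the combination invokes each child-preparation routine a number of times that grows with the accuracy to which the rotation must be realized. To guarantee a final error $\epsilon$ I would allocate error $\sim\epsilon/n$ to each of the $\O(n)$ nodes, so that a node's cost exceeds its children's cost by a factor $\O(n/\epsilon)$. Multiplying this factor across the $\log_2 n$ levels yields $T(n)=\O\!\left(T_{in}\,(n/\epsilon)^{\log_2 n}\right)=\O\!\left(T_{in}\,n^{\log(n/\epsilon)}\right)$, using the identity $(n/\epsilon)^{\log n}=n^{\log(n/\epsilon)}$.

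The hard part will be the rotation subroutine together with the error bookkeeping. Realizing the in-general non-integer power $R_{2\theta}^{\phi/(2\theta)}$ to the required precision is delicate when $\theta$ is small, i.e.\ when the two states being fused are nearly parallel, since this forces many reflection steps and is what produces the $n/\epsilon$ per-level factor. One must then show that the per-node rotation errors, compounded with the errors in the classically computed coefficients $x_{m+1,i}$, accumulate only additively up the tree rather than amplifying multiplicatively; controlling this accumulation is precisely what pins down both the per-level cost and the final quasi-polynomial bound $n^{\log(n/\epsilon)}$.
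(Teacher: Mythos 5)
Your proposal reconstructs exactly the method this paper attributes to \cite{shao2018linear} and sketches in its Linear-Sum-of-States preliminary: the two-reflection rotation $R_{2\theta}=(I-2|b\>\<b|)(I-2|a\>\<a|)$ with a tuned power as the base case, pairwise combination up a binary tree (the same child-parent-child structure as Figure~\ref{4.1_tree_t_m+1}), and per-node error allocation $\sim\epsilon/n$ whose $\O(n/\epsilon)$ cost factor compounds over depth $\log_2 n$ to give $(n/\epsilon)^{\log n}=n^{\log(n/\epsilon)}$. The paper itself supplies no proof of this statement---it is imported verbatim from the cited work---so your plan, including its correct identification of the genuinely delicate step (approximating the non-integer power $R_{2\theta}^{\phi/(2\theta)}$ and showing additive error accumulation), is consistent with both the paper's sketch and the stated bound.
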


The error of calculating $\boldsymbol{x}$ is more complex. Define vector $\boldsymbol{y}=-Z_{m+1}\boldsymbol{x}$. 
Since all parameters $c_{ij}=\<\bm{s}_i|\bm{s}_j\>, i,j\in [m]$ and $b_j=\<\bm{s}_{m+1}|\bm{s}_j\>, j \in [m]$ are estimated by Hadamard test (see Appendix B for more information), the vector $\boldsymbol{y}=\boldsymbol{C}_m^{-1}\boldsymbol{b}$ would have an error. 
Define matrix $\boldsymbol{\tilde{C}}_m=\boldsymbol{C}_m+\Delta\boldsymbol{C}_m$ and $\boldsymbol{\tilde{b}}=\boldsymbol{b}+\Delta\boldsymbol{b}$ which are estimations on $\boldsymbol{C}_m$ and $\boldsymbol{b}$. 
Suppose $|\tilde{c}_{ij} - c_{ij}| \leq \epsilon_1$ and $|\tilde{b}_i -b_i |\leq \epsilon_1$ are error bounds for $c_{ij}$ and $b_i$, respectively, $\forall i,j \in [m]$.
Denote $\boldsymbol{\tilde{y}}=\boldsymbol{\tilde{C}}_m^{-1}\boldsymbol{b}$ as the solution to the approximate linear equation and $\Delta \boldsymbol{y}=\boldsymbol{\tilde{y}}-\boldsymbol{y}$ as the error to $\boldsymbol{y}$. We have:
\begin{equation*}
\left\{
\begin{aligned}
&\boldsymbol{C}_m \boldsymbol{y}=\boldsymbol{b},\\
&(\boldsymbol{C}_m +\Delta \boldsymbol{C}_m)(\boldsymbol{y}+ \Delta \boldsymbol{y})=(\boldsymbol{b}+\Delta \boldsymbol{b}).
\end{aligned}
\right.
\end{equation*}
So there is:
\begin{align*}
\|\D \y\| &= \|(\C_m+\D\C_m)^{-1}(\D\b-\D\C_m\cdot\C_m^{-1}\b)\| \\
&\leq \|\C_m^{-1}\| \cdot \| (\boldsymbol{I}+\C_m^{-1}\D\C_m)^{-1}\|\cdot (\|\D\b\|+\|\D\C_m \cdot \C_m^{-1}\b\|) \\
& \leq \|\C_m^{-1}\|\cdot\frac{1}{1-\|\C_m^{-1}\D\C_m\|}\cdot(\|\D\b\|+\|\D\C_m\|\|\C_m^{-1}\|\|\b\|)\\
& \leq \frac{\|\C_m^{-1}\|}{1-\|\C_m^{-1}\| m \epsilon_1}\cdot(\sqrt{m}\epsilon_1+m^{3/2} \epsilon_1\|\C_m^{-1}\|) \leq \frac{2m^{3/2}\|\C_m^{-1}\|^2 \epsilon_1}{1-\|\C_m^{-1}\|m\epsilon_1}.
\end{align*}

The norm $\|\C_m^{-1}\|$ here denotes the largest norm of eigenvalues of matrix $\C_m^{-1}$.
Thus, for $\epsilon_2=3m^{3/2}\|\C_m^{-1}\|^2 \epsilon_1$, we have bound $\|\D\y\| \leq \epsilon_2$.

For $|\bm{t}_{m+1}\>=\sum_{i=1}^{m+1}x_{i}|\bm{s}_i\>$ and $|\tilde{\bm{t}}_{m+1}\>=\sum_{i=1}^{m+1}\tilde{x}_{i}|\bm{s}_i\>$, there is:
\begin{align*}
{\<\bm{t}_{m+1}|\tilde{\bm{t}}_{m+1}\>} &=
\sum_{i=1}^{m+1}\sum_{j=1}^{m+1}x_i c_{ij} \tilde{x}_j = \frac{1}{Z_{m+1}\tilde{Z}_{m+1}}[\sum_{i=1,j=1}^{m,m} y_i c_{ij} \tilde{y}_j -\sum_{i=1}^{m} y_i b_i -\sum_{j=1}^{m} b_j \tilde{y}_j +1] \\
&= \frac{1}{Z_{m+1}\tilde{Z}_{m+1}}[1-\sum_{i=1}^{m} y_i b_i]=\frac{Z_{m+1}}{\tilde{Z}_{m+1}}.
\end{align*}
Note that the form $|\tilde{\bm{t}}_{m+1}\>=\sum_{i=1}^{m+1}\tilde{x}_{i}|\bm{s}_i\>$ is not a normalized state: 
\begin{align*}
\||\tilde{\bm{t}}_{m+1}\>\|^2 &= \sum_{i=1}^{m+1}\sum_{j=1}^{m+1}\tilde{x}_i c_{ij} \tilde{x}_j = \frac{1}{\tilde{Z}_{m+1}^2} [\sum_{i=1,j=1}^{m,m} \tilde{y}_i c_{ij} \tilde{y}_j +1-2\sum_{i=1}^{m} \tilde{y}_i b_i] \\
&=\frac{1}{\tilde{Z}_{m+1}^2}[\sum_{i=1,j=1}^{m,m} \Delta y_i c_{ij} \Delta y_j +2\sum_{i=1,j=1}^{m,m}\Delta y_i c_{ij} y_j + \sum_{i=1,j=1}^{m,m} y_i c_{ij} y_j +1 -2\sum_{i=1}^{m} \Delta y_i b_i -2\sum_{i=1}^m y_i b_i]\\
&=\frac{1}{\tilde{Z}_{m+1}^2} [Z_{m+1}^2 + \sum_{i=1,j=1}^{m,m} \Delta y_i c_{ij} \Delta y_j].
\end{align*}
Thus, the overlap between state $|\bm{t}_{m+1}\>$ and $|\tilde{\bm{t}}_{m+1}\>$ is:
\begin{equation}
\frac{\<\bm{t}_{m+1}|\tilde{\bm{t}}_{m+1}\>}{\||\tilde{\bm{t}}_{m+1}\>\|} = \frac{Z_{m+1}}{\sqrt{Z_{m+1}^2 + \D\y^{T} \C_m\D\y}}.
\end{equation}

There is:
\begin{align}
\||\bm{t}_{m+1}\>-|\tilde{\bm{t}}_{m+1}\>/\||\tilde{\bm{t}}_{m+1}\>\|\| &= \sqrt{2-2 \frac{Z_{m+1}}{\sqrt{Z_{m+1}^2 + \D\y^{T} \C_m\D\y}}}\\
& \leq \sqrt{2-2\frac{Z_{m+1}}{\sqrt{Z_{m+1}^2 +\|\D\y\|^2\|\C_m\|}}}\\
&\leq \sqrt{2-2\frac{Z_{m+1}}{Z_{m+1}+\frac{\|\C_m\|\|\D\y\|^2}{2Z_{m+1}}}}\\
&\leq \frac{\|\C_m\|^{1/2}\|\D\y\|}{Z_{m+1}}\\
&\leq \frac{m^{1/2}\epsilon_2}{Z_{m+1}}=\frac{3m^2 \|\C_m^{-1}\|^2 \epsilon_1}{Z_{m+1}}.
\end{align}

Let $\epsilon_3/2=\frac{3m^2 \|\C_m^{-1}\|^2 \epsilon_1}{Z_{m+1}}$. It is clear that to obtain the state form $|\tilde{\bm{t}}_{m+1}\>=\sum_{i=1}^{m+1}\tilde{x}_{m+1,i}|\bm{s}_i\>$ takes time $\O(m^3 +m^2 T_H\epsilon_1^{-2})$, where $\||\tilde{\bm{t}}_{m+1}\>/\||\tilde{\bm{t}}_{m+1}\>\|-|\bm{t}_{m+1}\>\| \leq \epsilon_3/2$. 
By Theorem \ref{4.2_linear_state}, the implementation of state $|\tilde{\bm{t}}_{m+1}\>$ takes time $\O(T_H(m+1)^{\log(2(m+1)/\epsilon_3)})$ with error bounds $\epsilon_3/2$. 
Thus, we can implemente state $|\bm{t}_{m+1}\>$ by unitary in time $\O(m^3 +m^2 T_H\epsilon_1^{-2}+T_H(m+1)^{\log(2(m+1)/\epsilon_3)})$ with error bounds in $\epsilon_3$.

Now we consider the influence of imperfect implementation on state $|\tilde{\bm{t}}_m\>$ to Algorithm~\ref{Basis}. 
Denote $R_l=\prod_{i=1}^{l}(I-2|\bm{t}_i\>\<\bm{t}_i|)$ and $\tilde{R}_l=\prod_{i=1}^{l} (I-2|\tilde{\bm{t}}_{i}\>\<\tilde{\bm{t}}_i|)$. There is $\|R_l-\tilde{R}_l\|\leq 2l\epsilon_3$ by \cite{nielsen2002quantum}.

Note that the state 
\begin{equation*}	
|\phi_1^{(l)}\>=\frac{1}{\|\bm{H}\|_F} \sum_{j=1}^d \|\bm{h}_j\| |j\> \left\{ \left[|\bm{h}_j\>-\sum_{m=1}^{l}|\bm{t}_m\>\<\bm{t}_m|\bm{h}_j\>\right] |{0}\> - \sum_{m=1}^{l}|\bm{t}_m\>\<\bm{t}_m|\bm{h}_j\>|1\> \right\}
\end{equation*}
in step \ref{Basis_state} of Algorithm \ref{Basis} can be written as:
\begin{equation*}
	\frac{1}{\|\bm{H}\|_F}\sum_{j=1}^{d}\|\bm{h}_j\||j\>[\frac{R_l+I}{2}|\bm{h}_j\>|0\>+\frac{R_l-I}{2}|\bm{h}_j\>|1\>].
\end{equation*}

The probability of generating $0$ after the measurement on the last register is:
\begin{equation*}
P_l = \frac{1}{\|\bm{H}\|_F^2}\sum_{j=1}^d \|\bm{h}_j\|^2\|\frac{R_l +I}{2}|\bm{h}_j\>\|^2.
\end{equation*}

Similarly we define 
$|\tilde{\phi}_1^{(l)}\>=\frac{1}{\|\bm{H}\|_F}\sum_{j=1}^{d}\|\bm{h}_j\||j\>[\frac{\tilde{R}_l+I}{2}|\bm{h}_j\>|0\>+\frac{\tilde{R}_l-I}{2}|\bm{h}_j\>|1\>]$,

 and $\tilde{P}_l=\frac{1}{\|\bm{H}\|_F}\sum_{j=1}^d \|\bm{h}_j\|^2\|\frac{\tilde{R}_l +I}{2}|\bm{h}_j\>\|^2$ for the approximate case.
 
Since the objective of step \ref{Basis_record} is to obtain the index $g(l+1)$ such that the column state $|\bm{h}_{g(l+1)}\>=|\bm{s}_{l+1}\>$ is independent from $\{|\bm{s}_i\>\}_{i=1}^{l}$, we define $P_l^{false}$ as the probability of selecting out the state $|\bm{s}_{l+1}\> \in \{|\bm{s}_i\>\}_{i=1}^{l}$ in the approximate case.
Note that for state $|\bm{h}_j\> \in \{|\bm{s}_i\>\}_{i=1}^{l}$, $(R_l +I)|\bm{h}_j\>=0$, so there is:
\begin{align*}
P_l^{false} &= \frac{1}{\tilde{P}_l}\frac{1}{\|\bm{H}\|_F^2} \sum_{j:|\bm{h}_j\> \in \{|\bm{s}_i\>\}_{i=1}^{l}} \|\bm{h}_j\|^2\|\frac{\tilde{R}_l+I}{2}|\bm{h}_j\>\|^2 \\
& =\frac{\sum_{j:|\bm{h}_j\> \in \{|\bm{s}_i\>\}_{i=1}^{l}} \|\bm{h}_j\|^2\|\frac{\tilde{R}_l+I}{2}|\bm{h}_j\>\|^2}{\sum_{j=1}^d \|\bm{h}_j\|^2\|\frac{\tilde{R}_l+I}{2}|\bm{h}_j\>\|^2}\\
&= \frac{\sum_{j:|\bm{h}_j\> \in \{|\bm{s}_i\>\}_{i=1}^{l}} \|\bm{h}_j\|^2\|\frac{\tilde{R}_l-R_l}{2}|\bm{h}_j\>\|^2}{\sum_{j=1}^d \|\bm{h}_j\|^2 (1/2+\<\bm{h}_j|\tilde{R}_l-R_l |\bm{h}_j\>/2+\<\bm{h}_j|R_l|\bm{h}_j\>/2)}\\
&\leq \frac{l^2 \epsilon_3^2}{P_l-l\epsilon_3}
\leq \frac{l^2 \epsilon_3^2}{\frac{(r-l)\epsilon^2}{4\|\bm{H}\|_F^2}-l\epsilon_3}.
\end{align*}

Let $\epsilon_3=\frac{\epsilon^2}{8(r-1)\|\bm{H}\|_F^2}$, there is:
\begin{align}\label{successAlgorithm}
\sum_{l=0}^{r-1}P_l^{false} \leq \sum_{l=0}^{r-1}	\frac{l^2 \epsilon_3^2}{2(r-1)(r-l)\epsilon_3-l\epsilon_3} \leq \sum_{l=0}^{r-1} l \epsilon_3 =\frac{r(r-1)}{2}\frac{\epsilon^2}{8(r-1)\|\bm{H}\|_F^2} =\frac{r\epsilon^2}{16\|\bm{H}\|_F^2} \leq \frac{1}{4}.
\end{align}

Thus, by choosing $\epsilon_3=\frac{\epsilon^2}{8(r-1)\|\bm{H}\|_F^2}$, Algorithm \ref{Basis} could select out a complete basis $\{|\bm{s}_i\>\}_{i=1}^{r}$ with probability at least $\frac{3}{4}$.

Note that in Algorithm \ref{Basis} we need to perform operations $R_i=I-2|\bm{t}_i\>\<\bm{t}_i|$ for $i=1,2,\cdots,r-1$, which needs the information of parameters $\{c_{ij}\}_{i=1,j=1}^{r-1,r-1}$. In order to guarantee the success probability of Algorithm \ref{Basis} (equation (\ref{successAlgorithm})), the estimation on each $c_{ij}=\<\bm{s}_i|\bm{s}_j\>$ should have error bound $\epsilon_1= \min_{m\in [r-2]} \frac{Z_{m+1}}{6m^2 \|\C_m^{-1}\|^2}\epsilon_3 \geq \frac{\min_{m \in [r-1]}Z_m}{48r^3\|\C_r^{-1}\|^2\|\bm{H}\|_F^2}\epsilon^2$. 
Thus, the estimation on each $c_{ij}$ takes time $\O(T_H r^6 \|\bm{H}\|_F^4\epsilon^{-4})$ and the estimation on the parameter group $\{c_{ij}\}_{i=1,j=1}^{r-1,r-1}$ takes time $\O(T_H r^8 \|\bm{H}\|_F^4\epsilon^{-4})$. 
Since the state $|\bm{t}_m\>$ is implemented by Linear-Sum-of-States $|t_{m}\>=\sum_{i=1}^{m}x_{m,i}|\bm{s}_i\>$, additional time is required to solve $m-1$-dimensional equations for $m=2,3,\cdots r-1$, which results the time complexity $\O(r^4)$ in total. With given parameters $\{x_{m,i}\}_{i=1}^m$, the implementation of operation $R_m=I-2|\bm{t}_m\>\<\bm{t}_m|$ takes time $\O(T_H(m+1)^{\log(2(m+1)/\epsilon_3)})$ $\leq$ $\O(T_H(r)^{2\log(4r\|\bm{H}\|_F/\epsilon)})$. 

Denote $P_l$ as the probability of resulting ${0}$ after the measurement in Step \ref{Basis_measure} of Algorithm \ref{Basis}. In order to generate the required state, the measurement in Step \ref{Basis_measure} needs to be performed for $\O(1/P_l)$ times.

Suppose $\lambda_1^2 \geq \lambda_2^2 \geq \cdots \geq \lambda_r^2$, where $\lambda_i$ is the eigenvalue of $\bm{H}$. Since state $|\bm{t}_m\>$ is the linear sum of $\{|\bm{h}_j\>\}_{j=1}^{d}$, we can assume that $|\bm{t}_m\>$ has the decomposition $|\bm{t}_m\> = \sum_{i=1}^r w_{mi}|\bm{u}_i\>$, for all $m=1,2,\cdots, l$, where $\sum_{i=1}^{r} w_{mi}w_{ni} = \delta_{mn}$. There is:
\begin{align*}
P_l &=  	\frac{1}{\|\bm{H}\|_F^2} \sum_{j=1}^{d} \left[ \|\bm{h}_j\|^2 \| |\bm{h}_j\> - \sum_{m=1}^l |\bm{t}_m\>\<\bm{t}_m|\bm{h}_j\> \|^2 \right]\\
&= \frac{1}{\|\bm{H}\|_F^2} \sum_{j=1}^d \left[ \|\bm{h}_j\|^2 -\sum_{m=1}^l \|\bm{h}_j\|^2 |\<\bm{t}_m|\bm{h}_j\>|^2 \right]\\
&= 1-\frac{1}{\|\bm{H}\|_F^2} \sum_{j=1}^d \sum_{m=1}^l \left[ \sum_{i=1}^{r} w_{mi} \lambda_i u_i^{(j)} \right]^2\\
&= 1-\frac{1}{\|\bm{H}\|_F^2} \sum_{j=1}^d \sum_{m=1}^l \left[ \sum_{i=1}^{r} w_{mi}^2 \lambda_i^2 ({u_i^{(j)}})^2 +\sum_{i \neq k}^{r} w_{mi}w_{mk} \lambda_i \lambda_k u_i^{(j)}u_k^{(j)} \right]\\
&= 1-\frac{1}{\|\bm{H}\|_F^2} \sum_{m=1}^l \sum_{i=1}^r w_{mi}^2 \lambda_i^2 \\
&= 1-\frac{1}{\|\bm{H}\|_F^2} \sum_{i=1}^{r} c_i \lambda_i^2,
\end{align*}
where $c_i=\sum_{m=1}^{l} w_{mi}^2$.

Consider the $r$-dimensional vector $\bm{w}_m=\sum_{i=1}^r w_{mi}\bm{e}_i$. The vector set $\{\bm{w}_m\}_{m=1}^{l}$ forms an orthogonal basis in a $l$-dimensional subspace. Note that we can add $\bm{w}_{l+1},\cdots \bm{w}_{r}$ such that $\{\bm{w}_m\}_{m=1}^{r}$ forms an orthonormal basis in the whole $r$-dimensional space. Denote matrix $W=(\bm{w}_1,\bm{w}_2,\cdots,\bm{w}_r)$. Since $W^{T}W=I$. Since $W$ is unitary, there is:
\begin{equation}
\sum_{m=1}^r w_{mi}^{2}=1, \forall i \in [r].	
\end{equation}
Thus we have the upper bound: $c_i = \sum_{m=1}^{l} w_{mi}^2 \leq \sum_{m=1}^r w_{mi}^{2}=1$. Note that $\sum_{i=1}^{r} c_i = \sum_{i=1}^r \sum_{m=1}^{l} w_{mi}^2 = \sum_{m=1}^{l} \sum_{i=1}^r w_{mi}^2 = l$, so there is:
\begin{equation}\label{Basis_P}
P_l \geq 1-\frac{1}{\|\bm{H}\|_F^2}\sum_{i=1}^{l} \lambda_i^2 = \frac{\sum_{i=l+1}^{r} \lambda_i^2}{\|\bm{H}\|_F^2}.
\end{equation}

Note that for the case $|\lambda_i| \leq \epsilon/2$, $\tilde{\lambda}_i=0$ is a good estimation for the \textbf{NCF problem}, so we could further assume $|\lambda_i|>\epsilon/2$ for the general case and bound the inequality (\ref{Basis_P}) as $P_l \geq \frac{(r-l)\epsilon^2}{4\|\bm{H}\|_F^2}$. 

Denote $T_{basis}$ as the required time to implement Algorithm \ref{Basis} and $T_{R_i}$ as the required time to implement operation $R_i$. Since in each iteration of $l \in [r-1]$, Algorithm \ref{Basis} refers operation $R_1,R_2,\cdots,R_l$ for $1/P_l$ times, there is:
\begin{align*}
\O(T_{basis}) &=\O(T_H r^8 \|\bm{H}\|_F^4\epsilon^{-4})+\sum_{l=0}^{r-1} \frac{1}{P_l} \sum_{m=1}^{l}\O(T_{R_i}) \\
&\leq \O(T_H r^8 \|\bm{H}\|_F^4\epsilon^{-4})+ \frac{4\|\bm{H}\|_F^2}{\epsilon^2} \O(T_Hr^{2\log(4r\|\bm{H}\|_F/\epsilon)}) \sum_{l=0}^{r-1}\frac{l}{r-l} \\
&=\O(T_H\|\bm{H}\|_F^2 \epsilon^{-2} (r^8 \|\bm{H}\|_F^2 \epsilon^{-2}+ r^{1+2\log(4r\|\bm{H}\|_F/\epsilon)}))\\
&\leq \O(T_H \poly(r)\epsilon^{-2}r^{2\log(4r\|\bm{H}\|_F/\epsilon)}).
\end{align*}

\begin{theorem}\label{Basis_theorem}
The Algorithm \ref{Basis} takes time $\O(T_H \poly(r)\epsilon^{-2}r^{2\log(4r\|\bm{H}\|_F/\epsilon)})$ to find an index set $\{g(i)\}_{i=1}^{r}$, which forms a complete basis $\{|\bm{h}_{g(i)}\>\}_{i=1}^{r}$ with probability at least 3/4.	
\end{theorem}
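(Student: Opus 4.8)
The statement bundles two claims — a correctness guarantee (the returned index set spans the column space with probability at least $3/4$) and a running-time bound — so the plan is to establish each separately and then combine them. Everything needed has already been assembled in the error-analysis paragraphs preceding the theorem, so the proof is essentially a synthesis: invoke the success-probability bound of equation~(\ref{successAlgorithm}) together with the aggregated cost $\O(T_{basis})$ derived just above the statement.

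For the correctness claim, the plan is to track how the imperfect reflections $\tilde{R}_l$ corrupt the selection step. The key quantity is $P_l^{\mathrm{false}}$, the probability that the measurement in Step~\ref{Basis_record} returns an index whose column state already lies in $\mathrm{span}\{|\bm{s}_i\>\}_{i=1}^{l}$; in the ideal case this is zero because $(R_l+I)|\bm{h}_j\>=0$ for such columns. First I would bound the deviation of the approximate reflection by $\|R_l-\tilde{R}_l\|\le 2l\epsilon_3$ (from \cite{nielsen2002quantum}), substitute into the expression for $P_l^{\mathrm{false}}$, and use the lower bound $P_l\ge \frac{(r-l)\epsilon^2}{4\|\bm{H}\|_F^2}$ on the acceptance probability to obtain $P_l^{\mathrm{false}}\le \frac{l^2\epsilon_3^2}{P_l-l\epsilon_3}$. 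A union bound over the $r$ iterations then controls the total failure probability, and the final move is to fix $\epsilon_3=\frac{\epsilon^2}{8(r-1)\|\bm{H}\|_F^2}$ so that $\sum_{l=0}^{r-1}P_l^{\mathrm{false}}\le \frac14$, yielding success probability at least $3/4$.

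For the running-time claim, the plan is to add up the four cost contributions. First, the Gram entries $c_{ij}=\<\bm{s}_i|\bm{s}_j\>$ and the vector $\b$ are produced by Hadamard test to precision $\epsilon_1$; since the reflection error $\epsilon_3$ is amplified from $\epsilon_1$ through the factor $m^2\|\C_m^{-1}\|^2/Z_{m+1}$, I would invert this relation to pin down the required $\epsilon_1$, giving $\O(T_H r^8\|\bm{H}\|_F^4\epsilon^{-4})$ for the whole parameter group. Second, solving the $m$-dimensional linear systems~(\ref{4.1_x_2}) for the coefficients $\{x_{m,i}\}$ contributes $\O(r^4)$ classically. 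Third, each reflection $R_m$ is realized through the Linear-Sum-of-States construction (Theorem~\ref{4.2_linear_state}), costing $\O(T_H(m+1)^{\log(2(m+1)/\epsilon_3)})\le \O(T_H r^{2\log(4r\|\bm{H}\|_F/\epsilon)})$. Fourth, the measurement in Step~\ref{Basis_measure} must be repeated $\O(1/P_l)$ times. Summing $\sum_{l=0}^{r-1}\frac{1}{P_l}\sum_{m=1}^{l}\O(T_{R_i})$ and bounding $\sum_l \frac{l}{r-l}$ yields the stated $\O(T_H\poly(r)\epsilon^{-2}r^{2\log(4r\|\bm{H}\|_F/\epsilon)})$.

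The hard part is the error-propagation bookkeeping that couples the two claims: one must keep $\epsilon_1$ tight enough that the cumulative reflection error $\epsilon_3$ respects the union-bound budget $\sum_l P_l^{\mathrm{false}}\le \frac14$, yet loose enough that the dominant cost stays quasi-polynomial. The delicate quantities are the condition number $\|\C_m^{-1}\|$ and the normalizers $Z_{m+1}=\sqrt{1-\sum_{i}\<\bm{t}_i|\bm{s}_{m+1}\>^2}$, either of which could in principle degenerate; controlling them — together with the lower bound on $P_l$ that rests on the $\epsilon$-separation / $|\lambda_i|>\epsilon/2$ assumption — is where the care concentrates, and the factor $r^{2\log(4r\|\bm{H}\|_F/\epsilon)}$ is an unavoidable inheritance from the Linear-Sum-of-States subroutine rather than an artifact of the analysis.
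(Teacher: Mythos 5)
Your proposal is correct and follows essentially the same route as the paper: the paper's own justification of Theorem~\ref{Basis_theorem} is precisely the synthesis you describe, namely the false-selection bound $\sum_{l=0}^{r-1}P_l^{false}\leq 1/4$ obtained by setting $\epsilon_3=\frac{\epsilon^2}{8(r-1)\|\bm{H}\|_F^2}$ (equation~(\ref{successAlgorithm})), combined with the cost accounting $\O(T_{basis})=\O(T_H r^8\|\bm{H}\|_F^4\epsilon^{-4})+\sum_{l}\frac{1}{P_l}\sum_{m\leq l}\O(T_{R_i})$ using $P_l\geq\frac{(r-l)\epsilon^2}{4\|\bm{H}\|_F^2}$ and the Linear-Sum-of-States bound for each reflection. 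Your identification of the delicate quantities ($\|\C_m^{-1}\|$, $Z_{m+1}$, and the $|\lambda_i|>\epsilon/2$ assumption underlying the lower bound on $P_l$) matches where the paper's analysis concentrates its care.
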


We also provide Lemma \ref{4.2_check} which gives the time complexity of confirming whether a given set $\{\bm{s}_i\}_{i=1}^{r}$ is linear independent or not. The proof is in Appendix A.
\begin{lemma}\label{4.2_check}
It takes $\O(r^3)$ time to check whether the set $\{\bm{s}_i\}_{i=1}^{r}$ is linear independent when the classical access to Hessian $\bm{H}$ is given, where $\bm{s}_i$ is sampled from column vectors of matrix $\bm{H}$.
\end{lemma}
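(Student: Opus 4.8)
The plan is to reduce the linear-independence test to a single nonsingularity test on the $r\times r$ Gram matrix $\C=\{c_{ij}\}_{ij}=\{\<\bm{s}_i|\bm{s}_j\>\}_{ij}$ already introduced in Theorem~\ref{solution_1}. The classical fact I would invoke is the Gram-determinant criterion: a finite family $\{\bm{s}_i\}_{i=1}^{r}$ is linearly independent if and only if its Gram matrix $\C$ is nonsingular, equivalently $\det\C\neq 0$; moreover $\C$ is positive semidefinite and is positive definite exactly in the independent case. This is precisely what converts a question about $r$ vectors living in $\R^{d}$ into a determinant/rank question on a fixed $r\times r$ object, which is where the $\poly(r)$-only cost is supposed to originate.

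First I would assemble $\C$ by computing its $\binom{r}{2}+r$ independent entries $c_{ij}=\bm{s}_i^{T}\bm{s}_j$, reading the columns $\bm{s}_i=\bm{h}_{g(i)}$ through the assumed classical access to $\bm{H}$. Second, I would run a symmetric Gaussian elimination (an $LDL^{T}$/Cholesky-type factorization) on $\C$: the set is independent if and only if the factorization completes with all pivots strictly positive, equivalently $\det\C>0$, while a zero pivot exhibits a linear dependence. Standard Gaussian elimination on an $r\times r$ matrix costs $\O(r^{3})$ arithmetic operations, which is exactly the claimed bound. I would also remark that this test is exact in the idealized arithmetic model; with only estimated inner products one would instead threshold the smallest pivot, but that refinement belongs to the read-out error analysis rather than to this combinatorial count.

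The main obstacle I anticipate is controlling the cost of building $\C$ so that it does not dominate the $\O(r^{3})$ elimination. Each inner product $c_{ij}$ is a sum over the $d$ coordinates of the columns of $\bm{H}$, so a naive formation of $\C$ would cost $\O(r^{2}d)$ and the overall bound would read $\O(r^{2}d+r^{3})$. To land at the stated $\O(r^{3})$ I would argue that the relevant correlations $\bm{s}_i^{T}\bm{s}_j=(\bm{H}^{T}\bm{H})_{g(i),g(j)}$ are entries of a fixed matrix that the assumed classical data structure exposes in $\O(1)$ per query (or, in the algorithmic context, have already been estimated during Algorithm~\ref{Basis}), so that assembling the $r\times r$ matrix costs only $\O(r^{2})$ and the determinant evaluation is the dominant term. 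The remaining ingredients — the equivalence of independence with $\det\C\neq 0$ and the $\O(r^{3})$ elimination count — are standard and should need only a brief justification.
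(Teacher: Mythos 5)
Your reduction to the Gram matrix $\C$ is a natural starting point, and you correctly identified the real obstacle: assembling $\C$ from classical access to $\bm{H}$ costs $\O(r^2 d)$, since each entry $c_{ij}=\bm{s}_i^{T}\bm{s}_j$ is a sum over $d$ coordinates. But your way around this obstacle does not hold up. ``Classical access to the Hessian $\bm{H}$'' (and the binary-tree data structure of Section~\ref{QNCD_pre_technic_oracle}) exposes entries $h_{ij}$ and column norms $\|\bm{h}_i\|$, not entries of $\bm{H}^{T}\bm{H}$; there is no $\O(1)$ query for $(\bm{H}^{T}\bm{H})_{g(i),g(j)}$, so your $\O(r^2)$ assembly step is an unsupported assumption rather than a consequence of the hypothesis. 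The fallback of reusing the estimates $\tilde{c}_{ij}$ from Algorithm~\ref{Basis} also fails the lemma as stated: those are inexact Hadamard-test estimates, so a determinant or pivot test on them cannot certify exact linear independence, and the lemma is a purely classical statement whose only hypothesis is classical access to $\bm{H}$.

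The paper closes this gap with an idea your proposal is missing: because the $\bm{s}_i$ are columns of the symmetric rank-$r$ matrix $\bm{H}$ itself, one never needs $\C$ at all. Writing $\bm{H}=\sum_{j=1}^{r}\lambda_j \bm{u}_j\bm{u}_j^{T}$ and expanding both determinants over the eigenbasis, the paper proves
\[
\det(\C)\Big/\prod_{i=1}^{r}\lambda_i^{2} \;=\; \det(\bm{H}')\Big/\prod_{i=1}^{r}\lambda_i ,
\]
where $\bm{H}'$ is the $r\times r$ principal submatrix with entries $h'_{jk}=h_{g(j),g(k)}$. Since every nonzero eigenvalue $\lambda_i\neq 0$, this proportionality gives $\det(\C)\neq 0$ if and only if $\det(\bm{H}')\neq 0$; and $\bm{H}'$ can be filled in with $r^2$ direct entry queries to $\bm{H}$ and its determinant computed classically in $\O(r^3)$ time. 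This determinant identity is precisely the step that converts your $\O(r^2 d)$ bottleneck into an $\O(r^2)$ read-off, and it genuinely uses the fact that the vectors are columns of the symmetric matrix $\bm{H}$ (for arbitrary vectors in $\R^{d}$ the $\O(r^2 d)$ cost of forming a Gram matrix is unavoidable). Without this reduction, or some equivalent substitute, your argument does not reach the claimed $\O(r^3)$ bound.
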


\subsection{Coordinates Estimation}
\label{4_coordinate_estimate}

Assume the complete basis \(\{ |\bm{s}_1 \>,\  |\bm{s}_2 \>,\  \cdots, |\bm{s}_r \> \}\) has been selected out in Section \ref{4_basis}. 
Thus the read-out problem could be viewed as solving the equation \(|\bm{u}_t\> = \sum_{i=1}^r x_i |\bm{s}_i\>\), where \(x_i \in \R\) are unknown variables. We propose Algorithm \ref{coordinate} to calculate the coordinate $x_j$.
The main idea is to solve the \(r\)-dimensional linear equation \(\C\bm{x}=\bm{b}\), where \(b_i = \< \bm{u}_t | \bm{s}_i \>\) and \(c_{ij} = \< \bm{s}_i | \bm{s}_j \>\) for \(i,j \in [r]\). This equation could be solved classically in at most \(\O(r^3)\) time. 
Note that we can only get the approximation to \(c_{ij}\) or \(b_i\) instead of the exact value. Theorem \ref{84y564} verifies the impact of the approximate error to $c_{ij}$ or $b_i$ on the read-out of the target state. The proof of Theorem \ref{84y564} is in the Appendix A. 

  \begin{algorithm}[t]
  \caption{Coordinate Estimation}
  \label{coordinate}
  \begin{algorithmic}[1]
    \Require
     Quantum access to oracle $U_H$. The complete basis $\{|\bm{s}_i\>\}_{i=1}^{r}$. The target vector $|\bm{u}_t\>$.
    \Ensure
    Estimation $\tilde{\bm{x}}$ to the coordinate \(|\bm{u}_t\> = \sum_{i=1}^r x_i |\bm{s}_i\>\).
     \For{$i=1$ to $r$}
     \State Estimate the overlap $\<\bm{u}_t|\bm{s}_i\>$ and store the value in $\tilde{b}_{i}$.
     \For{$j=1$ to $r$}
     \State Estimate the overlap $\<\bm{s}_i|\bm{s}_j\>$ and store the value in $\tilde{c}_{ij}$.
     \EndFor
     \EndFor
    \State Create the vector $\bm{\tilde{b}} \in \R^{r}$ whose $i$-th component is $\tilde{b}_{i}$. Create the matrix $\tilde{\bm{C}} \in \R^{r \times r}$ whose $ij$-th component is $\tilde{c}_{ij}$.
    \State Solve the linear system $\tilde{\bm{C}}\bm{x}=\bm{\tilde{b}}$ and output the solution $\tilde{\bm{x}}$.
  \end{algorithmic}
\end{algorithm}

\begin{theorem}
\label{84y564}
Suppose \(\tilde{c}_{jk}\) is the \(\epsilon_1\)-approximation to \(c_{jk}=\< \bm{s}_j | \bm{s}_k \>\) and \(\tilde{b}_j\) is the \(\epsilon_2\)-approximation to \(b_j=\< \bm{u}_t | \bm{s}_j \>\) \(\forall j,k\in[r]\), where $\epsilon_1 = \frac{\epsilon}{6r^2 \|\C^{-1}\|^2}$ and $\epsilon_2=\frac{\epsilon}{6r\|\C^{-1}\|}$. Denote vector \(\tilde{\bm{x}} \in \R^r\) as the solution of \(\tilde{\C}\bm{x}=\tilde{\bm{b}}\). Then \(\tilde{\bm{x}}\) could lead an approximate eigenvector \(\tilde{\bm{u}}_t=\sum_{j=1}^{r}\tilde{x}_j \bm{s}_j\), such that $\|\boldsymbol{\tilde{u}}_t-\boldsymbol{u}_t\|\leq \epsilon/2$.
\end{theorem}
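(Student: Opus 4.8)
The plan is to treat this as a perturbation analysis for the linear system $\C\bm{x}=\b$, closely paralleling the error analysis already carried out for the Complete Basis Selection in Section~\ref{4.2_error}. First I would record the exact situation: since $\{\bm{s}_i\}_{i=1}^r$ is a complete basis of the column space and $\bm{u}_t$ is an eigenvector belonging to a non-zero eigenvalue, $\bm{u}_t$ lies in $\mathrm{span}\{\bm{s}_i\}$, so there is an exact coordinate vector $\bm{x}$ with $\bm{u}_t=\sum_{j=1}^r x_j\bm{s}_j$. Taking the inner product of this identity with each $\bm{s}_i$ shows that this exact $\bm{x}$ solves $\C\bm{x}=\b$, where $c_{ij}=\<\bm{s}_i|\bm{s}_j\>$ and $b_i=\<\bm{u}_t|\bm{s}_i\>$. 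Hence the only source of error is that Algorithm~\ref{coordinate} solves the perturbed system $\tilde{\C}\bm{x}=\tilde{\b}$ instead, and the task reduces to bounding how far $\tilde{\bm{x}}$ drifts from $\bm{x}$ and then how that propagates to $\tilde{\bm{u}}_t$.

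Writing $\tilde{\C}=\C+\Delta\C$ and $\tilde{\b}=\b+\Delta\b$ with $\Delta\bm{x}=\tilde{\bm{x}}-\bm{x}$, I would reuse the identity $\Delta\bm{x}=(\C+\Delta\C)^{-1}(\Delta\b-\Delta\C\cdot\C^{-1}\b)$ together with the Neumann bound $\|(\boldsymbol{I}+\C^{-1}\Delta\C)^{-1}\|\le (1-\|\C^{-1}\Delta\C\|)^{-1}$, exactly as in Section~\ref{4.2_error}. The entrywise hypotheses $|\tilde{c}_{jk}-c_{jk}|\le\epsilon_1$ and $|\tilde{b}_j-b_j|\le\epsilon_2$ give the matrix and vector bounds $\|\Delta\C\|\le r\epsilon_1$ and $\|\Delta\b\|\le\sqrt{r}\,\epsilon_2$ (Frobenius dominating spectral), and since each $b_j=\<\bm{u}_t|\bm{s}_j\>$ satisfies $|b_j|\le 1$ we have $\|\b\|\le\sqrt{r}$. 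Combining these yields
\begin{equation*}
\|\Delta\bm{x}\|\le\frac{\|\C^{-1}\|}{1-\|\C^{-1}\|r\epsilon_1}\bigl(\sqrt{r}\,\epsilon_2+r^{3/2}\|\C^{-1}\|\epsilon_1\bigr).
\end{equation*}

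The final step converts the coordinate error into a state error. Since $\tilde{\bm{u}}_t-\bm{u}_t=\sum_j(\tilde{x}_j-x_j)\bm{s}_j$, one has $\|\tilde{\bm{u}}_t-\bm{u}_t\|^2=\Delta\bm{x}^{T}\C\,\Delta\bm{x}\le\|\C\|\,\|\Delta\bm{x}\|^2$, and because $\C$ is a Gram matrix of unit vectors its eigenvalues are non-negative with trace $r$, giving $\|\C\|\le r$ and thus $\|\tilde{\bm{u}}_t-\bm{u}_t\|\le\sqrt{r}\,\|\Delta\bm{x}\|$. Substituting the prescribed $\epsilon_1=\epsilon/(6r^2\|\C^{-1}\|^2)$ and $\epsilon_2=\epsilon/(6r\|\C^{-1}\|)$ makes the two terms $\sqrt{r}\,\epsilon_2$ and $r^{3/2}\|\C^{-1}\|\epsilon_1$ each equal to $\epsilon/(6\sqrt{r}\|\C^{-1}\|)$, so that $\|\Delta\bm{x}\|\le (1-\|\C^{-1}\|r\epsilon_1)^{-1}\cdot\epsilon/(3\sqrt{r})$ and hence $\|\tilde{\bm{u}}_t-\bm{u}_t\|\le (1-\|\C^{-1}\|r\epsilon_1)^{-1}\cdot\epsilon/3$.

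I expect the main obstacle to be the bookkeeping needed to close the bound at exactly $\epsilon/2$, which requires checking that the denominator $1-\|\C^{-1}\|r\epsilon_1$ stays bounded away from zero. Using $\|\C^{-1}\|\ge 1$ (the mean eigenvalue of $\C$ equals $\mathrm{tr}(\C)/r=1$, so $\lambda_{\min}(\C)\le 1$) together with $\epsilon<1\le r$ gives $\|\C^{-1}\|r\epsilon_1=\epsilon/(6r\|\C^{-1}\|)\le 1/6$, whence the prefactor is at most $6/5$ and $\|\tilde{\bm{u}}_t-\bm{u}_t\|\le (6/5)(\epsilon/3)<\epsilon/2$. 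The other points demanding care are the clean justification of $\|\C\|\le r$ and $\|\b\|\le\sqrt{r}$, and verifying that $\|\C^{-1}\|$ is interpreted consistently as the largest eigenvalue norm of $\C^{-1}$ in line with the convention fixed in Section~\ref{4.2_error}.
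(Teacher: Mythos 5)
Your proposal is correct and follows essentially the same route as the paper's own proof: the same perturbation identity $\Delta\bm{x}=(\C+\Delta\C)^{-1}(\Delta\b-\Delta\C\cdot\C^{-1}\b)$ with the Neumann-series bound, the same norm estimates $\|\Delta\C\|\le r\epsilon_1$, $\|\Delta\b\|\le\sqrt{r}\epsilon_2$, $\|\b\|\le\sqrt{r}$, $\|\C\|\le r$, and the same Gram-matrix conversion $\|\tilde{\bm{u}}_t-\bm{u}_t\|=\sqrt{\Delta\bm{x}^{T}\C\Delta\bm{x}}\le\sqrt{r}\,\|\Delta\bm{x}\|$. The only difference is that you explicitly verify the final numerical step (substituting $\epsilon_1,\epsilon_2$ and using $\|\C^{-1}\|\ge 1$ to keep the denominator above $5/6$), which the paper simply asserts as $\|\Delta\bm{x}\|\le\epsilon/(2\sqrt{r})$ without detail.
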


We propose several quantum algorithms in Appendix B to estimate overlap \(b_i = \< \bm{u}_t | \bm{s}_i \>\) and \(c_{ij} = \< \bm{s}_i | \bm{s}_j \>\), which are based on the Quantum SWAP Test\cite{buhrman2001quantum}. 
Our proposed quantum algorithms could present $\epsilon_1$-estimation to \(c_{ij} = \< \bm{s}_i | \bm{s}_j \>\) in time $\O(T_H\epsilon_1^{-2})$ and $\epsilon_2$-estimation to \(b_{i} = \< \bm{u}_t | \bm{s}_i \>\) in time $\O((T_{Input}+T_H)(\epsilon_2^{-4}+\epsilon_2^{-2}\|\bm{H}\|_F^2))$, where $T_{Input}$ is the time to generate state $|\bm{u}_t\>$. 
Since the time complexity to generate target state is $\O(T_H\|\bm{H}\|_F^3 \polylog(d) \epsilon^{-1})$ as proposed in Theorem~\ref{state_finding}, we could derive Corollary~\ref{solution}.

\begin{corollary}
\label{solution}
The classical description of the target state ${|\bm{u}}_t\>=\sum_{i=1}^{r} {x}_i |\bm{s}_i\>$ could be presented in time \(\O(T_H \polylog(d)\poly(r) \epsilon^{-5})\) with error bounds in \(\epsilon/2\), when the complete basis set $\{\bm{s}_j\}_{j=1}^{r}$ is given.
\end{corollary}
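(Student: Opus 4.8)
The plan is to read off the running time of Algorithm~\ref{coordinate} directly from the accuracy requirement in Theorem~\ref{84y564} together with the per-overlap estimation costs quoted just before the corollary, and then to collapse all factors of $\|\bm{H}\|_F$ and $\|\C^{-1}\|$ into $\poly(r)$ using Lemma~\ref{bounded_norm} and the $\epsilon$-separation assumption on $\bm{H}$.

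First I would fix the required precisions. Theorem~\ref{84y564} guarantees $\|\tilde{\bm{u}}_t-\bm{u}_t\|\le\epsilon/2$ once every Gram entry $c_{jk}$ is estimated to accuracy $\epsilon_1=\frac{\epsilon}{6r^2\|\C^{-1}\|^2}$ and every overlap $b_j$ to accuracy $\epsilon_2=\frac{\epsilon}{6r\|\C^{-1}\|}$. Algorithm~\ref{coordinate} performs $r^2$ estimations of the first type and $r$ of the second, followed by a classical solve of the $r$-dimensional system $\tilde{\C}\bm{x}=\tilde{\b}$ in $\O(r^3)$ time, which will turn out to be negligible.

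Next I would substitute the per-estimation costs from Appendix~B. Each $c_{ij}$ costs $\O(T_H\epsilon_1^{-2})$, so the full Gram matrix costs $\O(T_H r^2\epsilon_1^{-2})=\O(T_H r^6\|\C^{-1}\|^4\epsilon^{-2})$. Each $b_i$ costs $\O((T_{Input}+T_H)(\epsilon_2^{-4}+\epsilon_2^{-2}\|\bm{H}\|_F^2))$, and substituting $T_{Input}=\O(T_H\|\bm{H}\|_F^3\polylog(d)\epsilon^{-1})$ from Theorem~\ref{state_finding}, the dominant contribution is the product of $T_{Input}$ with $\epsilon_2^{-4}=\O(r^4\|\C^{-1}\|^4\epsilon^{-4})$. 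Summing over the $r$ overlaps yields a leading term $\O(T_H\polylog(d)\,\|\bm{H}\|_F^3\,r^5\|\C^{-1}\|^4\epsilon^{-5})$, which dominates both the Gram-matrix estimation and the classical solve.

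Finally I would bound the nuisance factors. Lemma~\ref{bounded_norm} gives $\|\bm{H}\|_F\le\sqrt{r}L$, so $\|\bm{H}\|_F^3=\O(\poly(r))$; treating $\|\C^{-1}\|$ as $\poly(r)$ under the $\epsilon$-separation assumption likewise absorbs $\|\C^{-1}\|^4$ into $\poly(r)$, which delivers the stated bound $\O(T_H\polylog(d)\poly(r)\epsilon^{-5})$. The main obstacle is bookkeeping rather than insight: one must confirm that the $\epsilon^{-5}$ scaling genuinely arises from pairing the $\epsilon^{-1}$ state-preparation cost hidden in $T_{Input}$ with the $\epsilon^{-4}$ factor from $\epsilon_2^{-4}$, and that no dependence on $d$ survives beyond the $\polylog(d)$ already carried by $T_{Input}$.
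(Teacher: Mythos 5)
Your proposal is correct and follows essentially the same route as the paper: the paper derives the corollary in the paragraph immediately preceding it by combining the precision requirements $\epsilon_1=\frac{\epsilon}{6r^2\|\C^{-1}\|^2}$, $\epsilon_2=\frac{\epsilon}{6r\|\C^{-1}\|}$ from Theorem \ref{84y564} with the Appendix B per-overlap costs and $T_{Input}=\O(T_H\|\bm{H}\|_F^3\polylog(d)\epsilon^{-1})$ from Theorem \ref{state_finding}, exactly as you do, and the $\epsilon^{-5}$ indeed comes from pairing $T_{Input}$ with $\epsilon_2^{-4}$. One caveat: your claim that the $\epsilon$-separation assumption bounds $\|\C^{-1}\|$ by $\poly(r)$ is not actually supported (that assumption concerns eigenvalues of $\bm{H}$, not of the Gram matrix $\C$), but the paper performs the same silent absorption of $\|\C^{-1}\|$ into $\poly(r)$ in stating the corollary, so this is a looseness shared with the paper rather than a gap introduced by your argument.
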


Considering the time complexity $\O(T_H\|\bm{H}\|_F^5\polylog(d)\epsilon^{-1})$ to label the proper eigenvalue and the time complexity $\O(T_H \poly(r)\epsilon^{-2}r^{2\log(4r\|\bm{H}\|_F/\epsilon)})$ to generate the complete basis set, we could solve the Negative Curvature Finding problem in time \(\O(T_H\polylog(d)\poly(r)\epsilon^{-2}(\epsilon^{-3}+r^{2\log(4r\|\bm{H}\|_F/\epsilon)}))\) by providing the target vector in the form $\bm{u}_t=\sum_{i=1}^r x_i \bm{h}_{g(i)}/\|\bm{h}_{g(i)}\|$ with error bounds in $\epsilon$ or making the none-vector statement.

\section{Conclusion}
\label{QNCD_conclu}
We propose an efficient quantum model for the Negative Curvature Finding problem, which is important for many second-order methods in non-convex optimization. The proposed quantum algorithm could produce the target state in time \(\O(T_H\epsilon^{-1}\poly(r)\polylog(d))\) with probability \(1-1/\poly(d)\), which runs exponentially faster than existing classical methods. Moreover, we propose an efficient hybrid quantum-classical algorithm for the efficient classical read-out of the target state with time complexity \(\O(T_H\poly(r)\polylog(d)\epsilon^{-2}(\epsilon^{-3}+r^{2\log(4r\|\bm{H}\|_F/\epsilon)}))\), which is exponentially faster on the degree of $d$ than existing general quantum state read-out methods.

\bibliographystyle{unsrt}
\bibliography{NIPS2019}

\begin{appendices}

\section{}

\textbf{The proof of Lemma \ref{bounded_norm}:}

\begin{proof}
Assume \(\lambda_1 \leq \lambda_2 \leq \cdots \leq \lambda_d\) are eigenvalues of $\bm{H}$, we have:

\begin{equation*}
\min_{\|\bm{v}\|=1} \bm{v}^{T}\bm{H}\bm{v} \leq \lambda_j \leq \max_{\|\bm{v}\|=1} \bm{v}^{T}\bm{Hv},\ \forall j \in [d].
\end{equation*}

By the definition of the Hessian matrix, for unit vector \(\bm{v}\), we have:
\begin{equation*}
\bm{Hv} =\nabla^2 f(\bm{x}) \bm{v} = \lim_{{h} \rightarrow 0} \frac{\nabla f(\bm{x}+h\bm{v})- \nabla f(\bm{x})}{{h}}.
\end{equation*}
From above equation, we can obtain: 

\quad\quad \(\bm{v}^{T}\bm{Hv} \leq \|\bm{v}\|\cdot \|\bm{Hv}\| \leq \lim_{\bm{h} \rightarrow 0} \frac{\|\nabla f(\bm{x}+h\bm{v})- \nabla f(\bm{x})\|}{h} \leq \lim_{{h} \rightarrow 0} \frac{L \|h\bm{v}\|}{h} = L\),

and  \(\bm{v}^{T}\bm{Hv} \geq -\|\bm{v}\|\cdot \|\bm{Hv}\| \geq -\lim_{h \rightarrow 0} \frac{\|\nabla f(\bm{x}+h\bm{v})- \nabla f(\bm{x})\|}{h} \geq -\lim_{h \rightarrow 0} \frac{L \|h\bm{v}\|}{h} = -L\).

Thus, the eigenvalue \(\lambda_j\) is bounded in \([-L,L]\) for all \(j \in [d]\).

We have 
\(\|\bm{H}\|_F^2=\sum_{i} \sum_{j} h_{ij}^2 = Tr(\bm{H}\cdot \bm{H}) = \sum_j{\lambda_j (\bm{H}^2)} \leq r L^2\) , so \(\|\bm{H}\|_F \leq \sqrt{r} L\).
\end{proof}

\begin{lemma}
{\rm \textbf{Hoeffding's inequality\cite{hoeffding1994probability}}}

Suppose \(X_1, X_2, \cdots, X_n\) are independent random variables with bounds \(X_i \in [a_i, b_i], \forall i \in [n]\). Define \(\overline{X}=\frac{1}{n}\sum_{i=1}^{n}X_i\) , then \(\forall \epsilon >0\), we have:
\begin{equation}
P(\overline{X}-E[\overline{X}] \ge \epsilon) \le \exp{(-\frac{2n^2 \epsilon^2}{\sum_{i=1}^{n} (b_i - a_i)^2})},
\end{equation}
and
\begin{equation}
P(\overline{X}-E[\overline{X}] \le -\epsilon) \le \exp{(-\frac{2n^2 \epsilon^2}{\sum_{i=1}^{n} (b_i - a_i)^2})}.
\end{equation}

\end{lemma}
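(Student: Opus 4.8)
The plan is to establish the upper-tail bound by the exponential-moment (Chernoff) method and then recover the lower tail by symmetry. First I would fix $s>0$ and apply Markov's inequality to the nonnegative variable $e^{s(\overline{X}-E[\overline{X}])}$, which gives
\begin{equation*}
P(\overline{X}-E[\overline{X}] \ge \epsilon) \le e^{-s\epsilon}\, E\!\left[e^{s(\overline{X}-E[\overline{X}])}\right].
\end{equation*}
Since $\overline{X}-E[\overline{X}]=\frac{1}{n}\sum_{i=1}^{n}(X_i-E[X_i])$ and the $X_i$ are independent, the moment generating function factorizes,
\begin{equation*}
E\!\left[e^{s(\overline{X}-E[\overline{X}])}\right] = \prod_{i=1}^{n} E\!\left[e^{(s/n)(X_i-E[X_i])}\right],
\end{equation*}
so the whole problem reduces to bounding the one-variable exponential moments.

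The key ingredient is Hoeffding's lemma: for a zero-mean random variable $Y\in[a,b]$ one has $E[e^{tY}]\le \exp\!\left(t^2(b-a)^2/8\right)$ for every real $t$. I would derive it from the convexity of $y\mapsto e^{ty}$, which on $[a,b]$ yields
\begin{equation*}
e^{ty} \le \frac{b-y}{b-a}\,e^{ta} + \frac{y-a}{b-a}\,e^{tb}, \qquad y\in[a,b].
\end{equation*}
Taking expectations and using $E[Y]=0$ bounds $E[e^{tY}]$ by $e^{\phi(u)}$, where $u=t(b-a)$, $p=-a/(b-a)$, and $\phi(u)=-pu+\log(1-p+pe^{u})$. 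A direct computation shows $\phi(0)=\phi'(0)=0$ and $\phi''(u)=q(1-q)\le 1/4$ with $q=pe^{u}/(1-p+pe^{u})\in[0,1]$, so Taylor's theorem with remainder gives $\phi(u)\le u^2/8$, i.e.\ the claimed bound.

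Applying the lemma to $Y_i=X_i-E[X_i]$, which lies in an interval of width $b_i-a_i$, with $t=s/n$, the factorized bound becomes
\begin{equation*}
E\!\left[e^{s(\overline{X}-E[\overline{X}])}\right] \le \exp\!\left(\frac{s^2}{8n^2}\sum_{i=1}^{n}(b_i-a_i)^2\right).
\end{equation*}
Substituting into the Chernoff inequality and minimizing the exponent $-s\epsilon+\frac{s^2}{8n^2}\sum_{i=1}^{n}(b_i-a_i)^2$ over $s>0$ --- the minimizer is $s^{\ast}=4n^2\epsilon/\sum_{i=1}^{n}(b_i-a_i)^2$ --- produces exactly $\exp\!\left(-2n^2\epsilon^2/\sum_{i=1}^{n}(b_i-a_i)^2\right)$. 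The second (lower-tail) inequality follows by running the identical argument on $-X_1,\dots,-X_n$, whose defining intervals have the same widths $b_i-a_i$. The only genuine obstacle is Hoeffding's lemma, and within it the uniform estimate $\phi''\le 1/4$; everything else is routine manipulation of the Chernoff bound together with a one-variable optimization.
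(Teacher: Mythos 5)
Your proof is correct and complete: the Chernoff (exponential-moment) bound, the factorization over independent summands, Hoeffding's lemma via convexity and the estimate $\phi''(u)=q(1-q)\le 1/4$, and the optimization $s^{\ast}=4n^2\epsilon/\sum_{i=1}^{n}(b_i-a_i)^2$ all check out and reproduce exactly the stated exponent, with the lower tail correctly obtained by negating the variables. Note, however, that the paper itself offers no proof of this lemma at all --- it is imported by citation to Hoeffding's original work and used as a black-box tool in the proof of the positive-negative eigenvalue discrimination theorem --- so there is nothing in the paper to compare against; your self-contained derivation is simply the canonical argument from the cited literature, and it is more than the paper provides.
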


\textbf{The proof of Theorem \ref{positive_negative}:}
\begin{proof}

The measurement in step \ref{Qcurvature_negaeigen_mea} of Algorithm \ref{negative_eigenvalue} outputs \(1\) with probability:
\begin{equation}
P(1) = \|\frac{|\bm{Pu}\> -|\bm{Qu}\>}{2}\|^2 = \frac{1}{4} ( \langle \bm{Pu}| - \langle \bm{Qu}|)(|\bm{Pu}\rangle -|\bm{Qu}\rangle). 
\end{equation}

Note that \(\langle \bm{Pu}|\bm{Qu} \rangle = \bm{u}^{T} \bm{P}^{T} \bm{Qu} = \frac{1}{\|\bm{H}\|_F}\bm{u}^{T} \bm{H}\bm{u} = \frac{\lambda}{\|\bm{H}\|_F}\), so \(\bm{P}(1)=\frac{1-\lambda/\|\bm{H}\|_F}{2}\) . Similarly we have \(\bm{P}(0)=\frac{1+\lambda/\|\bm{H}\|_F}{2}\).

Suppose that we need \(2x+1\) times of measurement to give an \(1-\delta\) correct statement about whether \(\lambda >0\)  or \(\lambda <0\). The problem can be viewed as the biased coin problem. Define random variables \(X_i\) such that \(\bm{P}(X_i=1)=p\) and \(\bm{P}(X_i=0)=1-p\) and \(S_n = \sum_{i=1}^{n} X_i\). Then there has the Hoeffding's inequality \(\bm{P} ( S_{n} / n-p \leq -\epsilon) \leq e^{-2 n \epsilon^{2}}\) and  \(\bm{P} ( S_{n} / n-p \geq \epsilon) \leq e^{-2 n \epsilon^{2}}\).

Back to the problem, suppose $\lambda<0$, by setting \(n=2x+1,\ n(p-\epsilon)=x\) and \(\bm{P}=\frac{1-\lambda/\|\bm{H}\|_F}{2}\), we have:
\begin{equation*}
P(S_{2x+1} \leq x) \leq \exp(-2(2x+1)[\frac{1-\lambda/\|\bm{H}\|_F}{2}-\frac{x}{2x+1}]^2) < e^{-\frac{2x+1}{2}\frac{\lambda^2}{\|\bm{H}\|_F^2}} \leq e^{-\frac{2x+1}{2}\frac{a^2}{\|\bm{H}\|_F^2}}. 
\end{equation*}

Similarly for $\lambda > 0$, there is $P(S_{2x+1} \geq x) \leq e^{-\frac{2x+1}{2}\frac{a^2}{\|\bm{H}\|_F^2}}$. 

Let \(e^{-\frac{2x+1}{2}\frac{a^2}{\|\bm{H}\|_F^2}} \leq\delta\)  , we have \(x \geq [\frac{\|\bm{H}\|_F^2}{a^2}\log{\frac{1}{\delta}}-\frac{1}{2}]+1\).
\end{proof}

\textbf{The proof of Lemma 2}:
\begin{proof}
Define the index function $g:[r]\rightarrow [d]$ such that $\bm{s}_i=\bm{h}_{g(i)}, \forall i \in [r]$.
Consider the eigen-decomposition of matrix $\bm{H}$:
\begin{equation}
\bm{H} = \sum_{j=1}^{r} \lambda_j \bm{u}_j \bm{u}_j^{T}.	
\end{equation}
It is natural to generate the decomposition:
\begin{equation}
\bm{h}_j = \sum_{i=1}^{r} \lambda_i \bm{u}_i u_i^{(j)},	
\end{equation}
\begin{equation}
h_{jk} = \sum_{i=1}^{r} \lambda_i u_i^{(j)} u_i^{(k)}.	
\end{equation}
Define the $r \times r$ dimensional matrix $\C=(\bm{h}_{g(1)}^{T},\bm{h}_{g(2)}^{T},\cdots, \bm{h}_{g(r)}^{T})^{T} (\bm{h}_{g(1)},\bm{h}_{g(2)},\cdots, \bm{h}_{g(r)})$. There is:
\begin{equation}
\{\bm{h}_{g(i)}\}_{i=1}^{r}\ is\ linear\ independent \Leftrightarrow det(\C) \neq 0.	
\end{equation}
Denote the $jk$-th element of $\C$ as $c_{jk}$. Since $c_{jk}=\bm{h}_j^{T} \bm{h}_k=\sum_{i=1}^{r} \lambda_i^2 u_i^{(j)}u_i^{(k)}$, there is:
\begin{align}
det(\C)&={
\left| \begin{array}{ccc}
\sum_{i=1}^{r}\lambda_i^2 u_i^{(g(1))}u_i^{(g(1))} & \cdots & \sum_{i=1}^{r}\lambda_i^2 u_i^{(g(1))}u_i^{(g(r))}\\
\vdots & \ddots & \vdots\\
\sum_{i=1}^{r}\lambda_i^2 u_i^{(g(r))}u_i^{(g(1))} & \cdots & \sum_{i=1}^{r}\lambda_i^2 u_i^{(g(r))}u_i^{(g(r))}
\end{array} 
\right|}\\
&=\sum_{i_1 = 1}^{r}\sum_{i_2 = 1}^{r}\cdots\sum_{i_r = 1}^{r}{
\left| \begin{array}{ccc}
\lambda_{i_1}^2 u_{i_1}^{(g(1))}u_{i_1}^{(g(1))} & \cdots & \lambda_{i_r}^2 u_{i_r}^{(g(1))}u_{i_r}^{(g(r))}\\
\vdots & \ddots & \vdots\\
\lambda_{i_1}^2 u_{i_1}^{(g(r))}u_{i_1}^{(g(1))} & \cdots & \lambda_{i_r}^2 u_{i_r}^{(g(r))}u_{i_r}^{(g(r))}
\end{array} 
\right|}\\
&=\sum_{i_1 = 1}^{r}\sum_{i_2 = 1}^{r}\cdots\sum_{i_r = 1}^{r}
(\prod_{j=1}^{r} \lambda_{i_j}^2)(\prod_{j=1}^{r} u_{i_j}^{(g(j))}){\left| \begin{array}{ccc}
u_{i_1}^{(g(1))} & \cdots & u_{i_r}^{(g(1))}\\
\vdots & \ddots & \vdots\\
u_{i_1}^{(g(r))} & \cdots & u_{i_r}^{(g(r))}
\end{array} 
\right|}.\label{independent_C}
\end{align}
On the other hand, construct the matrix $\bm{H}'$ whose $jk$-th element is ${h}_{jk}' = h_{g(j),g(k)}$. There is:
\begin{align}
det(\bm{H}')&={
\left| \begin{array}{ccc}
\sum_{i=1}^{r}\lambda_i u_i^{(g(1))}u_i^{(g(1))} & \cdots & \sum_{i=1}^{r}\lambda_i u_i^{(g(1))}u_i^{(g(r))}\\
\vdots & \ddots & \vdots\\
\sum_{i=1}^{r}\lambda_i u_i^{(g(r))}u_i^{(g(1))} & \cdots & \sum_{i=1}^{r}\lambda_i u_i^{(g(r))}u_i^{(g(r))}
\end{array} 
\right|}\\
&=\sum_{i_1 = 1}^{r}\sum_{i_2 = 1}^{r}\cdots\sum_{i_r = 1}^{r}{
\left| \begin{array}{ccc}
\lambda_{i_1} u_{i_1}^{(g(1))}u_{i_1}^{(g(1))} & \cdots & \lambda_{i_r} u_{i_r}^{(g(1))}u_{i_r}^{(g(r))}\\
\vdots & \ddots & \vdots\\
\lambda_{i_1} u_{i_1}^{(g(r))}u_{i_1}^{(g(1))} & \cdots & \lambda_{i_r} u_{i_r}^{(g(r))}u_{i_r}^{(g(r))}
\end{array} 
\right|}\\
&=\sum_{i_1 = 1}^{r}\sum_{i_2 = 1}^{r}\cdots\sum_{i_r = 1}^{r}
(\prod_{j=1}^{r} \lambda_{i_j})(\prod_{j=1}^{r} u_{i_j}^{(g(j))}){\left| \begin{array}{ccc}
u_{i_1}^{(g(1))} & \cdots & u_{i_r}^{(g(1))}\\
\vdots & \ddots & \vdots\\
u_{i_1}^{(g(r))} & \cdots & u_{i_r}^{(g(r))}
\end{array} 
\right|}.\label{independent_H'}
\end{align}
Note that the determinant in eq(\ref{independent_C}) and eq(\ref{independent_H'}) is non-zero only if $i_m \neq i_n$ for any different $m,n \in [r]$. Consider the summation of $i_j$ for all $j \in [r]$ over $\{1,2,\cdots,r\}$, there is:
\begin{equation}
det(\C)/\prod_{i=1}^r \lambda_i^2 = det(\bm{H}')/\prod_{i=1}^r \lambda_i	
\end{equation}
Thus the problem about whether group $\{\bm{h}_{g(i)}\}_{i=1}^{r}$ is linear independent could be solved by calculating the determinant of matrix $\bm{H}'$. Since $\bm{H}'$ is a $r \times r$ dimensional matrix, $det(\bm{H}')$ could be calculated in $\O(r^3)$ time\cite{schwarzenberg1995matrix}. We could claim that the group $\{\bm{h}_{g(i)}\}_{i=1}^{r}$ is linear independent if $det(\bm{H}')\neq 0$, or $\{\bm{h}_{g(i)}\}_{i=1}^{r}$ is linear dependent if $det(\bm{H}')=0$.
\end{proof}

\textbf{The proof of Theorem \ref{84y564}:}
\begin{proof}
For  $|\Delta c_{ij}| \leq \epsilon_1$ and $|\Delta b_j|\leq \epsilon_2$, there is:
\begin{equation*}
\|\Delta \C\|\leq r \epsilon_1 \ \text{and}\ \|\Delta \b\| \leq \sqrt{r}\epsilon_2.
\end{equation*}
The matrix norm $\|\cdot\|$ here denotes the largest eigenvalue of the matrix.
Note that elements of matrix $\C$ and vector $\b$ are overlap of states, which are bounded in $[-1,1]$, so similarly there is: 
\begin{equation*}
\|\C\|\leq r \ \text{and}\ \| \b\| \leq \sqrt{r}.
\end{equation*}
There is:
\begin{align*}
\|\Delta\bm{x}\|&= \|(\C+\Delta \C)^{-1}(\Delta\b-\Delta \C\cdot \C^{-1}\b)\|\\
&\leq \|\C^{-1}\| \cdot \| (I+\C^{-1}\D\C)^{-1}\|\cdot (\|\D\b\|+\|\D\C \cdot \C^{-1}\b\|) \\
& \leq \|\C^{-1}\|\cdot\frac{1}{1-\|\C^{-1}\D\C\|}\cdot(\|\D\b\|+\|\D\C\|\|\C^{-1}\|\|\b\|)\\
& \leq \frac{\|\C^{-1}\|}{1-\|\C^{-1}\| r \epsilon_1}\cdot(\sqrt{r}\epsilon_2+r^{3/2} \epsilon_1\|\C^{-1}\|) \leq \frac{\epsilon}{2\sqrt{r}}.
\end{align*}

Thus, for ${\bm{u}}_t=\sum_{j=1}^{r}{x}_j \bm{s}_j$ and $\tilde{\bm{u}}_t=\sum_{j=1}^{r}\tilde{x}_j \bm{s}_j$, there is:
\begin{equation*}
\|\bm{u}_t-\tilde{\bm{u}}_t\|=\sqrt{\D\bm{x}^{T}\C\D\bm{x}}\leq \|\D\bm{x}\|\cdot\|\C\|^{1/2}\leq \frac{\epsilon}{2\sqrt{r}} \cdot \sqrt{r}=\frac{\epsilon}{2}.
\end{equation*}
\end{proof}

\section{}

\subsection{The estimation of $c_{ij}=\<{s}_i|{s}_j\>$:}

The overlap $c_{ij}=\<\bm{s}_i|\bm{s}_j\>=\<\bm{h}_{g(i)}|\bm{h}_{g(j)}\>$ can be estimated by the Hadamard Test. We provide the detail in Algorithm \ref{c_ij}:

\begin{algorithm}[htb]
  \caption{$c_{ij}$ estimation}
  \label{c_ij}
  \begin{algorithmic}[1]
    \Require
     Quantum access to oracle $U_H$. The index number $i$ and $j$. The precision parameter $\epsilon$. The probability error bound $\delta$.
    \Ensure
    An estimation $\tilde{c}_{ij}$ to the value $c_{ij}=\<\bm{s}_i|\bm{s}_j\>$, such that $\tilde{c}_{ij} \in c_{ij} \pm \epsilon$ with probability at least $1-\delta$.
     \For{$k=1$ to $n=[\frac{2}{\epsilon^2}\log(\frac{2}{\delta})]+1$}
     \State Create state $[|\bm{s}_i\>|0\>+|\bm{s}_j\>|1\>]/\sqrt{2}$.
     \label{c_ij_init}
     \State Apply the Hadmard gate on the second register to obtain the state $\frac{|\bm{s}_i\>+|\bm{s}_j\>}{2}|0\>+\frac{|\bm{s}_i\>-|\bm{s}_j\>}{2}|1\>$.\label{c_ij_hadmard}
     \State Measure the second register and record the result.
     \label{c_ij_measure}
     \EndFor
     \State Count the number of resulting $0$ in step \ref{c_ij_measure} as $m$. Output $2m/n-1$ as the estimation to $c_{ij}$.
     \label{c_ij_output}
  \end{algorithmic}
\end{algorithm}

\begin{theorem}
\label{c_ij_theorem}
Algorithm~\ref{c_ij} present the $\epsilon$-estimation to the overlap $c_{ij}=\<\bm{s}_i|\bm{s}_j\>$ with probability at least $1-\delta$ with running time $\O(T_H\polylog(d)\epsilon^{-2}\log(1/\delta))$.
\end{theorem}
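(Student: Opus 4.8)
The plan is to recognize Algorithm~\ref{c_ij} as a Hadamard test whose single-shot measurement statistics linearly encode $c_{ij}$, and then to convert the empirical frequency into a concentrated estimator via Hoeffding's inequality. First I would track the state through Steps~\ref{c_ij_init}--\ref{c_ij_hadmard}: applying the Hadamard gate to the ancilla of $[|\bm{s}_i\>|0\>+|\bm{s}_j\>|1\>]/\sqrt{2}$ produces $\tfrac12(|\bm{s}_i\>+|\bm{s}_j\>)|0\>+\tfrac12(|\bm{s}_i\>-|\bm{s}_j\>)|1\>$, so the probability of reading $0$ in Step~\ref{c_ij_measure} is $P(0)=\big\|\tfrac12(|\bm{s}_i\>+|\bm{s}_j\>)\big\|^2=\tfrac14(2+2\<\bm{s}_i|\bm{s}_j\>)=\tfrac{1+c_{ij}}{2}$, where I use $\<\bm{s}_i|\bm{s}_i\>=\<\bm{s}_j|\bm{s}_j\>=1$ and the reality of the overlaps. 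Hence the quantity $2m/n-1$ returned in Step~\ref{c_ij_output} has expectation $2P(0)-1=c_{ij}$, i.e.\ it is an unbiased estimator of the target overlap.

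Next I would supply the concentration argument. Writing $X_k\in\{0,1\}$ for the indicator that the $k$-th measurement yields $0$, the $X_k$ are i.i.d.\ Bernoulli with mean $P(0)$ and $m=\sum_k X_k$, so $2m/n-1$ deviates from $c_{ij}$ by more than $\epsilon$ exactly when $|m/n-P(0)|>\epsilon/2$. Applying the two-sided Hoeffding inequality from Appendix~A with interval length $1$ gives $P(|m/n-P(0)|>\epsilon/2)\le 2\exp(-n\epsilon^2/2)$. Forcing the right-hand side to be at most $\delta$ yields the requirement $n\ge \tfrac{2}{\epsilon^2}\log\tfrac{2}{\delta}$, which is met by the loop bound $n=[\tfrac{2}{\epsilon^2}\log\tfrac{2}{\delta}]+1$; therefore $\tilde c_{ij}\in c_{ij}\pm\epsilon$ with probability at least $1-\delta$.

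Finally I would account for the running time. Each iteration prepares the two-branch state and applies a constant number of single-qubit gates: the superposition $[|\bm{s}_i\>|0\>+|\bm{s}_j\>|1\>]/\sqrt{2}$ is built by a Hadamard on the ancilla followed by two ancilla-controlled calls to $U_H$ that load the columns $\bm{h}_{g(i)},\bm{h}_{g(j)}$, each costing $\O(T_H)$, and a final Hadamard and measurement. Thus one iteration costs $\O(T_H\polylog(d))$, and the $n=\O(\epsilon^{-2}\log(1/\delta))$ repetitions give the claimed $\O(T_H\polylog(d)\epsilon^{-2}\log(1/\delta))$.

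The only genuinely delicate step is the state preparation, and it is where I would be most careful. Since $U_H$ produces $|g(i)\>|\bm{s}_i\>$ rather than the bare $|\bm{s}_i\>$, and $g(i)\neq g(j)$, the attached index labels would otherwise make the two branches orthogonal in the index register and force the inner product $\<\bm{s}_i|\bm{s}_j\>$ to collapse to $0$. I resolve this by uncomputing the index register using the classically known values $g(i),g(j)$ (a fixed pattern of controlled bit-flips) before the Hadamard test, so that both branches share one common $d$-dimensional register carrying only $|\bm{s}_i\>$ and $|\bm{s}_j\>$; this is what makes $P(0)$ depend on $c_{ij}$ as computed above. Everything else reduces to a routine biased-coin analysis.
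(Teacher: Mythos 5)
Your proof is correct, and its core engine --- reading off $P(0)=\tfrac{1+c_{ij}}{2}$ from the Hadamard test and then applying the two-sided Hoeffding bound with deviation $\epsilon/2$ to justify the loop count $n=[\tfrac{2}{\epsilon^2}\log\tfrac{2}{\delta}]+1$ --- is exactly the argument the paper leaves implicit in Algorithm~\ref{c_ij} together with the Hoeffding lemma of Appendix~A. The one place you genuinely diverge is the state preparation, and you correctly flag it as the delicate step: a naive preparation that loads $|g(i)\>$ or $|g(j)\>$ into a single index register controlled on the ancilla leaves the branches orthogonal in that register and kills the overlap. Your fix (entangle, then uncompute the index with classically controlled bit-flips since $g(i),g(j)$ are known) is valid and costs $\O(T_H)$ per shot. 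The paper sidesteps the issue differently: it starts from $|g(i)\>|g(j)\>|0\>|0\>$ with \emph{two} fixed index registers, and the ancilla-controlled oracles $U_H\otimes|0\>\<0|$ and $U_H\otimes|1\>\<1|$ read from the respective index registers while writing into a shared third register, so the indices never become entangled with the ancilla and can simply be discarded. Both constructions yield the same state $\frac{1}{\sqrt2}\bigl(|\bm{h}_{g(i)}\>|0\>+|\bm{h}_{g(j)}\>|1\>\bigr)$ at the same asymptotic cost; the paper's avoids uncomputation at the price of one extra register, while yours is closer to the textbook controlled-state-preparation idiom.
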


We generate state $\frac{|\bm{s}_i\>|0\>+|\bm{s}_j\>|1\>}{\sqrt{2}}=\frac{|\bm{h}_{g(i)}\>|0\>+|\bm{h}_{g(j)}\>|1\>}{\sqrt{2}}$  in step \ref{c_ij_init} by performing the following procedure on state $|g(i)\>|g(j)\>|0\>|0\>$:
\begin{align}
	|g(i)\>|g(j)\>|0\>|0\> 
	\stackrel{H}{\longrightarrow} 
	&|g(i)\>|g(j)\>|0\>\frac{|0\>+|1\>}{\sqrt{2}}\label{b_i_hadmard}\\
	\xrightarrow{U_H \otimes |0\>\<0|}
	&|g(i)\>|g(j)\>\frac{|\bm{h}_{g(i)}\>|0\>+|0\>|1\>}{\sqrt{2}}\label{b_i_gi}\\
	\xrightarrow{U_H \otimes |1\>\<1|}
	&|g(i)\>|g(j)\>\frac{|\bm{h}_{g(i)}\>|0\>+|\bm{h}_{g(j)}\>|1\>}{\sqrt{2}}\label{b_i_gj}\\
	\xrightarrow{\text{trace out the first two registers}}
	&\frac{|\bm{h}_{g(i)}\>|0\>+|\bm{h}_{g(j)}\>|1\>}{\sqrt{2}}.\label{b_i_trace}
\end{align}

The Hadmard gate in (\ref{b_i_hadmard}) acts on the 4-th register. The gate $U_H \otimes |0\>\<0|$ in (\ref{b_i_gi}) acts on the 1-st, 3-rd and 4-th registers. The gate $U_H \otimes |1\>\<1|$ in (\ref{b_i_gj}) acts on the 2-nd, 3-rd and 4-th registers.

\subsection{The estimation of $b_{i}=\<u_t|{s}_i\>$:}
The estimation to the overlap $b_i=\<\bm{u}_t|\bm{s}_i\>$ is more complicated. Technics like Algorithm~\ref{c_ij} is infeasible, due to the post-selection method for generating target state $|\bm{u}_t\>$. Here we introduce another standard quantum algorithm named as Quantum Swap Test\cite{buhrman2001quantum}, which could estimate the square overlap between two quantum states $|\phi\>$ and $|\psi\>$. The circuit of the Quantum Swap Test is illustrated in Figure \ref{swap_test}.

\begin{figure}[htb]
  \centering
  \includegraphics[width=.5\textwidth]{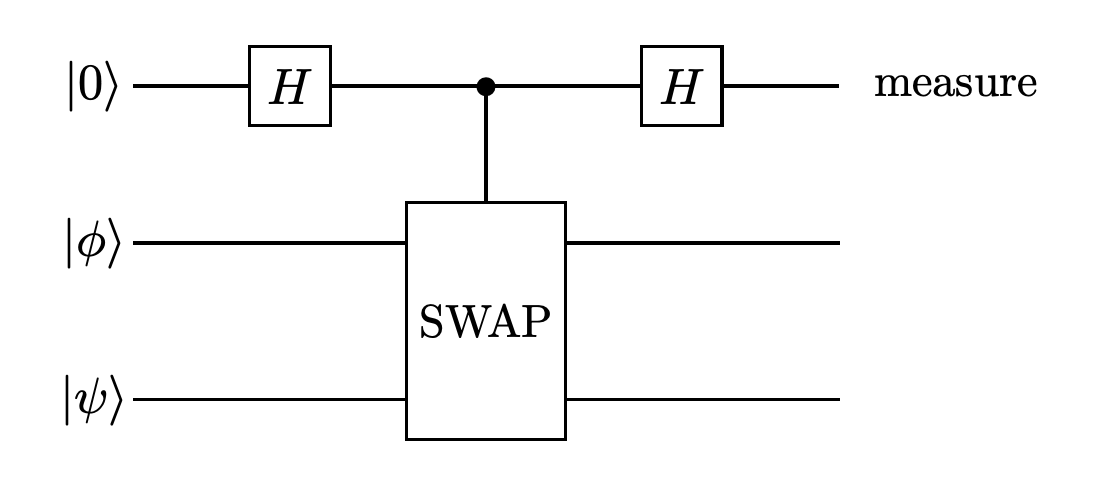}
  \caption{Circuit of the Quantum Swap Test}
  \label{swap_test}
\end{figure}

As shown in Figure \ref{swap_test}, Quantum Swap Test performs the operation:
\begin{equation}
|0\>|\phi\>|\psi\> \rightarrow (H \otimes I)(|0\>\<0|\otimes I + |1\>\<1|\otimes U_{SWAP})(H \otimes I)|0\>|\phi\>|\psi\>.
\end{equation}
The final state could be written as:
\begin{equation}
\frac{1}{2}|0\>(|\phi\>|\psi\>+|\psi\>|\phi\>)+\frac{1}{2}|1\>(|\phi\>|\psi\>-|\psi\>|\phi\>).
\end{equation}
The $U_{SWAP}$ gate could be implemented in time $\O(\polylog(d))$, which performs the swap transformation $|\phi\>|\psi\> \rightarrow |\psi\>|\phi\>$ for $d$-dimensional state $|\phi\>$ and $|\psi\>$. The measurement on the first qubit produces $0$ with probability $P_0 = \frac{1}{2}(1+|\<\phi|\psi\>|^2)$. Thus, by replacing step \ref{c_ij_init}-\ref{c_ij_hadmard} in Algortihm~\ref{c_ij} with the Quantum Swap Test operation, we could build an algorithm to estimate the square of the state overlap.

\begin{theorem}
\label{QST_theorem}
There exists a quantum algorithm which could present $\epsilon$-estimation to value $b_i^2=|\<\bm{u}_t|\bm{s}_i\>|^2$ with probability at least $1-\delta$ in running time $\O(T_{Input}\epsilon^{-2}\log(1/\delta))$, where $T_{Input}$ is the time complexity to generate states $|\bm{u}_t\>$ and $|\bm{s}_i\>$.
\end{theorem}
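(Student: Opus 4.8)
The plan is to reduce the estimation of $b_i^2=|\<\bm{u}_t|\bm{s}_i\>|^2$ to a Bernoulli mean-estimation problem, exactly mirroring the argument behind Theorem~\ref{c_ij_theorem}. First I would instantiate the Quantum Swap Test with $|\phi\>=|\bm{u}_t\>$ and $|\psi\>=|\bm{s}_i\>$. The key identity, already recorded above, is that the measurement on the control qubit returns $0$ with probability $P_0=\frac{1}{2}(1+|\<\bm{u}_t|\bm{s}_i\>|^2)=\frac{1}{2}(1+b_i^2)$. This single relation is the entire bridge connecting the unknown quantity $b_i^2$ to an observable outcome statistic, and it holds regardless of how $|\bm{u}_t\>$ was prepared — in particular it is robust to the post-selection issue that made the Hadamard-test approach of Algorithm~\ref{c_ij} unavailable for the overlap $b_i$.

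Next I would run the test $n$ times independently and record, for the $k$-th run, the indicator $X_k\in\{0,1\}$ that equals $1$ when the control measurement yields $0$. The $X_k$ are i.i.d.\ Bernoulli with mean $P_0$, so the empirical average $\overline{X}=\frac{1}{n}\sum_{k=1}^{n}X_k$ estimates $P_0$, and I would output the estimator $\widehat{b_i^2}=2\overline{X}-1$, whose expectation is exactly $b_i^2$. Its error satisfies $|\widehat{b_i^2}-b_i^2|=2|\overline{X}-P_0|$, so producing an $\epsilon$-estimate of $b_i^2$ is equivalent to producing an $(\epsilon/2)$-estimate of $P_0$.

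Then I would apply Hoeffding's inequality (recorded in the appendix) to the bounded variables $X_k\in[0,1]$, for which $\sum_k(b_k-a_k)^2=n$, yielding
\begin{equation*}
P\!\left(|\overline{X}-P_0|\ge \tfrac{\epsilon}{2}\right)\le 2\exp\!\left(-\tfrac{n\epsilon^2}{2}\right).
\end{equation*}
Setting the right-hand side to at most $\delta$ gives the sample bound $n\ge \frac{2}{\epsilon^2}\log\frac{2}{\delta}$, matching the choice used in Algorithm~\ref{c_ij}. Finally I would account for the per-run cost: generating $|\bm{u}_t\>$ and $|\bm{s}_i\>$ costs $\O(T_{Input})$, while the Hadamard gates and the $U_{SWAP}$ operation add only $\O(\polylog(d))$, which is absorbed into $T_{Input}$. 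Multiplying the per-run cost by $n$ gives the claimed running time $\O(T_{Input}\epsilon^{-2}\log(1/\delta))$.

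I do not expect a genuine obstacle here: the result is a standard concentration argument layered on the swap-test probability formula. The only points demanding care are bookkeeping ones — correctly propagating the factor-of-$2$ rescaling between estimating $P_0$ and estimating $b_i^2$ (which halves the effective precision and thereby fixes the constant in the $\epsilon^{-2}$ term), and confirming that the swap-test circuit cost is dominated by the state-preparation cost $T_{Input}$ so that it does not surface as a separate additive term in the final complexity.
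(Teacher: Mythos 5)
Your proposal is correct and is essentially the paper's own argument: the paper proves this theorem by replacing the state-preparation steps of Algorithm~\ref{c_ij} with the Quantum Swap Test, whose control-qubit statistic $P_0=\frac{1}{2}(1+|\<\bm{u}_t|\bm{s}_i\>|^2)$ is then estimated by repeated sampling with the same Hoeffding-based sample count $n=\O(\epsilon^{-2}\log(1/\delta))$ used there. Your write-up simply makes explicit the estimator $2\overline{X}-1$, the factor-of-two precision bookkeeping, and the absorption of the $\O(\polylog(d))$ swap-circuit cost into $T_{Input}$, all of which are consistent with the paper.
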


In order to estimate values $b_i=\<\bm{u}_t|\bm{s}_i\>$ for $i\in[r]$, we need to discriminate the positive and negative of $b_i$. 
Note that for state $|\bm{u}_t\>$, the state $|-\bm{u}_t\>$ is also a target state which shares the same eigenvalue. So both states $|\bm{u}_t\>$ and $|-\bm{u}_t\>$ are legal outputs and are indistinguishable for our algorithm in Section~\ref{QNCD_state_finding}. Thus we analysis the value $b_i=sgn({u}_t^{(k)})\<\bm{u}_t|\bm{s}_i\>$ as the overlap between states $|\bm{u}_t\>$ and $|\bm{s}_i\>$, where $\bm{u}_t^{(k)}$ is the $k$-th component of vector $\bm{u}_t$. Generally $k$ could be any index such that the corresponding component is none-zero. Here we choose the index such that the square overlap $|\<\bm{u}_t|\bm{h}_{i}\>|^2$ is the largest for all $i \in [d]$:
\begin{equation}
k=\mathop{\arg\max}_{i} |\<\bm{u}_t|\bm{h}_{i}\>|^2.
\end{equation}

The $\max_{i \in [r]}|\<\bm{u}_t|\bm{h}_{i}\>|^2$ has the lower bound:
\begin{equation}
\max_{i \in [d]}|\<\bm{u}_t|\bm{h}_{i}\>|^2=
\max_{i \in [d]} \frac{(\bm{u}_t^{T}\bm{h}_{i})^2}{\|\bm{h}_{i}\|^2} =\max_{i \in [d]} \frac{\lambda_t^2 {\bm{u}_t^{(i)}}^2}{\|\bm{h}_{i}\|^2} \geq \frac{\sum_{i=1}^d \lambda_t^2 {\bm{u}_t^{(i)}}^2}{\sum_{i=1}^d \|\bm{h}_i\|^2}=\frac{\lambda_t^2}{\|\bm{H}\|_F^2}.
\end{equation}

Note that $b_i=sgn({u}_t^{(k)})\<\bm{u}_t|\bm{s}_i\>=-sgn(\<\bm{u}_t|\bm{h}_k\>)\<\bm{u}_t|\bm{s}_i\>$. Define two states $|\psi_+\>=\frac{1}{Z_{+}}(|\bm{h}_k\>+|\bm{h}_{g(i)}\>)$ and $|\psi_-\>=\frac{1}{Z_{-}}(|\bm{h}_k\>-|\bm{h}_{g(i)}\>)$, where $Z_{\pm}$ are normalized constants such that $Z_{\pm}^2=2 \pm 2\<\bm{h}_k|\bm{h}_{g(i)}\>$. Then there is:
\begin{align}
& |\<\bm{u}_t|\psi_+\>|^2=\frac{1}{Z_+^2}\Big[\<\bm{u}_t|\bm{h}_k\>^2 + \<\bm{u}_t|\bm{h}_{g(i)}\>^2 + 2\<\bm{u}_t|\bm{h}_k\>\<\bm{u}_t|\bm{h}_{g(i)}\> \Big],\\
& |\<\bm{u}_t|\psi_-\>|^2=\frac{1}{Z_-^2}\Big[\<\bm{u}_t|\bm{h}_k\>^2 + \<\bm{u}_t|\bm{h}_{g(i)}\>^2 - 2\<\bm{u}_t|\bm{h}_k\>\<\bm{u}_t|\bm{h}_{g(i)}\> \Big].
\end{align}

States $|\psi_+\>$ and $|\psi_-\>$ could be generated by step~\ref{c_ij_init}-\ref{c_ij_measure} in Algorithm \ref{c_ij}. 
The overlap $\<\bm{h}_k|\bm{h}_{g(i)}\>$ could be estimated by Algorithm \ref{c_ij}. The square overlap $|\<\bm{u}_t|\psi_+\>|^2$ and $|\<\bm{u}_t|\psi_-\>|^2$ could be estimated by Quantum Swap Test. 
Thus for $|\<\bm{u}_t|\bm{h}_{g(i)}\>|>\epsilon'$, one could discriminate the positive and negative of $\<\bm{u}_t|\bm{h}_k\>\<\bm{u}_t|\bm{h}_{g(i)}\>$ by calculate the value $Z_+^2 |\<\bm{u}_t|\psi_+\>|^2-Z_-^2 |\<\bm{u}_t|\psi_-\>|^2$. 
The estimation on the square overlap $|\<\bm{u}_t|\psi_+\>|^2$ and $|\<\bm{u}_t|\psi_-\>|^2$ need to have the precision $\epsilon'|\lambda_t|/(2\|\bm{H}\|_F)$, which takes time $\O(T_{Input}\|\bm{H}\|_F^2\epsilon'^{-2})$. For $|\<\bm{u}_t|\bm{h}_{g(i)}\>| \le \epsilon'$, $0$ is an $\epsilon'$ estimation to $\<\bm{u}_t|\bm{h}_{g(i)}\>$. Since an $\epsilon'$-estimation to $|b_i|$ could be achieved by an $\epsilon'^2$-estimation to $b_i^2$ which takes time $\O(T_{Input}\epsilon'^{-4})$, we could derive the time complexity of estimating $b_i$ in Theorem \ref{b_i_theorem}.

\begin{theorem}
\label{b_i_theorem}
There exists a quantum algorithm which could present $\epsilon$-estimation to value $b_i=sgn(u_t^{(k)})\<\bm{u}_t|\bm{s}_i\>$ with probability at least $1-\delta$ in running time $\O(T_{Input}\polylog(d)(\epsilon^{-4}+\epsilon^{-2}\|\bm{H}\|_F^2)\log(1/\delta))$, where $T_{Input}$ is the time complexity to generate states $|\bm{u}_t\>$ and $|\bm{s}_i\>$.
\end{theorem}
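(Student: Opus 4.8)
The plan is to estimate the magnitude $|b_i|$ and the sign of $b_i$ separately and then combine them. The sign-free part is immediate: since $b_i^2=|\<\bm{u}_t|\bm{s}_i\>|^2$, the Quantum Swap Test of Theorem \ref{QST_theorem} directly estimates $b_i^2$, and an $\epsilon$-accurate estimate of $|b_i|$ follows from an $\epsilon^2$-accurate estimate of $b_i^2$. By Theorem \ref{QST_theorem} this costs $\O(T_{Input}\polylog(d)\epsilon^{-4}\log(1/\delta))$, which accounts for the $\epsilon^{-4}$ term in the claimed bound.

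The genuinely new ingredient is recovering the sign, which the Swap Test alone cannot see. Here I would exploit the reference column $\bm{h}_k$ with $k=\arg\max_i|\<\bm{u}_t|\bm{h}_i\>|^2$, whose overlap obeys the lower bound $|\<\bm{u}_t|\bm{h}_k\>|\ge|\lambda_t|/\|\bm{H}\|_F$ established just above. Forming the interference states $|\psi_\pm\>=\frac{1}{Z_\pm}(|\bm{h}_k\>\pm|\bm{h}_{g(i)}\>)$, with $Z_\pm^2=2\pm2\<\bm{h}_k|\bm{h}_{g(i)}\>$ estimated by Algorithm \ref{c_ij}, and measuring their square overlaps with $|\bm{u}_t\>$ yields after clearing normalisers
\begin{equation*}
Z_+^2|\<\bm{u}_t|\psi_+\>|^2 - Z_-^2|\<\bm{u}_t|\psi_-\>|^2 = 4\<\bm{u}_t|\bm{h}_k\>\<\bm{u}_t|\bm{h}_{g(i)}\>,
\end{equation*}
so the sign of the left-hand difference equals the sign of the product $\<\bm{u}_t|\bm{h}_k\>\<\bm{u}_t|\bm{h}_{g(i)}\>$. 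Since $\<\bm{u}_t|\bm{h}_k\>=\lambda_t u_t^{(k)}/\|\bm{h}_k\|$ with $\lambda_t<0$, this gives exactly $\mathrm{sgn}(u_t^{(k)})\<\bm{u}_t|\bm{s}_i\>=-\mathrm{sgn}(\<\bm{u}_t|\bm{h}_k\>)\<\bm{u}_t|\bm{s}_i\>$, the desired sign; both square overlaps are estimated once more by the Swap Test.

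For the error budget I would split on the size of $|\<\bm{u}_t|\bm{h}_{g(i)}\>|$ against a threshold $\epsilon'$. When $|\<\bm{u}_t|\bm{h}_{g(i)}\>|\le\epsilon'$ the value $b_i$ is itself within $\epsilon'$ of zero, so reporting $0$ is already an $\epsilon'$-estimate and no sign is needed. When $|\<\bm{u}_t|\bm{h}_{g(i)}\>|>\epsilon'$ the product on the right has magnitude at least $\epsilon'|\lambda_t|/\|\bm{H}\|_F$, so estimating each square overlap to precision $\epsilon'|\lambda_t|/(2\|\bm{H}\|_F)$ reliably fixes its sign; by Theorem \ref{QST_theorem} this costs $\O(T_{Input}\polylog(d)\|\bm{H}\|_F^2\epsilon'^{-2}\log(1/\delta))$, producing the $\epsilon^{-2}\|\bm{H}\|_F^2$ term. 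Setting $\epsilon'=\epsilon$ and adding the two contributions, with a union bound over the constant number of Swap Test invocations supplying the $\log(1/\delta)$ factor, gives the stated complexity.

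The main obstacle is the precision calibration in the sign step: reliability of the discrimination hinges on the product $\<\bm{u}_t|\bm{h}_k\>\<\bm{u}_t|\bm{h}_{g(i)}\>$ being bounded away from zero, which relies on the lower bound $|\<\bm{u}_t|\bm{h}_k\>|\ge|\lambda_t|/\|\bm{H}\|_F$ for the reference column together with the threshold $\epsilon'$ lower-bounding the second factor. Making the $\|\bm{H}\|_F$ and $\lambda_t$ dependence line up so that the sign test matches the advertised $\epsilon^{-2}\|\bm{H}\|_F^2$ scaling, while confirming that the threshold case $b_i\approx 0$ contributes no error beyond $\epsilon$, is where the care is required.
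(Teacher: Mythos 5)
Your proposal is correct and follows essentially the same route as the paper's own proof: magnitude of $b_i$ from an $\epsilon^2$-accurate Swap Test estimate of $b_i^2$ (the $\epsilon^{-4}$ term), sign recovery via the interference states $|\psi_\pm\>$ built from the reference column $\bm{h}_k$ with the lower bound $|\<\bm{u}_t|\bm{h}_k\>|\ge|\lambda_t|/\|\bm{H}\|_F$ (the $\epsilon^{-2}\|\bm{H}\|_F^2$ term), and the same $\epsilon'$-threshold split that outputs $0$ when $|\<\bm{u}_t|\bm{h}_{g(i)}\>|\le\epsilon'$. The precision calibration $\epsilon'|\lambda_t|/(2\|\bm{H}\|_F)$ for the sign test and the final assembly with $\epsilon'=\epsilon$ match the paper exactly.
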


\end{appendices}

\end{document}